\numberwithin{equation}{section}
\theoremstyle{plain}
\newtheorem{theorem}{Theorem}[section]
\newtheorem{proposition}[theorem]{Proposition}
\newtheorem{corollary}[theorem]{Corollary}
\theoremstyle{definition}
\newtheorem{assumption}[theorem]{Assumption}
\newtheorem{remark}[theorem]{Remark}
\providecommand{\customgenericname}{}
\newcommand{\newcustomtheorem}[2]{%
	\newenvironment{#1}[1]
	{%
		\renewcommand\customgenericname{#2}%
		\renewcommand\theinnercustomgeneric{##1}%
		\innercustomgeneric
	}
	{\endinnercustomgeneric}
}
\newcommand{\C}{\mathcal{C}}
\newcommand{\F}{\mathcal{F}}
\renewcommand{\H}{\mathcal{H}}
\renewcommand{\L}{\mathcal{L}}
\newcommand{\N}{\mathbb{N}}
\newcommand{\PP}{\mathcal{P}}
\newcommand{\R}{\mathbb{R}}
\newcommand{\X}{\mathcal{X}}
\newcommand{\W}{\mathbb{W}}
\renewcommand{\d}{\partial}
\newcommand{\norm}[1]{\left\lVert#1\right\rVert} 
\DeclareMathOperator*{\argmin}{arg\, min}
\DeclareMathOperator{\Law}{Law}
\def\E{\hskip.15ex\mathsf{E}\hskip.10ex}
\def\P{\mathsf{P}}
\newcommand{\wt}{\widetilde}
\newcommand{\wh}{\widehat}
\renewcommand{\phi}{\varphi}
\newcommand{\eps}{\varepsilon}
\newcommand{\nn}{\nonumber}
\let\save@mathaccent\mathaccent
\newcommand*\if@single[3]{%
	\setbox0\hbox{${\mathaccent"0362{#1}}^H$}%
	\setbox2\hbox{${\mathaccent"0362{\kern0pt#1}}^H$}%
	\ifdim\ht0=\ht2 #3\else #2\fi
}
\newcommand*\rel@kern[1]{\kern#1\dimexpr\macc@kerna}
\newcommand*\widebar[1]{\@ifnextchar^{{\wide@bar{#1}{0}}}{\wide@bar{#1}{1}}}
\newcommand*\wide@bar[2]{\if@single{#1}{\wide@bar@{#1}{#2}{1}}{\wide@bar@{#1}{#2}{2}}}
\newcommand*\wide@bar@[3]{%
	\begingroup
	\def\mathaccent##1##2{%
		\let\mathaccent\save@mathaccent
		\if#32 \let\macc@nucleus\first@char \fi
		\setbox\z@\hbox{$\macc@style{\macc@nucleus}_{}$}%
		\setbox\tw@\hbox{$\macc@style{\macc@nucleus}{}_{}$}%
		\dimen@\wd\tw@
		\advance\dimen@-\wd\z@
		\divide\dimen@ 3
		\@tempdima\wd\tw@
		\advance\@tempdima-\scriptspace
		\divide\@tempdima 10
		\advance\dimen@-\@tempdima
		\ifdim\dimen@>\z@ \dimen@0pt\fi
		\rel@kern{0.6}\kern-\dimen@
		\if#31
		\overline{\rel@kern{-0.6}\kern\dimen@\macc@nucleus\rel@kern{0.4}\kern\dimen@}%
		\advance\dimen@0.4\dimexpr\macc@kerna
		\let\final@kern#2%
		\ifdim\dimen@<\z@ \let\final@kern1\fi
		\if\final@kern1 \kern-\dimen@\fi
		\else
		\overline{\rel@kern{-0.6}\kern\dimen@#1}%
		\fi
	}%
	\macc@depth\@ne
	\let\math@bgroup\@empty \let\math@egroup\macc@set@skewchar
	\mathsurround\z@ \frozen@everymath{\mathgroup\macc@group\relax}%
	\macc@set@skewchar\relax
	\let\mathaccentV\macc@nested@a
	\if#31
	\macc@nested@a\relax111{#1}%
	\else
	\def\gobble@till@marker##1\endmarker{}%
	\futurelet\first@char\gobble@till@marker#1\endmarker
	\ifcat\noexpand\first@char A\else
	\def\first@char{}%
	\fi
	\macc@nested@a\relax111{\first@char}%
	\fi
	\endgroup
}
\begin{document}

\title{A Reproducing Kernel Hilbert Space approach to singular local stochastic volatility McKean-Vlasov models}

\author[1]{Christian Bayer\thanks{\texttt{christian.bayer@wias-berlin.de}}}
\author[2]{Denis Belomestny\thanks{\texttt{denis.belomestny@uni-due.de}}}
\author[1]{Oleg Butkovsky\thanks{\texttt{oleg.butkovskiy@gmail.com}}}
\author[1]{John~Schoenmakers\thanks{\texttt{john.schoenmakers@wias-berlin.de}}}

\affil[1]{Weierstrass Institute, Mohrenstrasse 39, 10117 Berlin, Germany.}
\affil[2]{Duisburg-Essen University, Essen}

\date{January 12, 2024}
\maketitle

\begin{abstract}
Motivated by the challenges related to the calibration of financial models, we consider the problem of numerically solving a singular McKean-Vlasov equation
$$
d X_t= \sigma(t,X_t) X_t \frac{\sqrt v_t}{\sqrt {\E[v_t|X_t]}}dW_t,
$$
where $W$ is a Brownian motion and $v$ is an adapted diffusion process. This equation can be considered as a singular local stochastic volatility model.
Whilst such models are quite
popular among practitioners, unfortunately, its well-posedness has not been fully understood yet and, in general, is possibly not guaranteed at all.
We develop a novel regularization approach based on the reproducing kernel Hilbert space (RKHS) technique and show that  the regularized  model is well-posed.  Furthermore, we prove propagation of chaos. We demonstrate numerically  that a thus regularized model is able to perfectly replicate option prices due to typical local volatility models. Our results are also applicable to more general McKean--Vlasov equations.
\end{abstract}

\section{Introduction}
\label{sec:introduction}

The present article is motivated by \cite{G-HL}, wherein Guyon and Henry-Labord\`ere proposed a particle method for the calibration of local stochastic volatility  models (e.g. stock price models). For ease of presentation,
let us assume zero interest rates and recall that \emph{local volatility models}
\begin{equation}
	dX_t = \sigma(t,X_t) X_t dW_t \label{dup},
\end{equation}
where $W$ denotes a one-dimensional Brownian motion under a risk-neutral measure and $X$ the price of a stock, can replicate any sufficiently regular implied volatility surface, provided that we choose the local volatility according to \emph{Dupire's formula}, symbolically, $\sigma \equiv \sigma_{\text{Dup}}$ \cite{Dup}. (In case of deterministic nonzero interest rates the discussion below
remains virtually unchanged after passing to forward stock and option prices). Unfortunately, it is well understood that Dupire's model exhibits unrealistic random price behavior despite perfect fits to market prices of options. On the other hand, \emph{stochastic volatility models}
\begin{equation*}
	dX_t = \sqrt{v_t} X_t dW_t
\end{equation*}
for a suitably chosen stochastic variance process $v_t$, may lead to realistic (in particular, time-homogeneous) dynamics, but are typically difficult or impossible to fit to observed implied volatility surfaces. We refer to \cite{gatheral2011volatility} for an overview of stochastic and local volatility models.
Local stochastic volatility models can combine the advantages of both local and stochastic volatility models. Indeed, if the stock price is given by
\begin{equation*}
	dX_t = \sqrt{v_t} \sigma(t,X_t) X_t dW_t,
\end{equation*}
then it exactly fits the observed market option prices provided that
\begin{equation*}
	\sigma_{\text{Dup}}(t,x)^2 = \sigma(t,x)^2 \E\left[ v_t | X_t = x \right].
\end{equation*}
This is a simple consequence of the celebrated Gy\"ongy's Markovian projection theorem \cite[Theorem~4.6]{Gyo},  see also \cite[Corollary~3.7]{BS13}. With this choice of $\sigma$ we have
\begin{equation}
	\label{eq:stochastic-local-volatility-upd}
	dX_t = \sigma_{\text{Dup}}(t,X_t) X_t\frac{\sqrt{v_t} }{\sqrt{\E\left[ v_t | X_t \right]}} dW_t,
\end{equation}
Note that  $v$ in \eqref{eq:stochastic-local-volatility-upd} can be any integrable  and positive adapted stochastic process. In a sense, (\ref{eq:stochastic-local-volatility-upd}) may be considered as an inversion of the Markovian projection due to \cite{Gyo}, applied to Dupire's local volatility model, i.e. (\ref{dup}) with $\sigma \equiv \sigma_{\text{Dup}}$.
\par
Thus, the stochastic local volatility model of McKean--Vlasov type \eqref{eq:stochastic-local-volatility-upd} solves the smile calibration problem. However, equation   \eqref{eq:stochastic-local-volatility-upd} is singular in a sense explained below and very hard to analyze and to solve. Even the problem of proving existence or uniqueness for \eqref{eq:stochastic-local-volatility-upd} (under various assumptions on $v$) turned out to be notoriously difficult and only a few results are available; we refer to \cite{SHKOL} for an extensive discussion and literature review. Let us recall that the theory of standard McKean--Vlasov equations of the form
\begin{equation*}
	d Z_t = \widetilde{H}\left(t, Z_t, \mu_t \right) d t + \widetilde{F}\left(t, Z_t, \mu_t \right) d W_t
\end{equation*}
with $\mu_t = \mathrm{Law}(Z_t)$, is well understood under appropriate regularity conditions, in particular, Lipschitz continuity of $\wt H$ and $\wt F$ w.r.t.~the standard Euclidean distances in the first two arguments and w.r.t.~the Wasserstein distance in~$\mu_t$, see \cite{funaki1984certain,CaDe1,MV16}. Denoting $Z_t \coloneqq (X_t, Y_t)$, it is not difficult to see that the conditional expectation $(x, \mu_t) \mapsto \E\left[ A(Y_t) \mid X_t = x \right]$ is, in general, not Lipschitz continuous in the above sense. Therefore, the standard theory does not apply to 
\eqref{eq:stochastic-local-volatility-upd}.

There are a number of results available in the literature where the Lipschitz condition on drift and diffusion is not imposed.
Bossy and Jabir \cite{bossy2017wellposedness} considered singular McKean-Vlasov (MV) systems of the form:
\begin{align}
	dX_t &= \E[\ell(X_t) | Y_t] dt + \E[\gamma(X_t) | Y_t] dW_t,\label{eq:bossy-jabir-1-b}\\
	dY_t &= b(X_t, Y_t) dt + \sigma(Y_t) dB_t\label{eq:bossy-jabir-1-bb},
\end{align}
or, alternatively, the seemingly even less regular equation
\begin{equation}
	\label{eq:bossy-jabir-2}
	dX_t = \sigma(p(t, X_t)) dW_t,
\end{equation}
where $p(t, \cdot)$ denotes the density of $X_t$. \cite{bossy2017wellposedness} establishes well-posedness of~\eqref{eq:bossy-jabir-1-b}--\eqref{eq:bossy-jabir-1-bb} and~\eqref{eq:bossy-jabir-2} under suitable regularity conditions (in particular, ellipticity) based on energy estimates of the corresponding non-linear PDEs. Interestingly, these techniques break down when the roles of $X$ and $Y$ are reversed in~\eqref{eq:bossy-jabir-1-b}--\eqref{eq:bossy-jabir-1-bb}, that is, when $\E[\gamma(X_t) | Y_t]$ is replaced by $\E[\gamma(Y_t) | X_t]$ in~\eqref{eq:bossy-jabir-1-b} -- and similarly for the drift term. Hence, the results of \cite{bossy2017wellposedness} do not imply well-posedness of \eqref{eq:stochastic-local-volatility-upd}.
In \cite{SHKOL}, Lacker, Shkolnikov, and Zhang  studied the following two-dimensional SDE,
\begin{align} dX_t&=b_1(X_t)\frac{h(Y_t)}{\E[h(Y_t)|X_t]}\,dt+\sigma_1(X_t)\frac{f(Y_t)}{\sqrt{\E[f^2(Y_t)|X_t]}}\, dW_t,\label{eq1Shkol}
	\\
	dY_t&=b_2(Y_t)\,dt+\sigma_2(Y_t)\,dB_t,\label{eq2Shkol}
\end{align}
where $W$ and $B$ are two independent one-dimensional Brownian motions. Clearly, this can be seen as a generalization of \eqref{eq:stochastic-local-volatility-upd} with a non-zero drift and with the process $v$ chosen in a special way.
The authors proved  strong existence and uniqueness of the solutions to~\eqref{eq1Shkol}--\eqref{eq2Shkol} in the \emph{stationary} case. In particular, this implies strong conditions on $b_1$ and $b_2$, but also requires the initial value $(X_0,Y_0)$ to be random and to have the stationary distribution. Existence and uniqueness of \eqref{eq1Shkol}--\eqref{eq2Shkol} in the general case (without the stationarity assumptions) remains open. Finally, let us mention the result of Jourdain and Zhou \cite[Theorem~2.2]{jourdain2020existence}, which established weak existence of the solutions to \eqref{eq:stochastic-local-volatility-upd} for the case when $v$ is a jump process taking finitely many values.
\par
Another question apart from well-posedness of these singular McKean--Vlasov equations is how to solve them numerically (in a certain sense). Let us recall that even for standard SDEs with singular or irregular drift, where existence/uniqueness is known for quite some time, the convergence of the corresponding Euler scheme with non-vanishing rate has been established only very recently \cite{BDG,Men}. The situation with the  singular McKean--Vlasov equations presented above is much more complicated and very few results are available in the literature.
In particular, the results of \cite{SHKOL} do not provide a way to construct a numerical algorithm for solving \eqref{eq:stochastic-local-volatility-upd}   even in the stationary case considered there.
\par
In this paper, we study the problem of  numerically solving singular McKean-Vlasov (MV)  equations of a more general form than \eqref{eq:stochastic-local-volatility-upd}:
\begin{equation}
	\label{eq:X-dynamics}
	d X_t = H\left(t, X_t, Y_t, \E[A_1(Y_t) | X_t] \right) d t + F\left(t, X_t, Y_t, \E[A_2(Y_t) | X_t] \right) d W_t,
\end{equation}
where $H, F, A_1, A_2$ are sufficiently regular functions, $W$ is a $d$-dimensional Brownian motion, and $Y$ is a given stochastic process, for example, a diffusion process. A  key issue is how to approximate the conditional expectations
$\E[A_i(Y_t) | X_t = x]$, $i=1,2$, $x\in\R^d$.
\par
Guyon and Henry-Labord\`ere in the seminal paper \cite{G-HL}  suggested an approach to tackle this problem (see also \cite{antonelli2002rate}). They used the ``identity''
\begin{equation*}
	\E[A(Y_t) | X_t = x] \text{``$=$''} \frac{\E A(Y_t) \delta_x(X_t)}{\E \delta_x(X_t)},
\end{equation*}
where $\delta_x$ is the Dirac delta function concentrated at $x$. This suggests the following approximation:
\begin{equation}
	\label{eq:nadaraya-watson}
	\E[A(Y_t) | X_t = x] \approx \frac{\sum_{i=1}^N A(Y^{i,N}_t) k_\varepsilon(X^{i,N}_t - x)}{\sum_{i=1}^N k_\varepsilon(X^{i,N}_t - x)}.
\end{equation}
Here $(X^{i,N}, Y^{i,N})_{i=1\hdots N}$ is a particle system,  $k_\varepsilon(\cdot) \approx \delta_0(\cdot)$ is a regularizing kernel, and $\eps>0$ is a small parameter. This technique for solving \eqref{eq:X-dynamics} (assuming \eqref{eq:X-dynamics} has a solution for a moment) works very well in practice, especially when coupled with interpolation on a grid in $x$-space. Due to the local nature of the regression performed, the method can be justified under only weak regularity assumptions on the conditional expectation -- note, however, that the interpolation part might require higher order regularity.

On the other hand, the method has an important disadvantage shared by all local regression methods: For any given point $x$, only points $(X^i,Y^i)$ in a neighborhood around $x$ of size proportional to $\eps$ contribute to the estimate of $\E[A(Y_t) | X_t = x]$ -- as $k_\varepsilon(\cdot - x)\approx 0$ outside that neighborhood. Hence, local regression cannot take advantage of ``global'' information about the structure of the function $x \mapsto \E[A(Y_t) | X_t = x]$. If, for example, the conditional expectation can be globally approximated via a  polynomial, it is highly inefficient (from a computational point of view) to approximate it locally using~\eqref{eq:nadaraya-watson}. Taken to the extremes, if we assume a compactly supported kernel $k$ and formally take $\eps = 0$, then the estimator~\eqref{eq:nadaraya-watson} collapses to $\E[A(Y_t) | X_t = X^{i,N}_t] \approx A(Y^{i,N}_t)$, since only $X^{i,N}_t$ is close enough to itself to contribute to the estimator. In the context of the stochastic local volatility model~\eqref{eq:stochastic-local-volatility-upd}, this means that the dynamics silently collapses to a pure local volatility dynamics, if $\eps$ is chosen too small.

This disadvantage of local regression methods can be avoided by using \emph{global regression} techniques. Indeed, taking advantage of global regularity and global structural features of the unknown target function, global regression methods are often seen to be more efficient than their local counterparts, see, e.g., \cite{fbsite}. On the other hand, the global regression methods require more regularity (e.g. global smoothness) than the minimal assumptions needed for local regression methods. In addition, the choice of basis functions can be crucial for global regression methods.

In fact, the starting point of this work was to replace~\eqref{eq:nadaraya-watson} by global regression based on, say, $L$ basis functions. However, it turns out that Lipschitz constants of the resulting approximation to the conditional expectations in terms of the particle distribution explode as $L \to \infty$, unless the basis functions are carefully chosen.

As an alternative to \cite{G-HL} we propose in this paper a novel approach based on ridge regression in the context of reproducing kernel Hilbert spaces (RKHS) which, in particular, does not have either of the above mentioned disadvantages, even when the number of basis functions is infinite.

Recall that an RKHS  $\H$ is a Hilbert space of real valued functions
${f: \X \rightarrow \R}$,
such that the evaluation map $\H \ni f \mapsto f(x)$ is continuous for every $x\in \X$. This crucial property implies that there exists a positive symmetric kernel $k: \X\times\X$ $\rightarrow \R$, i.e. for any $c_{1},...,c_{n}\in\mathbb{R},$ $x_{1},...,x_{n}\in\X$ one
has%
\[
\sum_{i,j=1}^{n}c_{i}c_{j}k(x_{i},x_{j})\geq0,
\]
such that for every $x\in \X,$ $k_x$ $:=$ $k(\cdot,x)$ $\in$ $\H$,  and one has that
$\left<f,k_x\right>_\H$ $=$ $f(x)$, {for all}  $f\in\H.$ As a main feature, any positive definite kernel $k$ uniquely determines a RKHS $\H$
and the other way around. In our setting we will consider $\X\subset \R^d$.
For a detailed introduction and further properties of RKHS we refer to the literature, for example \cite[Chapter~4]{steinwart2008support}.
We recall that the RKHS framework is popular in machine learning where it is widely used for computing conditional expectations. In the learning context, kernel methods are most prominently used in order to avoid the curse of dimensionality when dealing with high-dimensional features by the \emph{kernel trick}. We stress that this issue is not relevant in the application to calibration of equity models -- but might be interesting for more general, high-dimensional singular McKean-Vlasov systems.

Consider a pair of random variables $(X,Y)$ taking values in $\X\times\X$ with finite second moments and denote $\nu \coloneqq \Law(X,Y)$. Suppose that $A\colon\mathcal{X} \to \R$ is sufficiently regular and  $\H$ is large enough so that we have $\E\left[ A(Y) | X = \cdot \right] \in \mathcal{H}$.
Then, formally,%
\begin{align*}
	c_{A}^{\nu}(\cdot)\coloneqq\int_{\mathcal{X}\times\mathcal{X}}k(\cdot
	,x)A(y)\nu(dx,dy)  & =\int_{\mathcal{X}}k(\cdot,x)\nu(dx,\mathcal{X}%
	)\int_{\mathcal{X}}A(y)\nu(dy|x)\\
	& =\int_{\mathcal{X}}k(\cdot,x)\E\left[  A(Y)|X=x\right]  \nu(dx,\mathcal{X}%
	)\\
	& =:\mathcal{C}^{\nu}\E\left[  A(Y)|X=\cdot\right],
\end{align*}
where
\begin{equation*}
	\mathcal{C}^{\nu}f\coloneqq\int_{\mathcal{X}}k(\cdot,x)f(x)\nu(dx,\mathcal{X}), \quad f\in\H.
\end{equation*}
Unfortunately, in general,  the operator $\mathcal{C}^\nu$ is not invertible. As $\mathcal{C}^\nu$ is positive definite, it is, however, possible to \emph{regularize} the inversion by replacing $\mathcal{C}^\nu$ by $\mathcal{C}^\nu + \lambda I_{\mathcal{H}}$ for some $\lambda > 0$ where \(I_{\mathcal{H}}\) is the identity operator on \(\mathcal{H}\). 
Indeed, it turns out that
\begin{equation}
	m^\lambda_A(\cdot;\nu) :=(\mathcal{C}^\nu + \lambda I_{\mathcal{H}})^{-1} c^\nu_A,\label{mla0}
\end{equation}
is the solution to the minimization problem
\begin{equation}\label{mla}
	m^{\lambda}_A(\cdot;\nu):=\argmin_{f\in \H}\bigl(\E (A(Y)-f(X))^2+\lambda\| f\|_{\mathcal{H}}^{2}\bigr),
\end{equation}
see \cref{prop: apcon}.
On the other hand one also has
\begin{equation*}
	\E[ A(Y) | X = \cdot] = \argmin_{f\in L_2 (\R^d,\Law(X))}\E (A(Y)-f(X))^2,
\end{equation*}
and therefore it is natural to expect that if
$\lambda>0$ is small enough and $\mathcal{H}$ is large enough, then  $m^{\lambda}_A(\cdot; \nu)\approx\E[A(Y) | X = \cdot],$ that is, $m^{\lambda}_A(\cdot; \nu)$ is close to the true conditional expectation.
\par
The main result of the article is that the regularized MV system obtained by replacing the conditional expectations with their regularized versions (\ref{mla0}) in~\eqref{eq:X-dynamics} is well-posed and propagation of chaos holds for the corresponding particle system, see \cref{prop:exist-reg} and \cref{T:MR2}.
To establish these theorems, we study the  joint regularity of $m^{\lambda}_A(x;\nu)$ in the space variable $x$, and the measure $\nu$ for fixed $\lambda>0$. These
type of results are almost absent in the literature on RKHS and we here fill this gap. In particular, we prove that under suitable conditions, $m^{\lambda}_A(x; \nu)$ is Lipschitz in both arguments, that is, w.r.t.~the standard Euclidean norm in $x$ and the Wasserstein-$1$-norm in $\nu$,  and, can be calculated numerically in an efficient way, see Section~\ref{sec:mkv}.
Additionally, in Section~\ref{sec:rkhs} we study the convergence of $m^{\lambda}_A(\cdot; \nu)$ in (\ref{mla0}) to the true conditional expectation for  fixed $\nu$ as $\lambda\searrow0$ .

Let us note that, as a further nice feature of the RKHS approach compared to
the kernel method of \cite{G-HL}, one may incorporate, at least
in principle, global prior information concerning properties of
$\E[ A(Y) | X = \cdot]$ into the choice of the RKHS generating kernel $k$. In
a nutshell, if one anticipates beforehand that $\E[ A(Y) | X = x]\approx
f(x)$ for some known ``nice'' function $f,$ one may
pass on to a new kernel $\widetilde{k}(x,y) \coloneqq k(x,y)+f(x)f(y).$ This degree of
freedom is similar to, for example, the possibility of choosing basis
functions in line with the problem under consideration in the usual regression
methods for American options.
We also note that the Lipschitz constants for $m^\lambda_A(\cdot;\nu)$ with respect to both arguments
are expressed in bounds related to $A$ and the kernel $k$, only, see  \cref{lip}. In contrast, if we would have dealt with standard ridge regression, that is, ridge regression based on a fixed system of basis functions, we would have to impose restrictions on the regression coefficients leading to a nonconvex constrained optimization problem.

Thus, the contribution of the current work is fourfold. First, we propose a  RKHS-based approach to regularize \eqref{eq:X-dynamics} and prove the well-posedness of the regularized equation. Second, we show convergence  of the approximation
\eqref{mla} to the true conditional expectation as $\lambda\searrow0$. Third, we suggest a particle based approximation of  the regularized equation and analyze its convergence. Finally, we apply our algorithm to the problem of smile calibration in finance and illustrate its performance on simulated data.
In particular, we validate our results by solving numerically a regularized version of \eqref{eq:stochastic-local-volatility-upd} (with $m^{\lambda}_A$ in place of the conditional expectation). We show that our system
is indeed an approximate solution to \eqref{eq:stochastic-local-volatility-upd} in the sense that we get very close fits of the implied volatility surface --- the final goal of the smile calibration problem.

\medskip
The rest of the paper is organized as follows. Our main theoretical results are given in \cref{sec:mkv}. Convergence properties of the regularized conditional expectation $m^\lambda_A$ are established in \cref{sec:rkhs}. A numerical algorithm for solving \eqref{eq:X-dynamics} and an efficient implementable approximation of $m^\lambda_A$ are discussed in \cref{sec:alg}.  \cref{sec:app} contains numerical examples. The results of the paper are summarized in \cref{S:6}. Finally, all the proofs are placed in \cref{sec:proof}.

\medskip
\noindent\textbf{Convention on constants.}
Throughout the paper $C$ denotes a positive constant whose value may change from line to line. The dependence of constants on parameters if needed will be indicated, e.g, $C(\lambda)$.

\medskip
\noindent\textbf{Acknowledgements.}
	The authors would like to thank the referees and the associated editor for their helpful comments and feedback. We are also grateful to  Peter Friz and Mykhaylo Shkolnikov for useful discussions. OB would like to thank Vadim Sukhov for very helpful conversations regarding implementing parallelized algorithms in Python. DB acknowledges the financial support from Deutsche Forschungsgemeinschaft (DFG), Grant Nr.497300407. 
	CB, OB, and JS are supported by the DFG Research Unit FOR 2402. OB is funded by the Deutsche Forschungsgemeinschaft (DFG, German Research Foundation) under Germany's Excellence Strategy --- The Berlin Mathematics Research Center MATH+ (EXC-2046/1, project ID: 390685689, sub-project EF1-22).

\section{Main results}
\label{sec:mkv}

We begin by introducing the basic notation. For $a\in\R,$ we denote $a^+:=\max(a,0)$.
Let $(\Omega, \F,\P)$ be a probability space. For $d\in\N$, let $\X\subset\R^d$ be an open subset, and $\PP_2(\X)$ be the set of all probability measures on $(\X,\mathcal{B}(\X))$ with finite second moment. If $\mu,\nu\in\PP_2(\X) $, $p\in[1,2]$, then we denote the \textit{Wasserstein-p} (Kantorovich) distance between them by
\begin{equation*}
	\W_p(\mu,\nu):=\inf (\E|X-Y|^p)^{1/p},
\end{equation*}
where the infimum is taken over all random variables $X,Y$ with $\Law(X)=\mu$ and ${\Law(Y)=\nu}$.
Let $k:\X\times\X\to\R$ be a symmetric, positive definite kernel, and $\mathcal{H}$ be a reproducing kernel Hilbert space of
functions $f\colon\X\to\R$
associated with the kernel $k$. That is, for any  $x\in\X$,  $f\in\H$ one has
\begin{equation*}
	f(x)=\langle f,k(x,\cdot)\rangle_{\H}.
\end{equation*}
In particular, $\langle k(x,\cdot),k(y,\cdot)\rangle_{\H}=k(x,y)$, for any $x,y\in\X$. We refer to \cite[Chapter~4]{steinwart2008support} for further properties of RKHS.

Let  $A\colon \X\to\R$ be a measurable function such that $|A(x)|\le C(1+|x|)$ for some universal constant $C>0$ and all $x\in\X$. For $\nu\in \PP_{2}(\X\times\X)$, $\lambda\ge0$ consider the following optimization problem (\textit{ridge regression})
\begin{equation}\label{minv}
	m^\lambda_A(\cdot; \nu) := \argmin_{f\in\H}\left\{  \int_{\X\times\X}|A(y)-f(x)|^2\,\nu(dx,dy)+\lambda
	\| f\|_{\H}^2\right\}.
\end{equation}
We fix $T>0$, $d\in\N$ and consider the system
\begin{align}
	\label{eq:mvsde-1}
	dX_t &= H(t,X_t, Y_t, \E[A_1(Y_t) | X_t] ) d t
	\!+ F(t,X_t, Y_t, \E[A_2(Y_t) | X_t] ) d W_t^X
	\\
	\label{eq:mvsde-2}
	dY_t&=b(t,Y_t)dt+\sigma(t,Y_t)dW_t^Y,
\end{align}
where $H\colon[0,T]\times\R^d\times \R^d \times \R\to \R^d$,
$F\colon [0,T]\times \R^d\times \R^d \times \R\to \R^d\times  \R^d$,
$A_i\colon \R^d\to \R$, $b\colon[0,T]\times\R^d\to \R^d$, $\sigma\colon[0,T]\times\R^d\to \R^d\times \R^d$ are measurable functions, $W^X,W^Y$ are two (possibly correlated) $d$-dimensional Brownian motions on $(\Omega, \F,\P)$, and $t\in[0,T]$. We note that our choice of $Y$ as a diffusion process in \eqref{eq:mvsde-2} is mostly for convenience, and we expect our results to hold in more generality, when appropriately modified.

Denote $\mu_t:=\Law(X_t,Y_t)$. As mentioned above, the functional  
$$(x, \mu_t) \mapsto \E\left[ A_i(Y_t) | X_t = x \right]$$ 
is not Lipschitz continuous even if $A_i$ is smooth. Therefore the classical results on well-posedness of McKean--Vlasov equations are not applicable to \eqref{eq:mvsde-1}--\eqref{eq:mvsde-2}.
The main idea of our  approach is to replace the conditional expectation by the corresponding RKHS approximation \eqref{minv} which has ``nice'' properties (in particular, it is Lipschitz continuous). This would imply strong existence and uniqueness of the new system. Furthermore, we will demonstrate numerically that the solution to the new system is still ``close'' to the solution of \eqref{eq:mvsde-1}--\eqref{eq:mvsde-2} in a certain sense. Thus, we consider the following system:
\begin{align}
	d \wh X_t &= H(t,\wh X_t, Y_t, m^\lambda_{A_1}(\wh X_t;\wh \mu_t) ) d t
	+ F(t,\wh X_t, Y_t, m^\lambda_{A_2}(\wh X_t;\wh \mu_t))\, d W_t^X,\label{newsyst1}\\
	dY_t&=b(t,Y_t)dt+\sigma(t,Y_t)\, dW_t^Y\label{newsyst2}\\
	\wh\mu_t&=\Law(\wh X_t,Y_t).\label{newsyst3}
\end{align}
where $t\in[0,T]$. We need the following assumptions on the kernel $k$ (formulated in a slightly redundant manner for the ease of notation).
\begin{assumption}	\label{AKnew}
	The kernel $k$ is twice continuously differentiable in both variables, $k(x,x)>0$ for  all $x\in\X$, and
	
	\begin{multline*}
		D_{k}^{2}:=\sup_{\substack{(x,y)\in\X\times\X\\1\leq i,j\leq d}}\max\left\{
		|\partial_{x_{i}}\partial_{y_{j}}k^{2}(x,y)|,|\partial_{x_{i}}\partial_{y_{j}%
		}k(x,y)|,|\partial_{x_{i}}k(x,y)|,\right.  \\
		\left.  |\partial_{y_{j}}k(x,y)|,|k(x,y)|\right\} <\infty.
	\end{multline*}
\end{assumption}

Let $\C^1(\X,\R)$ be the space of all functions  $f\colon\X\to\R$ such that
\begin{equation*}
	\|f\|_{\C^1}:=\sup_{x\in\X} |f(x)|+\sup_{\substack{x\in\X\\i=1,\ldots,d}} |\d_{x_i} f(x)|<\infty.
\end{equation*}
Now we are ready to state our main results. Their proofs are given in \cref{sec:proof}.

\begin{theorem}
	\label{prop:exist-reg}
	Suppose that Assumption~\ref{AKnew} is satisfied for the kernel $k$ with $\X = \R^d$ and
	\begin{enumerate}[$(1)$]
		\item $A_i\in\C^1(\R^d,\R)$, $i=1,2$;
		\item there exists a constant $C>0$ such that for any $t\in[0,T]$, ${x,y,x',y'\in \R^d}$, ${z,z'\in\R}$,
		\begin{align*}
			&|H(t,x,y,z)-H(t,x',y',z')|+|F(t,x,y,z)-F(t,x',y',z')|\\
			&\qquad+|b(t,y)-b(t,y')|+|\sigma(t,y)-\sigma(t,y')|
			\\
			&\leq C(|x-x'|+|y-y'|+|z-z'|);
		\end{align*}
		\item for any fixed $x,y,\in \R^d$, $z\in\R$ one has
		\begin{equation*}
			\int_0^T (|H(t,x,y,z)|^2+|F(t,x,y,z)|^2+|b(t,y)|^2+|\sigma(t,y)|^2)\,dt <\infty;
		\end{equation*}
		\item $\E|\wh X_0|^2<\infty$, $\E|Y_0|^2<\infty$.
	\end{enumerate}
	Then for any $\lambda>0$ the system \eqref{newsyst1}, \eqref{newsyst2}, \eqref{newsyst3} with the initial condition $(\wh X_0, Y_0)$ has a unique strong solution.
\end{theorem}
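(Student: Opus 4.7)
The plan is to reduce the regularized system to a classical McKean--Vlasov SDE with coefficients that are Lipschitz continuous in both the spatial and measure arguments, and then to apply a standard Picard fixed-point argument in the space $C([0,T];\PP_2(\R^d\times\R^d))$. The role of the RKHS regularization is exactly to produce this missing Lipschitz continuity, which fails for the bare conditional expectation.

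The first step, and the heart of the argument, is the joint regularity estimate for the regularized conditional expectation, namely a bound of the form
\begin{equation*}
	|m^\lambda_{A}(x;\nu) - m^\lambda_{A}(x';\nu')| \le L(\lambda, D_k, \|A\|_{\C^1})\bigl(|x-x'| + \W_1(\nu,\nu')\bigr),
\end{equation*}
which will appear as \cref{lip}. To prove it, I would start from the representation $m^\lambda_A(\cdot;\nu) = (\C^\nu + \lambda I_{\H})^{-1} c^\nu_A$. Lipschitz dependence in $x$ follows from the reproducing identity $m^\lambda_A(x;\nu) = \langle m^\lambda_A(\cdot;\nu), k(\cdot,x)\rangle_{\H}$ combined with the estimate $\|k(\cdot,x) - k(\cdot,x')\|_{\H}^2 = k(x,x) - 2k(x,x') + k(x',x') \le C D_k^2 |x-x'|^2$, which is available under Assumption~\ref{AKnew}. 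Lipschitz dependence in $\nu$ is obtained via the resolvent identity
\begin{equation*}
(\C^\nu+\lambda I_{\H})^{-1} - (\C^{\nu'}+\lambda I_{\H})^{-1} = (\C^{\nu'}+\lambda I_{\H})^{-1}(\C^{\nu'}-\C^\nu)(\C^\nu+\lambda I_{\H})^{-1}
\end{equation*}
together with the bounds $\|(\C^\nu+\lambda I_{\H})^{-1}\|_{\H\to\H}\le \lambda^{-1}$, and Kantorovich--Rubinstein duality: since $A\in\C^1$ and the $\H$-valued maps $x\mapsto k(\cdot,x)A(y)$ and $x\mapsto k(\cdot,x)$ are Lipschitz by Assumption~\ref{AKnew}, one obtains $\|c^\nu_A - c^{\nu'}_A\|_{\H} + \|\C^\nu - \C^{\nu'}\|_{\H\to\H} \le C(D_k,\|A\|_{\C^1})\W_1(\nu,\nu')$.

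Given the joint Lipschitz estimate, define
\begin{equation*}
	\wh H(t,x,y,\mu) := H\bigl(t,x,y,m^\lambda_{A_1}(x;\mu)\bigr),\qquad \wh F(t,x,y,\mu) := F\bigl(t,x,y,m^\lambda_{A_2}(x;\mu)\bigr).
\end{equation*}
By composition, hypothesis (2) and the joint Lipschitz bound above, $\wh H$ and $\wh F$ are Lipschitz in $(x,y,\mu)$ with respect to $|\cdot|$ on states and $\W_1$ on measures, and hypothesis (3) together with $A_i\in\C^1$ provides the required $L^2$-in-time bound at any fixed point. The system \eqref{newsyst1}--\eqref{newsyst3} therefore fits into the classical framework for McKean--Vlasov SDEs with Lipschitz coefficients (see e.g.\ \cite{funaki1984certain,CaDe1,MV16}), coupled with an exogenous Lipschitz SDE for $Y$.

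Existence and uniqueness are then obtained by a standard Picard iteration: for a given flow $\mu=(\mu_t)_{t\in[0,T]}$ of laws on $\R^d\times\R^d$, solve the plain (non-McKean) SDE
\begin{equation*}
	d\wh X^\mu_t = \wh H(t,\wh X^\mu_t, Y_t, \mu_t)\,dt + \wh F(t,\wh X^\mu_t, Y_t,\mu_t)\,dW^X_t,
\end{equation*}
set $\Phi(\mu)_t := \Law(\wh X^\mu_t, Y_t)$, and observe that using Lipschitzness in $\mu$ (via $\W_1\le\W_2$) together with It\^o's isometry and Gronwall's inequality yields
\begin{equation*}
	\W_2^2(\Phi(\mu)_t,\Phi(\mu')_t) \le C\int_0^t \W_2^2(\mu_s,\mu'_s)\,ds.
\end{equation*}
Iterating gives a contraction on a sufficiently small time interval, and pasting solutions yields a unique fixed point on $[0,T]$; uniqueness of strong solutions to \eqref{newsyst1}--\eqref{newsyst3} follows by the same Gronwall estimate applied to two candidate solutions. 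The main obstacle throughout is the measure-Lipschitz estimate for $m^\lambda_A(\cdot;\nu)$, since such joint regularity results are essentially absent from the RKHS literature; the constant necessarily degenerates as $\lambda\searrow 0$ (through the factor $\lambda^{-1}$ from the resolvent), but this is harmless for strong well-posedness at any fixed $\lambda>0$.
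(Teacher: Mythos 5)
Your proposal is correct and takes essentially the same route as the paper: both reduce the regularized system to a classical Lipschitz McKean--Vlasov SDE by invoking the joint Lipschitz estimate of \cref{lip} together with $\W_1\le\W_2$, and then appeal to standard well-posedness theory (the paper cites \cite[Theorem~4.21]{CaDe1} where you spell out the equivalent Picard fixed-point argument). The sketch you give for \cref{lip} via the operator resolvent identity is a minor stylistic variant of the paper's coordinate-based proof through the matrix $B^\nu$ and vector $\gamma^\nu$, but the underlying mechanism is identical.
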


To analyze a numerical scheme solving \eqref{newsyst1}--\eqref{newsyst3}, we consider a particle system
\begin{align}
	d X^{N,n}_t &= H\bigl(t,X^{N,n}_t, Y^{N,n}_t, m^\lambda_{A_1}(X^{N,n}_t;\mu_{t}^{N}) \bigr) d t  \nonumber\\
	&\phantom{=}+ F\bigl(t, X^{N,n}_t, Y^{N,n}_t, m^\lambda_{A_2}(X^{N,n}_t;\mu_{t}^{N})\bigr) d W_t^{X,n},\label{eq:particle-systema}\\
	dY^{N,n}_t&=b(t,Y^{N,n}_t)\,dt+\sigma(t,Y^{N,n}_t)\,dW_t^{Y,n},\label{eq:particle-systemb}\\
	\mu_{t}^{N}&=\frac{1}{N}\sum_{n=1}^{N}\delta_{(X_{t}^{N,n},Y_{t}^{N,n})},\label{eq:particle-systemc}		
\end{align}
where $N\in\N$, $n=1,\ldots,N$, $t\in[0,T]$, and the pairs of $d$-dimensional Brownian motions $(W^{X,n},W^{Y,n})$, $n=1,\ldots,N$, are jointly independent and have the same law as $(W^{X},W^{Y})$.
The following propagation of chaos result holds; it  establishes both weak and strong convergence of $X^{N,n}$.

\begin{theorem}\label{T:MR2}
	Suppose that all the conditions of \cref{prop:exist-reg} are satisfied.
	Suppose that the initial values $(X_0^{N,n},Y_0^{N,n})$ are  jointly independent and have the same law as
	$(\wh X_0,Y_0)$. Moreover, suppose that $\E|\wh X_0|^q<\infty$, $\E|Y_0|^q<\infty$ for some $q>4$. Then
	there exists a constant $C=C(\lambda,T,\E|\wh X_0|^q,\E|Y_0|^q)>0$ such that for any $n=1,$ $\ldots,$ $N,$ $N\in\N,$
	\begin{align}
		\label{eq:prop-chaos}
		\E \left[ \sup_{0\leq t\leq T}|X_{t}^{N,n}-\wh X_{t}^n|^{2} \right]	+\sup_{0\leq t\leq T} \E[\mathbb{W}_{2}(\mu_{t}^{N},\wh \mu_t)^{2}]\le C \epsilon_{N},
	\end{align}
	where the process $\wh X^n$ solves \eqref{newsyst1}--\eqref{newsyst3} with $W^{X,n}$, $W^{Y,n}$ in place of $W^{X}$, $W^{Y}$, respectively,
	and where
	\[
	\epsilon_{N}=\left\{
	\begin{array}
		[c]{r}%
		N^{-1/2}\text{ \ \ if }d=1,\\
		N^{-1/2}\log N\text{ \ \ if }d=2,\\
		N^{-1/d}\text{ \ \ if }d>2.
	\end{array}
	\right.
	\]
\end{theorem}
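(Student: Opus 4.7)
The plan is a classical synchronous coupling argument, the key ingredient being the joint $(x,\W_1)$-Lipschitz continuity of $m^\lambda_A(x;\nu)$ established in \cref{lip}.

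First I would build the coupling: for each $n=1,\ldots,N$, let $\wh X^n$ denote the unique strong solution of \eqref{newsyst1}--\eqref{newsyst3} (guaranteed by \cref{prop:exist-reg}) driven by $(W^{X,n},W^{Y,n})$ and started from $(X_0^{N,n},Y_0^{N,n})$. Since the driving Brownian pairs and the initial data are i.i.d.\ across $n$, the processes $(\wh X^n,Y^n)$ are i.i.d.\ copies of $(\wh X,Y)$. Moreover $Y^{N,n}$ and $Y^n$ both satisfy \eqref{eq:particle-systemb} with identical driver and initial data, so $Y^{N,n}=Y^n$ almost surely. Introduce the auxiliary empirical measure $\wt\mu_t^N := \frac{1}{N}\sum_{n=1}^N \delta_{(\wh X_t^n,Y_t^n)}$.

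Subtracting \eqref{eq:particle-systema} from \eqref{newsyst1}, applying Burkholder--Davis--Gundy together with the Lipschitz hypotheses on $H,F$ and using $Y^{N,n}=Y^n$ leads to
\begin{align*}
\E\sup_{s\le t}|X_s^{N,n}-\wh X_s^n|^2 \le C\int_0^t \Bigl(\E|X_s^{N,n}-\wh X_s^n|^2 + \sum_{i=1,2}\E\bigl|m^\lambda_{A_i}(X_s^{N,n};\mu_s^N)-m^\lambda_{A_i}(\wh X_s^n;\wh\mu_s)\bigr|^2\Bigr)\, ds.
\end{align*}
Invoking \cref{lip} bounds the inner difference by $C(|X_s^{N,n}-\wh X_s^n|+\W_1(\mu_s^N,\wh\mu_s))$. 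I would then split $\W_1(\mu_s^N,\wh\mu_s)\le \W_2(\mu_s^N,\wt\mu_s^N)+\W_2(\wt\mu_s^N,\wh\mu_s)$: the diagonal coupling $n\leftrightarrow n$ (admissible because $Y^{N,n}=Y^n$), together with exchangeability, gives $\E\W_2(\mu_s^N,\wt\mu_s^N)^2 \le \E|X_s^{N,1}-\wh X_s^1|^2$, while $\sup_{s\le T}\E\W_2(\wt\mu_s^N,\wh\mu_s)^2 \le C\eps_N$ follows from the Fournier--Guillin empirical measure estimate once the common law is checked to have a finite $q$-moment with $q>4$. The latter is obtained by standard SDE moment propagation using boundedness of $m^\lambda_{A_i}$ (from $\lambda>0$ and \cref{AKnew}) and the linear growth of $H,F$ inherited from the Lipschitz hypothesis. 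Plugging back, using exchangeability to drop the $n$-index and applying Gr\"onwall's lemma yields the stated bound on $\E\sup_{t\le T}|X_t^{N,n}-\wh X_t^n|^2$; the second half of \eqref{eq:prop-chaos} then follows from the same splitting.

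The main obstacle is conceptual rather than technical: the conditional-expectation functional in the original McKean--Vlasov equation \eqref{eq:mvsde-1} is \emph{not} Lipschitz continuous in the measure argument in any reasonable sense, and the whole purpose of the RKHS regularization is precisely to restore this property through $m^\lambda_A$. Once the joint $(x,\W_1)$-Lipschitz estimate in \cref{lip} is granted, the above scheme is a clean instance of the classical Sznitman-type propagation-of-chaos argument. A subsidiary subtlety is the mismatch between the $\W_1$-Lipschitz continuity of \cref{lip} and the $\W_2$ formulation of \eqref{eq:prop-chaos} and of the empirical-measure bound; this is reconciled by $\W_1\le\W_2$ together with the dimension-dependent Fournier--Guillin rate, which is responsible for the three cases in the definition of $\eps_N$.
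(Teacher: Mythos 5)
Your proof is correct and is essentially the argument that underlies the paper's own one-line proof: the paper simply verifies, via \cref{lip} and the stated hypotheses, that the assumptions of the cited Carmona--Delarue propagation-of-chaos theorem hold and then invokes it, whereas you unroll the synchronous-coupling, Gr\"onwall, and Fournier--Guillin argument contained in that reference. One detail worth making explicit is that the empirical measures $\mu_t^N,\wt\mu_t^N$ are supported on $\X\times\X\subset\R^{2d}$, so the Fournier--Guillin bound must be applied in dimension $2d$ rather than $d$; this is exactly why the case split in $\eps_N$ occurs at $d=2$ (state dimension $4$) and what the moment hypothesis $q>4$ is calibrated to, and the paper flags this point in its proof. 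Your identification of the needed ingredients --- the diagonal coupling made possible by $Y^{N,n}=Y^n$, the $q$-moment propagation using linear growth of $H,F$ and the uniform bound $|m^\lambda_A(x;\nu)|\le\lambda^{-1}D_k^2\|A\|_{\C^1}$ --- is accurate.
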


A crucial step which allowed us to obtain these results is the Lipschitz continuity of $m^\lambda$. The following holds.
\begin{theorem}	\label{lip}
	Assume that the kernel $k$ satisfies Assumption~\ref{AKnew}. Let $A\in\C^1(\X,\R)$. Then for any $x,y\in\X$, $\mu,\nu\in\mathcal{P}_{2}(\X\times\X)$ on has
	\begin{equation*}
		|m^{\lambda}_A(x;\mu)-m^{\lambda}_A(y;\nu)|\le C_1 \W_{1}(\mu,\nu)+C_2|x-y|,
	\end{equation*}
	where
	\begin{equation*}
		C_{1}:=\left(  \frac{D_{k}}{\lambda^{2}}+\frac{1}{\lambda}\right)
		dD_{k}^{2}\| A\|_{\C^1}\text{ \ and \ }C_{2}:=
		\frac{\sqrt{d}}{\lambda}D_{k}^{2}\|A\|_{\C^1}
	\end{equation*}
	may be considered to be (possibly suboptimal) Lipschitz constants with respect to the Wasserstein metric and the Euclidian norm, respectively.
\end{theorem}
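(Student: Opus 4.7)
The plan is to split
\[
|m^{\lambda}_A(x;\mu)-m^{\lambda}_A(y;\nu)|\le \underbrace{|m^{\lambda}_A(x;\mu)-m^{\lambda}_A(y;\mu)|}_{\text{spatial part}}+\underbrace{|m^{\lambda}_A(y;\mu)-m^{\lambda}_A(y;\nu)|}_{\text{measure part}},
\]
and to treat each piece via the explicit representation $m^{\lambda}_A(\cdot;\nu)=(\mathcal{C}^{\nu}+\lambda I_{\H})^{-1}c^{\nu}_A$ together with the reproducing property $f(x)=\langle f,k(x,\cdot)\rangle_{\H}$. A uniform a priori bound on the RKHS norm of the minimiser will be the starting point: since $\mathcal{C}^{\nu}$ is positive semidefinite we have $\|(\mathcal{C}^{\nu}+\lambda I_{\H})^{-1}\|_{\mathrm{op}}\le \lambda^{-1}$, while $\|k(\cdot,x)\|_{\H}=\sqrt{k(x,x)}\le D_{k}$ yields $\|c^{\nu}_A\|_{\H}\le D_{k}\|A\|_{\infty}$, hence $\|m^{\lambda}_A(\cdot;\nu)\|_{\H}\le D_{k}\lambda^{-1}\|A\|_{\C^{1}}$.

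For the spatial part, the reproducing property and Cauchy--Schwarz give
\[
|m^{\lambda}_A(x;\mu)-m^{\lambda}_A(y;\mu)|=|\langle m^{\lambda}_A(\cdot;\mu),k(x,\cdot)-k(y,\cdot)\rangle_{\H}|\le \|m^{\lambda}_A(\cdot;\mu)\|_{\H}\,\|k(x,\cdot)-k(y,\cdot)\|_{\H}.
\]
The key kernel estimate is $\|k(x,\cdot)-k(y,\cdot)\|_{\H}\le \sqrt{d}\,D_{k}|x-y|$, which I would obtain from the fundamental theorem of calculus applied in $\H$ together with the identity $\|\partial_{i}k(\cdot,z)\|_{\H}^{2}=\partial_{x_{i}}\partial_{y_{i}}k(z,z)\le D_{k}^{2}$ guaranteed by Assumption~\ref{AKnew}. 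Combining these bounds produces precisely the constant $C_{2}$.

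For the measure part, I use the resolvent-type identity, subtracting the defining equations $(\mathcal{C}^{\mu}+\lambda I_{\H})m^{\lambda}_A(\cdot;\mu)=c^{\mu}_A$ and $(\mathcal{C}^{\nu}+\lambda I_{\H})m^{\lambda}_A(\cdot;\nu)=c^{\nu}_A$ to obtain
\[
m^{\lambda}_A(\cdot;\mu)-m^{\lambda}_A(\cdot;\nu)=(\mathcal{C}^{\mu}+\lambda I_{\H})^{-1}\bigl[(c^{\mu}_A-c^{\nu}_A)-(\mathcal{C}^{\mu}-\mathcal{C}^{\nu})m^{\lambda}_A(\cdot;\nu)\bigr].
\]
Applying $\|(\mathcal{C}^{\mu}+\lambda I_{\H})^{-1}\|_{\mathrm{op}}\le \lambda^{-1}$ reduces the task to bounding the $\H$-norms of two differences of Bochner integrals of $\H$-valued functions against $\mu$ and $\nu$, namely the integrals of
$(x,y)\mapsto A(y)k(\cdot,x)$ and $x\mapsto m^{\lambda}_A(x;\nu)\,k(\cdot,x)$. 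Each of these functions has values in $\H$ and is Lipschitz with constants computable from Assumption~\ref{AKnew}: the first uses $\|A\|_{\C^{1}}$ directly, while the second requires the $\C^{1}$-bounds on $m^{\lambda}_A(\cdot;\nu)$, which follow from the a priori bound $\|m^{\lambda}_A(\cdot;\nu)\|_{\H}\le D_{k}\lambda^{-1}\|A\|_{\C^{1}}$ together with $|\partial_{i}m^{\lambda}_A(x;\nu)|\le \|m^{\lambda}_A(\cdot;\nu)\|_{\H}\,D_{k}$ via the reproducing property. The passage from $\H$-valued Lipschitz functions to a $\W_{1}$-bound follows by Kantorovich--Rubinstein: choose an optimal coupling of $\mu$ and $\nu$ and use that $\|\int G\,d\pi\|_{\H}\le \int\|G\|_{\H}\,d\pi$ for Bochner integrals. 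A final application of $|f(y)|\le D_{k}\|f\|_{\H}$ to $f=m^{\lambda}_A(\cdot;\mu)-m^{\lambda}_A(\cdot;\nu)$ turns the $\H$-norm estimate into the pointwise bound that yields $C_{1}$.

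The argument is essentially mechanical once the right algebraic identity and the RKHS norm bounds are in place; the main obstacle is the careful bookkeeping of dimensional and $D_{k}$-dependent constants so that the specific form of $C_{1}$ claimed in the statement is recovered. A minor but not entirely routine point is the justification of Kantorovich--Rubinstein duality for Hilbert-valued integrands, which I would do by an optimal coupling argument rather than by the dual supremum representation.
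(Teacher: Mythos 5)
Your proposal is correct and follows essentially the same route as the paper: the same triangle-inequality split into a ``fixed measure, vary $x$'' piece and a ``fixed point, vary $\mu$'' piece, the same reproducing-property and Cauchy--Schwarz estimate together with the kernel estimate $\|k(x,\cdot)-k(y,\cdot)\|_{\H}\le\sqrt{d}\,D_k|x-y|$ for the spatial part, and a resolvent manipulation of $(\mathcal{C}^\nu+\lambda I_\H)^{-1}$ for the measure part. The differences are presentational rather than structural. The paper passes to coordinates relative to an orthonormal basis of $\H$, rephrases the problem as $\beta^\nu=(B^\nu+\lambda I)^{-1}\gamma^\nu$ in $\ell_2(I)$, controls $\|B^\mu-B^\nu\|$ via a Hilbert--Schmidt bound, and applies the \emph{scalar} Kantorovich--Rubinstein duality twice to the resulting double integrals in $k^2(x,y)$ and $k(x,x')A(y)A(y')$. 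You instead stay at the operator level, subtract the normal equations directly to get $m^\lambda_A(\cdot;\mu)-m^\lambda_A(\cdot;\nu)=(\mathcal{C}^\mu+\lambda I_\H)^{-1}[(c^\mu_A-c^\nu_A)-(\mathcal{C}^\mu-\mathcal{C}^\nu)m^\lambda_A(\cdot;\nu)]$, and invoke a vector-valued Kantorovich--Rubinstein bound via an optimal coupling for the resulting Bochner integrals. Both are sound; your optimal-coupling argument for $\H$-valued Lipschitz integrands is indeed a valid way to sidestep the scalar dual representation and is the right way to make that step rigorous. The constant $C_2$ comes out exactly as stated on your route; $C_1$ would come out with the same $\lambda$-dependence and the same qualitative $D_k$, $d$, $\|A\|_{\C^1}$ factors, with possibly a different numerical prefactor depending on how the $d$'s are aggregated, which the theorem already anticipates by labelling the constants as possibly suboptimal.
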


This result is interesting for at least two reasons.  First, it shows that $m^\lambda_A$ is Lipschitz continuous in both arguments, provided that the kernel $k$ is smooth enough. That is, the Lipschitz continuity property depends on $\mathcal{H}$ only through the smoothness of the kernel $k$.
Second, this result gives an explicit dependence of the corresponding (possibly suboptimal) Lipschitz constants on $\lambda$ and $k$.

\begin{remark}
	Let us stress that \cref{prop:exist-reg} establishes  the existence and uniqueness of \eqref{eq:mvsde-1}--\eqref{eq:mvsde-2} only for a fixed regularization parameter $\lambda>0$ and can not be used to study the limiting case $\lambda\to 0$. Indeed, it follows from \cref{lip}, that as $\lambda\to0$, the Lipschitz constants of $m^{\lambda}_A$ blow up. Yet, Theorem~2.3 does not imply that for $\lambda\to 0$ the {\em optimal} Lipschitz constants blow up, nor that the solution to \eqref{eq:mvsde-1}--\eqref{eq:mvsde-2} blows up. We will demonstrate numerically  in \cref{sec:app} that for $\lambda\to0$, in the examples  there,   the solution to \eqref{eq:mvsde-1}--\eqref{eq:mvsde-2} does not blow up. On the contrary, it weakly converges to a limit; this suggests that (at least) weak existence of a solution to \eqref{eq:mvsde-1}--\eqref{eq:mvsde-2} may hold. Verifying this theoretically remains however an important open problem.
\end{remark}

\begin{remark}
	\label{rem:general-state-space}
	A natural question is whether \eqref{eq:mvsde-1}--\eqref{eq:mvsde-2} can be formulated for a different state space, that is, for $X,Y$ taking values in $\X$, $\mathcal{Y}$ rather than $\R^d$. Indeed, for equity models, $\X = \mathcal{Y} = \R_+$ is clearly a more natural choice for both the price process and the variance process. Heuristically, the theory will hold for more general $\X$ and $\mathcal{Y}$, provided that those sets are invariant under the dynamics \eqref{eq:mvsde-1}--\eqref{eq:mvsde-2} -- as well as under the regularized dynamics. It is, however, difficult to derive meaningful assumptions guaranteeing this kind of invariance, which prompts us to work with $\R^d$ instead.
\end{remark}

\section{Approximation of conditional expectations }
\label{sec:rkhs}

In this section we study the approximation $m^\lambda_A$ introduced in \eqref{minv} in more detail.
Throughout this section we fix an open set $\X\subset\R^d$, a measure $\nu \in\mathcal{P}_{2}(\X\times\X)$, and  impose the following relatively weak assumptions on the function $A\colon\X\to\R$ and the positive kernel $k\colon\X\times\X$ $\to\R$.
\begin{assumption}
	\label{AKi}
	The function $A$ has sublinear growth, i.e. there exists a constant $C>0$ such that for all $x\in\mathcal{X}$ one has
	$|A(x)|\le C(1+|x|)$.
\end{assumption}
\begin{assumption}
	\label{AKii} The kernel $k(\cdot,\cdot)$ is continuous on $\X\times\X$ and satisfies 
	$$
	0<k(x,x)\le C(1+|x|^{2})
	$$
	for some $C>0$.
\end{assumption}

It is easy to see that Assumption~\ref{AKii} implies for any $x\in\X$
\begin{equation}\label{knorm}
	\|k(x,\cdot)\|_\H^2=\langle k(x,\cdot),k(x,\cdot)\rangle_{\H}=k(x,x)\le C(1+|x|^2).
\end{equation}
Due to Assumption~\ref{AKii} and \cite[Lemma~4.33]{steinwart2008support}, $\H$ is a separable RKHS and one has for any $f\in\mathcal{H}$,
$x\in\mathcal{X}$,%
\begin{equation}\label{mb}
	| f(x)| =|\langle k(x,\cdot),f\rangle_{\H}|\le \| k(x,\cdot)\|_{\H} \|
	f\|_{\H}\le 
	{C}(1+|x|)\|
	f\|_{\H},
\end{equation}
where we also used \eqref{knorm}. Hence, every
$f\in\mathcal{H}$ has sublinear growth and, as a consequence, for any fixed
$\nu\in\mathcal{P}_{2}(\X\times\X)$, the objective functional in \eqref{minv} is finite.
It is also easy to see that \eqref{mb} and \eqref{knorm} imply that for any $x,y\in\X$
\begin{equation}\label{kzxbound}
	|k(x,y)|\le {C}(1+|x|)\|k(\cdot,y)\|_\H\le C(1+|x|) (1+|y|).
\end{equation}
Therefore, the Bochner integrals
\begin{equation*}
	c_A^{\nu}:=\int_{\X\times\X} k(\cdot,x)A(y)\nu(dx,dy),\quad \text{and}\quad
	\mathcal{C}^{\nu}f  :=\int_{\X\times\X} k(\cdot,x)f(x)\nu   
	(dx,dy).
\end{equation*}
are well defined functions in $\H$ for every $f\in\H$.
Moreover, it is clear that the operator ${\mathcal{C}^{\nu}\colon\H \to\H}$ is
symmetric and positive semidefinite since%
\begin{equation*}
	\left\langle g,\mathcal{C}^{\nu}f\right\rangle_{\mathcal{H}}   =\int_{\X}\left\langle
	g,k(\cdot,x)\right\rangle f(x)\nu(dx,\mathcal{X})
	=\int_{\X} g(x)f(x)\nu(dx,\mathcal{X}).
\end{equation*}
Thus, by the Hellinger--Toeplitz theorem (see, e.g., \cite[Section~III.5]{ReedSimon}), $\mathcal{C}^{\nu}$ is a boun\-ded self-adjoint linear operator on $\H$.
As a consequence,
for any $\lambda\ge0$, the operator $\mathcal{C}^{\nu}+\lambda I_{\mathcal{H}}$ is a
bounded self-adjoint operator on $\mathcal{H}$ with spectrum contained in the
interval $[ \lambda,\|\mathcal{C}^{\nu}\|
+\lambda]$.
Hence, if $\lambda>0$, then $(\mathcal{C}^{\nu}+\lambda I_{\mathcal{H}})^{-1}$
exists and is a bounded self-adjoint operator on $\mathcal{H}$ with norm
\begin{equation*}
	\|(\mathcal{C}^{\nu}+\lambda I_{\mathcal{H}})^{-1}\|_{\H}\le \lambda^{-1}.
\end{equation*}
We
are now ready to state the following useful representation for the solution to (\ref{minv}).
\begin{proposition}\label{prop: apcon}
	Under Assumptions~\ref{AKi}, \ref{AKii}, for any fixed $\nu\in\mathcal{P}_{2}(\X\times\X)$ and $\lambda>0,$ the
	solution to (\ref{minv}) can be represented as
	\begin{equation}\label{identm}
		m^{\lambda}_A(\cdot;\nu)=(\mathcal{C}^{\nu}+\lambda I_{\mathcal{H}})^{-1}c^{\nu}_A.
	\end{equation}
\end{proposition}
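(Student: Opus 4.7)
The plan is to identify the unique minimizer of the ridge regression functional
$$J(f) := \int_{\X\times\X} |A(y)-f(x)|^2\,\nu(dx,dy) + \lambda\|f\|_\H^2$$
through its first-order optimality condition and to recognise that condition as a linear operator equation on $\H$.

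First I would verify existence and uniqueness of the minimizer. Under Assumption~\ref{AKi} and estimate \eqref{mb}, both $A$ and every $f\in\H$ grow at most linearly, and $\nu$ has finite second moment, so $J(f)<\infty$ for every $f\in\H$. The regularisation term $f\mapsto \lambda\|f\|_\H^2$ is strictly convex and coercive on $\H$, while the data-fidelity term $f\mapsto \int |A(y)-f(x)|^2\nu(dx,dy)$ is convex (as a composition of a convex function with an affine map into $L_2$). Hence $J$ is strictly convex and coercive and admits a unique minimizer, which we denote $m^\lambda_A(\cdot;\nu)$.

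Next I would derive the first-order condition. For arbitrary $f,h\in\H$, a direct expansion gives
\begin{align*}
J(f+h)-J(f) &= -2\int_{\X\times\X}(A(y)-f(x))h(x)\,\nu(dx,dy) + 2\lambda\langle f,h\rangle_\H \\
&\quad+ \int_{\X\times\X} h(x)^2\,\nu(dx,dy) + \lambda\|h\|_\H^2.
\end{align*}
Replacing $h$ by $\varepsilon h$ and letting $\varepsilon\to 0$ at the minimizer forces the linear part to vanish for every $h\in\H$, so
$$\int_{\X\times\X}(A(y)-f(x))h(x)\,\nu(dx,dy) = \lambda\langle f,h\rangle_\H.$$

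Finally I would rewrite this variational identity as an operator equation in $\H$ by invoking the reproducing property $h(x)=\langle h,k(\cdot,x)\rangle_\H$ and pulling the bounded linear functional $\langle h,\cdot\rangle_\H$ out of the Bochner integrals. Since $c^\nu_A=\int k(\cdot,x)A(y)\,\nu(dx,dy)$ and $\mathcal{C}^\nu f=\int k(\cdot,x)f(x)\,\nu(dx,dy)$ are well-defined elements of $\H$ (as established just before the proposition), this interchange yields
$$\langle h,\, c^\nu_A-\mathcal{C}^\nu f\rangle_\H = \lambda\langle h,f\rangle_\H \qquad \text{for all } h\in\H,$$
i.e. $(\mathcal{C}^\nu+\lambda I_\H)f=c^\nu_A$. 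Because $\lambda>0$ makes $\mathcal{C}^\nu+\lambda I_\H$ boundedly invertible on $\H$ (with inverse of norm at most $\lambda^{-1}$, as noted above), the representation \eqref{identm} follows.

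The only mildly technical point is the interchange of the inner product with the Bochner integrals in the last step; this is standard for bounded linear functionals acting on Bochner-integrable Hilbert-space-valued maps and uses exactly the integrability already established via \eqref{knorm} and \eqref{kzxbound} — no new estimates are needed. Everything else is routine convex analysis.
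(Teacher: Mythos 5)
Your proof is correct, and it takes a genuinely different route from the paper. The paper fixes a total orthonormal system $(e_i)_{i\in I}$ of $\H$, expresses the objective as a quadratic form in the coefficient vector $\beta\in\ell_2(I)$ via the matrix $B^\nu=(\langle e_i,\mathcal{C}^\nu e_j\rangle_\H)$ and the vector $\gamma^\nu=(\langle e_i,c^\nu_A\rangle_\H)$, completes the square, and reads off the minimizer $\beta^\nu=(B^\nu+\lambda I)^{-1}\gamma^\nu$, which is then translated back to \eqref{identm}. You instead argue coordinate-free: strict convexity plus coercivity give existence and uniqueness directly, the Gateaux derivative at the minimizer yields the normal equation, and the reproducing property converts that variational identity into the operator equation $(\mathcal{C}^\nu+\lambda I_\H)f=c^\nu_A$. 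Your route has the advantage of making existence of the minimizer explicit and avoiding any choice of basis; the paper's route has the advantage of setting up exactly the $\ell_2(I)$ machinery ($B^\nu$, $\gamma^\nu$, $\beta^\nu$) that it reuses verbatim in the proof of \cref{lip}, so the two proofs trade locality of argument against global economy of setup. One small thing worth spelling out if you wrote this up fully: the interchange $\int\langle h,k(\cdot,x)\rangle_\H(A(y)-f(x))\,\nu(dx,dy)=\langle h,\int k(\cdot,x)(A(y)-f(x))\,\nu(dx,dy)\rangle_\H$ requires the integrand to be Bochner integrable, which follows from $\|(A(y)-f(x))k(\cdot,x)\|_\H=|A(y)-f(x)|\sqrt{k(x,x)}\le C(1+|x|)(1+|y|)$ together with $\nu\in\mathcal{P}_2$; you flag this but it is the one estimate that should actually be displayed.
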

This representation
may be seen as an infinite sample version of  the usual solution representation for a ridge regression problem based on finite samples.
We thus consider it as not essentially new, but, in order to keep our paper as self contained as possible we present a proof of it in \cref{sec:proof}.
\cref{prop: apcon} allows us to prove Lipschitz continuity of $m^\lambda_A$, that is \cref{lip}.
\par
Let us now proceed with investigating when the function $m^{\lambda}_A=m^{\lambda}_A(\cdot;\nu)$ is a
``good'' approximation to the true conditional expectation
\begin{equation}\label{mnuu}
	m_A=m_A(x;\nu):=\mathsf{E}_{(X,Y)\sim\nu}[  A(Y)|X=x]
\end{equation}
for small enough  $\lambda>0$.
Consider the Hilbert space $\mathcal{L}_2^\nu:=L_2(\X, \nu(dx,\mathcal{X}))$ with $\nu
(U,\mathcal{X}) \coloneqq \nu(U \times \mathcal{X}) >0.$
For $f\in\L_2^\nu$ put
\begin{equation} \label{to}
	T^\nu f:=\int_\X k(\cdot,x)f(x)\nu(dx,\X).
\end{equation}
Recalling \eqref{kzxbound}, it is easy to see that $T^\nu$ is a linear operator $\L_2^\nu\to\L_2^\nu$. Note that
that  $\mathcal{H}\subset \mathcal{L}_{2}^\nu$ due to \eqref{mb}; thus, $\mathcal{C}^\nu$ is the restriction of $T^\nu$ to $\H$.
Further, since 
$$|k(x,y)| \leq \sqrt{k(x,x)} \sqrt{k(y,y)},
$$
the kernel $k$ is Hilbert-Schmidt
on $\mathcal{L}_{2}(\mathcal{X\times X},\nu
(dx,\mathcal{X}) \otimes \nu(dy,\mathcal{X}))$, i.e.
\begin{equation*}
	\int k^{2}(x,y)\nu(dx,\mathcal{X})\nu(dy,\mathcal{X})<\infty,
\end{equation*}
due to Assumption~\ref{AKii}.
As a consequence of the standard results from functional analysis, one then has (see, for example, \cite[Section~VI]{ReedSimon}):
\begin{enumerate}[(i)]
	\item the operator $T^\nu$ is self-adjoint and compact;
	\item there exists an orthonormal system $\left(  a_{n}\right)  _{n\in\mathbb{N}}$ in $\mathcal{L}^\nu_{2}$ of 
	eigenfunctions corresponding to nonnegative eigenvalues $\sigma_{n}$ of $T^\nu$ and
	$\sigma_{1}\ge\sigma_{2}\ge\sigma_{3}\ge\ldots$;
	\item If $J:=\{n\in\N:$ $\sigma_{n}>0\},$ one has
	\begin{equation}\label{compop}
		T^\nu f=\sum_{n\in J}\sigma_{n}\left\langle f,a_{n}\right\rangle _{\mathcal{L}^\nu
			_{2}}a_{n},\text{ \ \ }f\in\mathcal{L}_{2}^\nu%
	\end{equation}
	with $\lim_{n\rightarrow\infty}\sigma_{n}=0$ if $J=\N.$
\end{enumerate}

A generalization of Mercer's theorem to unbounded domains  \cite{Sun2005}
implies the following statement.
\begin{proposition}
	\label{mercer0}Let $k$ be a kernel satisfying Assumption~\ref{AKii} and 
	assume that  $\nu(\cdot,\mathcal{X})$  is a nondegenerate Borel measure. That is, for every open set $U\subset\mathcal{X}$ one has $\nu(U,\mathcal{X})>0$. Then one may take the eigenfunctions $a_n$ in (\ref{compop}) to be continuous and   $k$ has a series representation%
	\begin{equation*}
		k(x,y)=\sum_{n\in J}\sigma_{n}a_{n}(x)a_{n}(y),\quad x,y\in\X \label{invar}
	\end{equation*}
	with uniform convergence on compact sets.
	Moreover, $(\widetilde{a}_{n})_{n\in
		J}$ with $\widetilde{a}_{n}:=\sqrt{\sigma_{n}}a_{n}$ is an orthonormal basis of
	$\mathcal{H}$ 
	and the scalar product in $\mathcal{H}$ takes the form%
	\begin{equation}
		\langle f,g\rangle _{\H}=\sum_{n\in J}\frac{\langle f,a_{n}\rangle_{\mathcal{L}_2^\nu}
			\langle g,a_{n}\rangle_{\mathcal{L}_2^\nu}}{\sigma_n}\text{ \ \ for \ \ }f,g\in\mathcal{H}.%
		\label{scpr}
	\end{equation}
\end{proposition}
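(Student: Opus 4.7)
The plan is to invoke the Mercer-type theorem of \cite{Sun2005} as a black box for the hard analytic content, and then derive the RKHS consequences by direct computation using the reproducing property and Bochner integration. The main obstacle is obtaining continuous representatives of the eigenfunctions together with the uniform convergence on compacts of $\sum_{n\in J}\sigma_n a_n(x)a_n(y)$ to $k(x,y)$: this is exactly what Sun's theorem supplies, once one verifies that Assumption~\ref{AKii} (continuity of $k$ with $0<k(x,x)\le C(1+|x|^2)$) combined with the nondegeneracy of $\nu(\cdot,\X)$ matches his hypotheses. I take this step as given.

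First I will confirm that $a_n\in\H$ for each $n\in J$. From the eigenvalue identity $\sigma_n a_n=T^\nu a_n=\int_\X k(\cdot,y)a_n(y)\,\nu(dy,\X)$, I read the right-hand side as a Bochner integral of the $\H$-valued map $y\mapsto k(\cdot,y)a_n(y)$. Its integrability follows from $\|k(\cdot,y)\|_\H=\sqrt{k(y,y)}\le C(1+|y|)$, from $a_n\in\L_2^\nu$, and from $\nu\in\mathcal{P}_2(\X\times\X)$ via Cauchy--Schwarz. Hence $\sigma_n a_n\in\H$, and since $\sigma_n>0$, also $a_n\in\H$.

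Next, orthonormality of $(\tilde a_n)_{n\in J}$ in $\H$ follows from the reproducing property, using that the bounded linear functional $\langle a_n,\cdot\rangle_\H$ commutes with the Bochner integral: for $n,m\in J$,
\begin{align*}
\sigma_m\langle a_n,a_m\rangle_\H &= \langle a_n,T^\nu a_m\rangle_\H = \int_\X \langle a_n,k(\cdot,x)\rangle_\H\, a_m(x)\,\nu(dx,\X)\\
&= \int_\X a_n(x)a_m(x)\,\nu(dx,\X) = \delta_{nm},
\end{align*}
so $\langle\tilde a_n,\tilde a_m\rangle_\H=\sqrt{\sigma_n\sigma_m}\langle a_n,a_m\rangle_\H=\delta_{nm}$. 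For completeness, I show that the partial sums of $\sum_{n\in J}\tilde a_n(x)\tilde a_n$ converge to $k(x,\cdot)$ in $\H$: using the orthonormality just proved,
\begin{equation*}
\Bigl\|k(x,\cdot)-\sum_{\substack{n\in J\\ n\le N}}\tilde a_n(x)\tilde a_n\Bigr\|_\H^2 = k(x,x)-\sum_{\substack{n\in J\\ n\le N}}\tilde a_n(x)^2 \xrightarrow[N\to\infty]{} 0
\end{equation*}
by the Mercer series evaluated at the diagonal point $y=x$. For any $f\in\H$, the reproducing property then yields $f(x)=\langle f,k(x,\cdot)\rangle_\H=\sum_{n\in J}\tilde a_n(x)\langle f,\tilde a_n\rangle_\H$, so no nonzero $f\in\H$ is orthogonal to every $\tilde a_n$, and $(\tilde a_n)_{n\in J}$ is an orthonormal basis of $\H$.

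Finally, formula \eqref{scpr} drops out by Parseval: by the same Bochner computation as above with $f$ in place of $a_n$, one has $\sigma_n\langle f,a_n\rangle_\H=\langle f,a_n\rangle_{\L_2^\nu}$, so $\langle f,\tilde a_n\rangle_\H=\sqrt{\sigma_n}\langle f,a_n\rangle_\H=\langle f,a_n\rangle_{\L_2^\nu}/\sqrt{\sigma_n}$, and substituting this identity into $\langle f,g\rangle_\H=\sum_{n\in J}\langle f,\tilde a_n\rangle_\H\langle g,\tilde a_n\rangle_\H$ produces exactly \eqref{scpr}.
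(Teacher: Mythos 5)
Your proof is correct. The paper itself provides no argument for this proposition --- it is stated immediately after the remark that a Mercer theorem for unbounded domains \cite{Sun2005} ``implies the following statement,'' and the proofs section contains no further derivation. You take the same starting point (Sun's theorem supplies continuity of the eigenfunctions and the locally uniform Mercer expansion) but then explicitly supply the RKHS bookkeeping the paper leaves to the reader: the Bochner-integral argument that $a_n=\sigma_n^{-1}T^\nu a_n\in\H$ for $n\in J$, the reproducing-property computation $\sigma_m\langle a_n,a_m\rangle_\H=\langle a_n,a_m\rangle_{\L_2^\nu}=\delta_{nm}$ giving orthonormality of $(\widetilde a_n)$, the expansion $\bigl\|k(x,\cdot)-\sum_{n\le N}\widetilde a_n(x)\widetilde a_n\bigr\|_\H^2=k(x,x)-\sum_{n\le N}\widetilde a_n(x)^2\to 0$ yielding completeness, and the identity $\langle f,\widetilde a_n\rangle_\H=\langle f,a_n\rangle_{\L_2^\nu}/\sqrt{\sigma_n}$ which turns Parseval into \eqref{scpr}. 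All of these steps are sound, including the point that is easy to gloss over: the continuous embedding $\H\hookrightarrow\L_2^\nu$ (via \eqref{mb} and $\nu\in\PP_2$) guarantees that the $\H$-valued and $\L_2^\nu$-valued Bochner integrals defining $T^\nu a_n$ agree, which is what legitimizes promoting $T^\nu a_n$ from $\L_2^\nu$ to $\H$. This adds useful detail beyond what the paper records.
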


Now we are ready to present the main result of this section, which quantifies the convergence properties of
$m^\lambda_A(\cdot,\nu)$ as $\lambda\to0$ for a
fixed measure $\nu$. Recall the notation \eqref{mnuu}.
Let $P_{\overline{\mathcal{H}}}$ denote the orthogonal projection in $\mathcal{L}%
_{2}^{\nu}$ onto $\overline{\mathcal{H}},$ i.e. the closure of $\mathcal{H}$ in $\mathcal{L}%
_{2}^{\nu}.$ Hence for any $f\in\mathcal{L}%
_{2}^{\nu}$,
\begin{equation}
	P_{\overline{\mathcal{H}}}f=\sum_{n\in J}\left\langle f,a_{n}\right\rangle_{\mathcal{L}_2^\nu}
	a_{n}\qquad\text{and}\qquad\left\langle P_{\overline{\mathcal{H}}}%
	f,a_{m}\right\rangle_{\mathcal{L}_2^\nu} =\left\langle f,a_{m}\right\rangle_{\mathcal{L}_2^\nu} ,\quad m\in
	J,\label{QRef}%
\end{equation}
since $(a_{n})_{n\in J}$ is an orthonormal system in $\mathcal{L}_{2}^{\nu}$.

\begin{theorem}
	\label{L2c}  Assume that the kernel $k$ satisfies
	Assumption~\ref{AKii}, $\nu(\cdot,\mathcal{X})$  is a nondegenerate Borel measure, and that $m_{A}(\cdot;\nu)\in\mathcal{L}_{2}^\nu$ (for instance, because $A$ is bounded measurable). Then for any $\lambda>0$
	\begin{equation}
		\left\|P_{\overline{\mathcal{H}}}m_{A}(\cdot;\nu)-m_{A}^{\lambda}(\cdot;\nu)\right\|_{\mathcal{L}_2^\nu}^{2}=\sum_{n\in J}\frac{\lambda^{2}}{\left(  \sigma
			_{n}+\lambda\right)  ^{2}}\left\langle m_{A}(\cdot; \nu),a_{n}\right\rangle _{\mathcal{L}_2^\nu}^{2}. \label{conver}%
	\end{equation}
	In particular, $\left\| P_{\overline{\mathcal{H}}}m_{A}(\cdot;\nu)-m_{A}^{\lambda}(\cdot;\nu)\right\|_{\mathcal{L}_2^\nu}\rightarrow0$ as $\lambda\searrow0$.
	If, moreover, $P_{\overline{\mathcal{H}}}m_{A}(\cdot;\nu)\in\mathcal{H}$ one has%
	\begin{equation}
		\left\| P_{\overline{\mathcal{H}}}m_{A}(\cdot;\nu)-m(\cdot;\nu)_{A}^{\lambda}(\cdot;\nu)\right\|
		_{\mathcal{H}}^{2}=\sum_{n\in J}\frac{\lambda^{2}}{\left(  \sigma_{n}%
			+\lambda\right)  ^{2}\sigma_{n}}\left\langle m_{A}(\cdot;\nu),a_{n}\right\rangle
		_{\mathcal{L}_2^\nu}^{2},\label{conver1}%
	\end{equation}
	and thus $\left\| P_{\overline{\mathcal{H}}}m_{A}(\cdot;\nu)-m_{A}^{\lambda
	}(\cdot;\nu)\right\|_{\mathcal{H}}\rightarrow0$ for $\lambda\searrow0.$
\end{theorem}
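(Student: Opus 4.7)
The plan is to diagonalise the regularised operator $\mathcal{C}^\nu + \lambda I_\mathcal{H}$ using the Mercer expansion from \cref{mercer0} and then to read off the discrepancy between $m_A^\lambda$ and $P_{\overline{\mathcal{H}}} m_A$ coefficient by coefficient in the Mercer basis. The starting point is the representation $m_A^\lambda(\cdot;\nu) = (\mathcal{C}^\nu + \lambda I_\mathcal{H})^{-1} c_A^\nu$ from \cref{prop: apcon}.

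First, I would show that $(\widetilde{a}_n)_{n \in J}$ with $\widetilde{a}_n = \sqrt{\sigma_n} a_n$ are eigenfunctions of $\mathcal{C}^\nu$ with the same eigenvalues $\sigma_n$ as those of $T^\nu$. Indeed, since $\mathcal{C}^\nu$ is the restriction of $T^\nu$ (see \eqref{to}) to $\mathcal{H}$ and $T^\nu a_n = \sigma_n a_n$, one has $\mathcal{C}^\nu \widetilde{a}_n = \sigma_n \widetilde{a}_n$. In particular, the resolvent acts on the basis by $(\mathcal{C}^\nu + \lambda I_\mathcal{H})^{-1} \widetilde{a}_n = (\sigma_n + \lambda)^{-1} \widetilde{a}_n$. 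Next, using the reproducing property of $k$ together with Fubini to interchange the Bochner integral defining $c_A^\nu$ with the inner product, I would compute the Fourier coefficients
\begin{equation*}
\langle c_A^\nu, \widetilde{a}_n \rangle_\mathcal{H} = \int_{\X \times \X} \widetilde{a}_n(x) A(y)\, \nu(dx, dy) = \sqrt{\sigma_n}\, \langle m_A(\cdot;\nu), a_n \rangle_{\mathcal{L}_2^\nu},
\end{equation*}
where the second equality uses the definition of $m_A$ and the disintegration of $\nu$. Applying the resolvent termwise and rewriting in the $(a_n)$-system gives
\begin{equation*}
m_A^\lambda(\cdot;\nu) = \sum_{n \in J} \frac{\sigma_n}{\sigma_n + \lambda}\, \langle m_A(\cdot;\nu), a_n \rangle_{\mathcal{L}_2^\nu}\, a_n.
\end{equation*}

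Combining with $P_{\overline{\mathcal{H}}} m_A(\cdot;\nu) = \sum_{n \in J} \langle m_A(\cdot;\nu), a_n \rangle_{\mathcal{L}_2^\nu} a_n$ from \eqref{QRef} gives
\begin{equation*}
P_{\overline{\mathcal{H}}} m_A(\cdot;\nu) - m_A^\lambda(\cdot;\nu) = \sum_{n \in J} \frac{\lambda}{\sigma_n + \lambda}\, \langle m_A(\cdot;\nu), a_n \rangle_{\mathcal{L}_2^\nu}\, a_n,
\end{equation*}
so that identity \eqref{conver} follows immediately by taking $\mathcal{L}_2^\nu$-norms and invoking the orthonormality of $(a_n)_{n \in J}$. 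The $\mathcal{H}$-norm identity \eqref{conver1} is obtained by the same calculation, now using the scalar product formula \eqref{scpr} which introduces the extra factor $\sigma_n^{-1}$ in the denominator.

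The convergence $\|P_{\overline{\mathcal{H}}} m_A - m_A^\lambda\|_{\mathcal{L}_2^\nu} \to 0$ as $\lambda \searrow 0$ is then a one-line dominated convergence argument: each factor $\lambda^2/(\sigma_n + \lambda)^2 \to 0$ pointwise on $J$ and is dominated by $1$, while the summable envelope is $\langle m_A, a_n \rangle_{\mathcal{L}_2^\nu}^2$, whose sum is bounded by $\|m_A\|_{\mathcal{L}_2^\nu}^2 < \infty$. The $\mathcal{H}$-version follows analogously, the envelope now being $\langle m_A, a_n \rangle_{\mathcal{L}_2^\nu}^2 / \sigma_n$, summable under the extra hypothesis $P_{\overline{\mathcal{H}}} m_A \in \mathcal{H}$ in view of \eqref{scpr}. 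The only mildly delicate step is the Fubini/Bochner interchange used to compute $\langle c_A^\nu, \widetilde{a}_n \rangle_\mathcal{H}$, but this is routine given the growth bounds \eqref{mb}, \eqref{kzxbound}, Assumption~\ref{AKi}, and the finite second moments of $\nu$; the rest is spectral bookkeeping in the Mercer basis.
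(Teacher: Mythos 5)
Your proof is correct and follows essentially the same route as the paper: diagonalise $\mathcal{C}^\nu + \lambda I_\mathcal{H}$ in the Mercer basis $(\widetilde{a}_n)$, compute the coefficients of $c_A^\nu$ in that basis, relate them to $\langle m_A, a_n\rangle_{\mathcal{L}_2^\nu}$ via the disintegration of $\nu$, and compare termwise with $P_{\overline{\mathcal{H}}} m_A$, finishing with dominated convergence. The only cosmetic difference is that you express the sums directly in the $(a_n)$-system while the paper first works in $(\widetilde{a}_n)$ and converts at the end; the underlying spectral bookkeeping is identical.
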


\cref{L2c} establishes convergence of $m_{A}^{\lambda}(\cdot;\nu)$ as $\lambda\to0$ though without a rate. Its proof is placed in \cref{sec:proof}. Additional assumptions are needed to guarantee a certain convergence rate. This is done in the following corollary.
\begin{corollary}
	\label{exc} Suppose that the conditions of
	\cref{L2c} are satisfied, and that moreover for some $\theta\in(0,1]$,
	\begin{equation}
		\sum_{n\in J}\sigma_{n}^{-\theta}\left\langle 
		m_A(\cdot;\nu),a_{n}\right\rangle _{\mathcal{L}_2^\nu}^{2}<\infty. \label{condH}%
	\end{equation}
	Then
	\begin{align}
		\left\|  P_{\overline{\mathcal{H}}}m_A(\cdot;\nu)  -m^{\lambda}_A(\cdot;\nu)\right\| _{\mathcal{L}_2^\nu}^2
		\leq\left(  1-\frac{\theta}{2}\right)  ^{2}\left(  \frac{\lambda\theta
		}{2-\theta}\right)  ^{\theta}\sum_{n\in J}\sigma_{n}^{-\theta}\left\langle
		m_{A}(\cdot;\nu),a_{n}\right\rangle _{\mathcal{L}_2^\nu}^{2}.\label{corth}
	\end{align}
	In particular, if $\theta=1$ then $ P_{\overline{\mathcal{H}}}m_A  \in\mathcal{H}$,  and we get
	\begin{equation}\label{res2}
		\left\|  P_{\overline{\mathcal{H}}}m_A(\cdot;\nu)  -m^{\lambda}_A(\cdot;\nu)\right\Vert _{\mathcal{L}_2^\nu}
		\leq \frac{\sqrt{\lambda}}2\left\| P_{\overline{\mathcal{H}}}m_A(\cdot;\nu)  \right\|_{\mathcal{H}}.
	\end{equation}
\end{corollary}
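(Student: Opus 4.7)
The plan is to start from the spectral identity \eqref{conver} established in \cref{L2c},
\[
\left\| P_{\overline{\mathcal{H}}}m_{A}(\cdot;\nu) - m_{A}^{\lambda}(\cdot;\nu) \right\|_{\mathcal{L}_2^\nu}^{2} = \sum_{n\in J} \frac{\lambda^{2}}{(\sigma_n+\lambda)^{2}} \left\langle m_{A}(\cdot;\nu), a_n \right\rangle_{\mathcal{L}_2^\nu}^{2},
\]
and estimate the coefficients $\lambda^{2}/(\sigma_n+\lambda)^{2}$ uniformly in $n$ after factoring out the weight $\sigma_n^{-\theta}$ that will be absorbed into the series appearing in \eqref{condH}. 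Writing $\lambda^{2}/(\sigma_n+\lambda)^{2} = \phi(\sigma_n)\cdot \sigma_n^{-\theta}$ with $\phi(t):= \lambda^{2}t^{\theta}/(\lambda+t)^{2}$, the estimate \eqref{corth} will follow provided $\sup_{t\ge 0} \phi(t)$ equals $(1-\theta/2)^{2}(\lambda\theta/(2-\theta))^{\theta}$.

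Next I would compute this supremum directly by one-variable calculus. A routine differentiation yields
\[
\phi'(t) = \frac{\lambda^{2} t^{\theta-1}\bigl[\theta\lambda-(2-\theta)t\bigr]}{(\lambda+t)^{3}},
\]
whose only positive zero is $t^{*} = \theta\lambda/(2-\theta)$; since $\phi(0)=0$ and $\phi(t)\sim \lambda^{2}t^{\theta-2}\to 0$ as $t\to\infty$ (recall $\theta\le 1$), the point $t^{*}$ is the global maximum on $[0,\infty)$. Substituting, and using $\lambda+t^{*} = 2\lambda/(2-\theta)$, gives
\[
\phi(t^{*}) = \frac{(2-\theta)^{2}}{4}\left(\frac{\lambda\theta}{2-\theta}\right)^{\theta} = \left(1-\frac{\theta}{2}\right)^{2}\left(\frac{\lambda\theta}{2-\theta}\right)^{\theta},
\]
and inserting the bound $\phi(\sigma_n)\le \phi(t^{*})$ into the spectral expansion yields \eqref{corth}.

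For the special case $\theta=1$, plugging $\theta=1$ into $\phi(t^{*})$ gives $\lambda/4$, so the just-proved estimate becomes
\[
\left\|P_{\overline{\mathcal{H}}} m_A(\cdot;\nu)-m_{A}^{\lambda}(\cdot;\nu)\right\|_{\mathcal{L}_2^\nu}^{2} \le \frac{\lambda}{4}\sum_{n\in J}\frac{\langle m_A(\cdot;\nu),a_n\rangle_{\mathcal{L}_2^\nu}^{2}}{\sigma_n}.
\]
To identify the right-hand side with $\|P_{\overline{\mathcal{H}}}m_A(\cdot;\nu)\|_{\mathcal{H}}^{2}$, I would use the projection identity $\langle m_A,a_n\rangle_{\mathcal{L}_2^\nu} = \langle P_{\overline{\mathcal{H}}}m_A,a_n\rangle_{\mathcal{L}_2^\nu}$ from \eqref{QRef} together with the Mercer-type scalar product formula \eqref{scpr} applied with both entries equal to $P_{\overline{\mathcal{H}}}m_A(\cdot;\nu)$; the $\theta=1$ instance of \eqref{condH} then ensures that the series converges, hence $P_{\overline{\mathcal{H}}}m_A(\cdot;\nu)\in\mathcal{H}$, and the sum equals its squared $\mathcal{H}$-norm. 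Taking square roots delivers \eqref{res2}. No real obstacle is anticipated: the argument reduces to a one-dimensional optimization plus bookkeeping in the Mercer basis, the serious analytic work already having been carried out in \cref{L2c}.
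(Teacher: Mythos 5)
Your proposal is correct and follows the same route as the paper's proof: both start from the spectral identity \eqref{conver}, factor out $\sigma_n^{-\theta}$, bound the remaining factor by maximizing $t\mapsto\lambda^2 t^\theta/(\lambda+t)^2$ (your calculus just spells out what the paper states without derivation), and for $\theta=1$ invoke \eqref{scpr} and \eqref{QRef} to identify the weighted series with $\|P_{\overline{\mathcal{H}}}m_A\|_{\mathcal{H}}^2$. The only difference is level of detail — the algebra in your derivative and the evaluation at $t^*$ check out — so this is the paper's argument with the bookkeeping made explicit.
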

\begin{proof}
	Inequality \eqref{corth} follows from \eqref{conver}, \eqref{condH}, and the fact that the maximum of the function $x\mapsto\lambda^{2}x^{\theta}/(x+\lambda)^2$, $x>0$, is equal to
	$$
	{(1-\theta/2)  ^2(\lambda\theta/(2-\theta))  ^{\theta}}.
	$$
	Inequality \eqref{res2} follows from \eqref{scpr}, \eqref{QRef} and \eqref{corth}.
\end{proof}

\begin{remark} If operator $T^{\nu}$ defined in (\ref{to}) is injective, that is, $T^{\nu}f = 0$ for $f\in\mathcal{L}_{2}^{\nu}$ implies $f = 0,$ $\nu$-a.s., then
	$P_{\overline{\mathcal{H}}}=I_{\mathcal{L}_{2}^{\nu}}$. In this case, $J=\N$ and
	\cref{L2c} and \cref{exc} quantify the convergence to the true conditional
	expectation. A sufficient condition for $T^{\nu}$ to be injective is that the
	kernel $k$ is \emph{integrally strictly positive definite (ispd),} in the
	sense that%
	\[
	\int_{\mathcal{X}\times\mathcal{X}}k(x,y)\mu(dx)\mu(dy)>0
	\]
	for all non-zero signed Borel measures $\mu$ defined on $\mathcal{X}$. Indeed,
	for any $f\in\mathcal{L}_{2}^{\nu}$ we may define a signed Borel measure
	$\mu_{f}(A):=\int_{A}f(x)\nu(dx,\mathcal{X}),$ $A\in\mathcal{B}(\mathcal{X}),$
	which is finite since $\left\vert \mu_{f}(A)\right\vert \leq\int f^{2}%
	(x)\nu(dx,\mathcal{X})<\infty$. Hence, if $k$ is an ispd kernel then $T^{\nu
	}f = 0$ implies
	\begin{align*}
		0  & =\left\langle T^{\nu}f,f\right\rangle _{\mathcal{L}_{2}^{\nu}}%
		=\int_{\mathcal{X}\times\mathcal{X}}k(y,x)f(x)f(y)\nu(dx,\mathcal{X}%
		)\nu(dy,\mathcal{X})\\
		& =\int_{\mathcal{X}\times\mathcal{X}}k(y,x)\mu_{f}(dx)\mu_{f}(dy)
	\end{align*}
	which in turn implies $\mu_{f}=0,$ i.e. $f=0,$ $\nu$-a.s. Further it should be
	noted that any ispd kernel is strictly positive definite in the usual sense,
	but the converse is not true. Examples of ispd kernels are Gaussian kernels,
	Laplace kernels, and many more. For details on ispd kernels we refer to
	\cite{SGFSL}.
\end{remark}

Thus, in this section we have shown that, under certain conditions, $m^{\lambda}_A(\cdot,\nu)$ may converge
at least in ${\mathcal{L}_2^\nu}$-sense to the true conditional expectation $m_A(\cdot,\nu)$ as ${\lambda\to0}$. This makes the heuristic discussion around \eqref{mla0} and \eqref{mla} in \cref{sec:introduction} more rigorous.

\begin{remark}
	Note that the measure $\wh\mu_t$ in the solution of \eqref{newsyst1}--\eqref{newsyst3} depends on $\lambda$, so in fact
	$\wh\mu_t=\wh\mu_t^\lambda$. Therefore, even when $m^{\lambda}_A(\cdot,\nu)\to m_A(\cdot,\nu)$ for fixed $\nu$ and ${\lambda\downarrow0}$, the question whether $m^{\lambda}_{A_i}(\cdot,\wh \mu_t^\lambda)$ converges in some sense is still not answered. We believe that this question
	is intimately linked to the problem of existence of a solution to \eqref{eq:mvsde-1}--\eqref{eq:mvsde-2}. As already explained, this is an unsolved open problem
	and therefore considered out of our scope. However, loosely speaking, assuming that the latter system has indeed a solution (in some sense) with solution measure $\mu_t$ say, it is natural to expect that for a suitable  ``rich enough'' RKHS,
	$m^{\lambda}_{A_i}(\cdot,\mu_t)\to m_{A_i}(\cdot, \mu_t)$ (the true conditional expectation) as $\lambda\searrow0$.
\end{remark}

\section{Numerical algorithm}
\label{sec:alg}

Let us now describe in detail our numerical algorithm to construct solutions to
\eqref{eq:X-dynamics}.  We begin by discussing an efficient way of calculating $m^\lambda_A$.

\subsection{Estimation of the conditional expectation}\label{subsec:est}
Let us recall that in order to solve the particle system \eqref{eq:particle-systema}, \eqref{eq:particle-systemb}, \eqref{eq:particle-systemc}
we need to compute
\begin{equation}\label{minv_a}
	m^\lambda_A(\cdot; \mu^N_t) = \argmin_{f \in \mathcal{H}} \Bigl\{ \frac{1}{N} \sum_{n=1}^N |A(Y_t^{N,n}) - f(X_t^{N,n})|^2 + \lambda \norm{f}_{\mathcal{H}}^2 \Bigr\}.
\end{equation}
for $t$ belonging to a certain partition of $[0,T]$ and fixed large $N\in\N$; here $A=A_1$ or $A=A_2$.
It follows from the representer theorem for RKHS \cite[Theorem~1]{scholkopf2001generalized} that $m^\lambda_A$ has the following representation:
\begin{equation}
	\label{eq:m-lambda-N-representer}
	m^\lambda_A(\cdot; \mu^N_t) = \sum_{i=1}^N \alpha_i k(X_t^{N,i}, \cdot),
\end{equation}
for some $\alpha = (\alpha_1, \ldots, \alpha_N)^T \in \mathbb{R}^N$. Note that the optimal $\alpha$ can be calculated explicitly by plugging the representation~\eqref{eq:m-lambda-N-representer} into the above minimization problem in place of $f$ and minimizing over $\alpha$. However, computing the optimal $\alpha$ directly takes  $O(N^3)$ operations, which is prohibitively expensive keeping in mind that the number of particles $N$ is going to be very large. Furthermore, even  evaluating \eqref{eq:m-lambda-N-representer} at $X_t^{N,n}$, $n=1,\ldots, N$, for a given $\alpha\in\R^N$ is rather expensive, it requires $O(N^2)$ operations, and thus is impossible to implement.

To develop an efficient algorithm, let us note that many particles $X_t^{N,i}$ --- and, as a consequence, the implied basis functions $k(X_t^{N,i}, \cdot)$ --- will be close to each other. Therefore, we can considerably reduce the computational cost by only using ${L\ll N}$ rather than $N$ basis functions as suggested in~\eqref{eq:m-lambda-N-representer}.
More precisely, we choose $Z^1, \ldots, Z^L$ among $X_t^{N,1}, \ldots, X_t^{N,N}$ -- e.g., by random choice or taking every $\frac{N}{L}$th point among the ordered sequence  $X_t^{N,(1)}, \ldots, X_t^{N,(N)}$ in case when $X$ is one-dimensional -- and approximate
\begin{equation*}
	\sum_{i=1}^N \alpha_i k(X_t^{N,i}, \cdot) \approx \sum_{j=1}^L \beta_j k(Z^j, \cdot),
\end{equation*}
where  $\beta = (\beta_1, \ldots, \beta_L)^T \in \mathbb{R}^L$.
It is easy to see that
\begin{align*}
	\Bigl\|\sum_{j=1}^L \beta_j k(Z^j,\cdot)\Bigr\|_{\H}^2&=
	\Bigl\langle\sum_{j=1}^L \beta_j k(Z^j,\cdot),\sum_{j=1}^L \beta_j k(Z^j,\cdot)\Bigr\rangle_{\H}\\
	&=\sum_{j,k=1}^L \beta_j\beta_k \langle k(Z^j,\cdot),k(Z^k,\cdot)\rangle_{\H}\\
	&=\sum_{j,k=1}^L \beta_j\beta_k  k(Z^j,Z^k)=\beta^\top R \beta,
\end{align*}
where $R:= (k(Z^j,Z^k))_{j,k=1,\ldots,L}$ is an $L\times L$ matrix. Thus, recalling~\eqref{minv_a}, we see that we have to solve
\begin{equation*}
	\argmin_{\beta\in\R^L} [\frac1N(G-K\beta)^\top(G-K\beta)+\lambda\beta^\top R \beta],
\end{equation*}
where $G:=(A(Y_t^{N,n}))_{n=1,\ldots,N}$, $K \coloneqq (k(Z^j,X_t^{N,n}))_{n=1,\ldots, N, j=1,\ldots,L}$ is an $N\times L$ matrix. Differentiating with respect to $\beta$, we get that the optimal value $\widehat{\beta}=\widehat{\beta}((X_t^{N}),(Y_t^{N}))$ satisfies
\begin{equation}\label{maineq}
	(K^\top K+N\lambda R)\widehat{\beta}=K^\top G,
\end{equation}
and we approximate expectation as
\begin{equation}\label{phidef}
	m^\lambda_A(x; \mu^N_t) \approx\sum_{j=1}^L \widehat{\beta}_j k(Z^j,x) \eqqcolon \widehat{m}_A^\lambda(x; \mu^N_t).
\end{equation}
\begin{remark}
	\label{rem:grid-method-GHL}
	The method of choosing basis points $Z^1, \ldots, Z^L$ can be seen as a systematic and adaptive approach of choosing basis functions $k(Z^j, \cdot)$, $j=1, \ldots, L$, in  global regression method. We note that the technique of evaluating the conditional expectation only in points on a grid $G_{f,t}$ coupled with spline-type interpolation between grid points suggested in \cite{G-HL} is motivated by similar concerns regarding explosion of computational cost.
\end{remark}
\begin{remark}\label{R:nop}
	Let us see how many operations we need to calculate $\widehat{\beta}$, taking into account that $L\ll N$. We need $O(NL)$ to calculate $K$, $O(L^2)$ to calculate $R$, $O(N L^2)$ to calculate $K^\top K$ (this is the bottleneck); $O(L^3)$ to invert   $K^\top K+N\lambda R$ and $O(NL)$ to calculate $K^\top G$ and solve \eqref{maineq}. Thus, in total we would need $O(N L^2)$ operations.
\end{remark}

\subsection{Solving the regularized McKean--Vlasov equation}

With the function $\widehat{m}_A^\lambda$ in hand, we now consider the Euler scheme for the particle system \eqref{eq:particle-systema}--\eqref{eq:particle-systemc}. We fix a time interval $[0,T]$, the number of time steps $M$, and, for simplicity, we consider a uniform time increment $\delta:=T/M$.  Let $\Delta W_i^{X,n}$ and $\Delta W_i^{Y,n}$ denote independent copies of $W^X_{(i+1) \delta} - W^X_{i\delta}$ and $W^Y_{(i+1) \delta} - W^Y_{i \delta}$, respectively, $n=1, \ldots, N$, $i=1,\ldots, M$. Note that for stochastic volatility models, the Brownian motions driving the stock price and the variance process are usually correlated.
We now define $\wt X_0^n=X_0^n$, $\wt Y_0^n=Y_0^n$, and for $i=0,\ldots, M-1,$
\begin{align}
	\widetilde{X}^n_{i+1} &= \widetilde{X}^n_i + H\left(i\delta, \widetilde{X}^n_i, \widetilde{Y}^n_i, \widehat{m}^\lambda_{A_1}(\widetilde{X}^n_i; \widetilde{\mu}^N_i) \right) \delta\nn \\	&\quad +F\left(i\delta, \widetilde{X}^n_i, \widetilde{Y}^n_i, \widehat{m}^\lambda_{A_2}(\widetilde{X}^n_i; \widetilde{\mu}^N_i) \right) \Delta W_i^{X,n}	\label{eq:Euler-mvsdea}\\
	\widetilde{Y}^n_{i+1} &= \widetilde{Y}^n_i + b(i\delta,\widetilde{Y}^n_i) \delta + \sigma(i\delta,\widetilde{Y}^n_i) \Delta W^{Y,n}_i,	\label{eq:Euler-mvsdeb}
\end{align}
where $\wt \mu_{i}^{N}=\frac{1}{N}\sum_{n=1}^{N}\delta_{(\wt X_{i}^{N,n},\wt Y_{i}^{N,n})}$. 
Thus, at each discretization time step of \eqref{eq:Euler-mvsdea}--\eqref{eq:Euler-mvsdeb} we need to compute approximations of the conditional expectations $\widehat{m}^\lambda_{A_r}(\widetilde{X}^n_i; \widetilde{\mu}^N_i)$, ${r=1,2}$. This is done using the algorithm discussed in \cref{subsec:est}, and takes $O(NL^2)$ operations, see \cref{R:nop}. Thus the total number of operations needed to implement \eqref{eq:Euler-mvsdea}--\eqref{eq:Euler-mvsdeb} is $O(MNL^2)$.

\section{Numerical examples and applications to local stochastic volatility models}
\label{sec:app}

As a main application of the regularization approach presented above, we consider the problem of calibration of stochastic volatility models to market data.
Fix time period $T>0$. To simplify the calculations, we suppose that the interest rate $r=0$. Let $C(t,K)$, $t\in[0,T]$, $K\ge0$, be the  price at time $0$ of a European  call option on a non-dividend paying stock with strike $K$ and maturity $t$. We assume that the market prices $(C(t,K))_{t\in[0,T], K\ge0}$ are given and satisfy the following conditions: $C$ is continuous and increasing in $t$, twice continuously differentiable in $x$, $\partial_{xx}C(t,x)>0$, $C(t,x)\to0$ as $x\to\infty$ for any $t\ge0$, $C(t,0)=const$. It is known \cite[Theorem~1.3 and Section~2.1]{Low08},  \cite{Dup} that under these conditions there exists a diffusion process $(S_t)_{t\in[0,T]}$ which is able to perfectly replicate the given call option
prices, that is $\E (S_t-K)^+=C(t,K)$. Furthermore, $S$ solves the following stochastic differential equation
\begin{equation}\label{processS}
	dS_{t}=\sigma_{\text{Dup}}(t,S_{t})S_{t}\,dW_{t},\quad t\in[0,T],
\end{equation}
where $W$ is a Brownian motion and  $\sigma_{\text{Dup}}$ is the Dupire local volatility given by
\begin{equation}
	\sigma_{\text{Dup}}^{2}(t,x):=\frac{2\partial_{t}C(t,x)}{x^{2}\partial
		_{xx}C(t,x)},\quad x>0,\,t\in[0,T].\label{duplv}%
\end{equation}
We study Local Stochastic Volatility (LSV) models. That is, we assume that the
stock price $X$ follows the dynamics
\begin{equation}
	dX_{t}=\sqrt{Y_{t}}\sigma_{\text{LV}}(t,X_{t})X_{t}\, dW_{t}^{X},\quad t\in[0,T],\label{LSV}%
\end{equation}
where $W^X$ is a Brownian motion and $(Y_t)_{t \in [0,T]}$ is a strictly positive
variance process, both being adapted to some filtration $\left(
\mathcal{F}_{t}\right)_{t\geq0}$. If the function $\sigma_{\text{LV}}$ is given by
\begin{equation*}
	\sigma_{\text{LV}}^{2}(t,x) \coloneqq \frac{\sigma_{\text{Dup}}^{2}(t,x)}{\E\left[
		\left.  Y_{t}\right\vert X_{t}=x\right]  },\quad x>0,\,t\in[0,T],
\end{equation*}
and $\int_0^T \E \left[ Y_{t}\sigma_{\text{LV}}(t,X_{t})^2 X_{t}^2 \right] \,dt<\infty$, then marginal distributions of $X_t$ coincide with marginal distributions of $S_t$ (\cite[Theorem~4.6]{Gyo}, \cite[Corollary~3.7]{BS2013}). Thus,
\begin{equation}
	C(T,K)=\E(X_{T}-K)_{+},\quad T,K>0.\label{priceidentity}%
\end{equation}
In particular, the choice $Y\equiv1$ recovers the  local volatility model.
In case where $Y$ is a diffusion process
\begin{equation}\label{eqY}
	dY_t= b(t,Y_t)dt+\sigma(t,Y_t) d W_t^Y,
\end{equation}
where $W^Y$ is a Brownian motion possibly correlated with $W^X$, we see that the model \eqref{LSV}-\eqref{eqY} is a special case of the general McKean-Vlasov equation  \eqref{eq:mvsde-1}--\eqref{eq:mvsde-2}.
To solve \eqref{LSV}-\eqref{eqY}, we implement the algorithm described in \cref{sec:alg}: see \eqref{eq:Euler-mvsdea}--\eqref{eq:Euler-mvsdeb} together with~\eqref{phidef}. We validate our results, by doing two different checks.
First, we verify that 1-dimensional marginal distributions $(\wt X_M)$ are close to the correct marginal distribution $\Law(X_T)=\Law(S_T)$. To do it we compare the call option prices obtained by the algorithm (that is $N^{-1} \sum_{n=1}^N(\wt X_M^n-K)^+$)  with
the given prices $C(T,K)$ for various $T>0$ and $K>0$. If the algorithm is correct and if $\wt \mu^N_M\approx \Law(X_T,Y_T)$, then, according to \eqref{priceidentity}, one must have
\begin{equation}\label{priceidentity2}
	C(T,K)\approx N^{-1} \sum_{n=1}^N(\wt X_M^n-K)^+=:\wt C(T,K).
\end{equation}
On the other hand, if the algorithm is not correct and  $\Law(X_T,Y_T)$ is very different from $\wt \mu^N_M$, then \eqref{priceidentity2} will not hold.

Second, we also control the multivariate distribution of  $(\wt X_i)_{i=0,...,M}$. Recall that for any $t\in[0,T]$ we have $\Law(X_t)=\Law(S_t)$.
We want to make sure that the dynamics of  process $\wt X$ is different from the dynamics of the local volatility process $S$. 
As a test case, we compare option values on the  quadratic variation of the  logarithm of the price. More precisely, for each $K>0$ we compare European options on quadratic variation:
\begin{align*}
	&QV_S(K):= \frac{1}{N} \sum_{n=1}^N \Bigl(\sum_{i=0}^M (\log(S_{(i+1)T/M}^n)-\log(S_{iT/M}^n))^2 - K\Bigr)^+\\
	&QV_{\wt X}(K):= \frac{1}{N} \sum_{n=1}^N \Bigl(\sum_{i=0}^M (\log(\wt X^n_{i+1})-\log(\wt X^n_{i}))^2-K\Bigr)^+
\end{align*}
and verify that these two curves are different. Here, $(S^n)_{i=1,..,N}$ is an Euler approximation of \eqref{processS}. We also check that prices of European options on quadratic variation stabilize as $N \to \infty$.

We will consider two different ways to generate market prices $C(T,K)$. First, we assume that the stock follows the Black--Scholes (BS) model, that is, we assume $\sigma_{\text{Dup}}\equiv const$ for $const=0.3$ and $S_0=1$.
Second, we consider a stochastic volatility model  for the market, that is, we set $C(T,K):=\E\left[ (\overline{S}_T-K)^+\right]$, where $\overline{S}_t,$ $t>0,$  follows the Heston model
\begin{align}
	d\overline{S}_{t}&={\sqrt {v _{t}}}\overline{S}_{t}\,dW_{t},\label{HMSa}\\
	dv_{t}&=\kappa (\theta -v _{t})\,dt+ \xi {\sqrt {v _{t}}}\,dB_t,\label{HMSb}
\end{align}
with the following parameters: $\kappa = 2.19$, $\theta = 0.17023$, $\xi = 1.04$,  and correlation $\rho = -0.83$ between the driving Brownian motions $W$ and $B$, with initial values $\overline{S}_0=1$,  $v_0=0.0045$, cf.~similar parameter choices in \cite[Table~1]{LMP}. We compute option prices based on \eqref{HMSa}--\eqref{HMSb} with the COS method, see \cite{fang2009novel}. We then calculate $\sigma_{\text{Dup}}$ from $C(T,K)$ using \eqref{duplv}.

As our baseline stochastic volatility model for $Y$, we choose a capped-from-below Heston-type model, but with different parameters than the data-generating Heston model. Specifically, we set $b(t,x)=\lambda(\mu-x)$ and $\sigma(t,x)=\eta\sqrt{x}$ in \eqref{eqY}, where $Y_0=0.0144$, $\lambda=1$, $\mu=0.0144$, and $\eta=0.5751$. We cap the solution of \eqref{eqY} from below at the level $\varepsilon_{CIR}=10^{-3}$ to avoid singularity at $0$. Numerical experiments have shown that such capping is necessary. We assume that the correlation between $W^X$ and $W^Y$ is very strong and equals $-0.9$, which makes calibration more difficult. Since the variance process has different parameters compared to the price-generating stochastic volatility model, a non-trivial local volatility function is required to match the implied volatility. Hence, even though the generating model is of the same \emph{class}, the calibration problem is still non-trivial, and involves a singular MKV SDE.

We took $\H$ to be RKHS associated with the Gaussian kernel $k$ with variance $0.1 $.
We fix the number of time steps $M=500$, $\lambda=10^{-9}$, $L=100$. At each time step of the Euler scheme we choose $(Z^j)_{j=1,\ldots, L}$ by the following rule:
\begin{equation}\label{choice}
	\text{$Z_j$ is the $j\cdot 100/(L+1)$ percentile of the sequence  $\{ \widetilde{ X}^n_m\}_{n=1, \ldots, N}$,}
	\end{equation}
	an approach comparable to the choice of the evaluation grid $G_{f,t}$ suggested in \cite{G-HL}.
	
	\Cref{f:2} compares the theoretical and the calculated prices (in terms of implied volatilities)  in the Black-Scholes (a) and Heston (b-d) settings for various strikes and maturities.
	That is, we first calculate $C(T,K)$ using the Black-Scholes model (``Black-Scholes setting'') or  \eqref{HMSa}--\eqref{HMSb} (``Heston setting''); then we calculate $\sigma^2_{\text{Dup}}$ by \eqref{duplv}; then we calculate $\wt X^n_M$, $n=1,\ldots,N,$ using the algorithm \eqref{eq:Euler-mvsdea}--\eqref{eq:Euler-mvsdeb} with $H\equiv0$, $A_2(x)=x$, and
	$$
	F(t,x,y,z):=x \sigma_{\text{Dup}}(t,x)\frac{\sqrt y}{\sqrt {z\vee\eps} },
	$$
	where $\eps=10^{-3}$, then  we calculate $\wt C(T,K)$ using \eqref{priceidentity2}; finally we transform the prices $C(T,K)$ and
	$\wt C(T,K)$  to the implied volatilities. 
	We would like to note that this additional capping of the function $F$ is less critical than the capping of the baseline process $Y$.
	
	We plot at \Cref{f:2} implied volatilities for a wide range of strikes and maturities. More precisely, we consider all strikes $K$ such that ${\P(S_T<K)}\in[0.05,0.95]$ --- this corresponds to all but very far in--the--money and out--of--the--money options.
	One can see from \Cref{f:2} that already for $N=10^3$ trajectories, identity~\eqref{priceidentity2} holds up to a small error for all the considered strikes and maturities.  This error further diminishes as the number of trajectories increases. At $N=10^5$ the true implied volatility curve and the one calculated from our approximation model become almost indistinguishable.

	\begin{figure}
\centering
\subfloat[]{\includegraphics[width=0.49\textwidth]{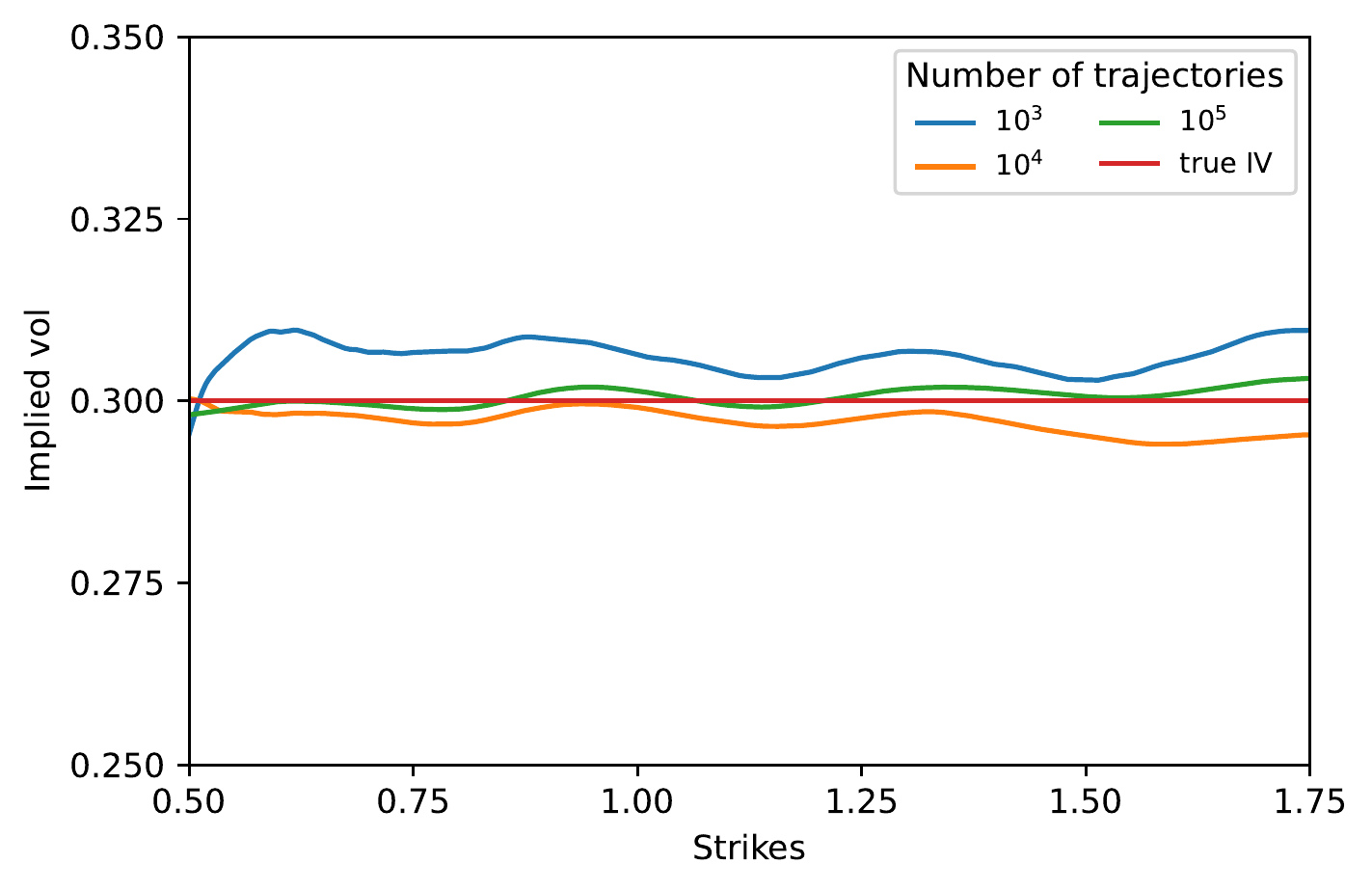}}
\hfill
\subfloat[]{\includegraphics[width=0.49\textwidth]{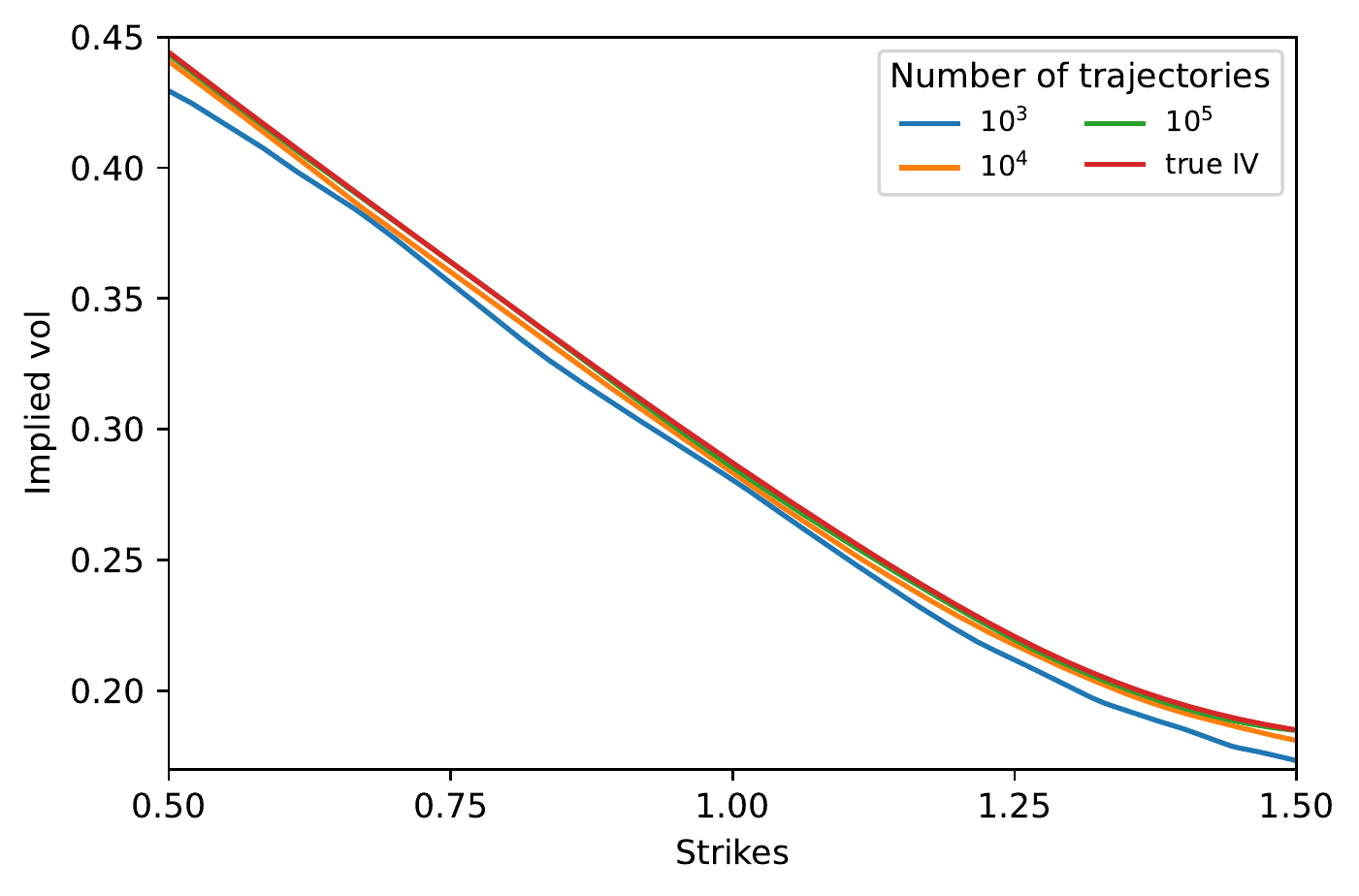}}
\vskip\baselineskip	\subfloat[]{\includegraphics[width=0.49\textwidth]{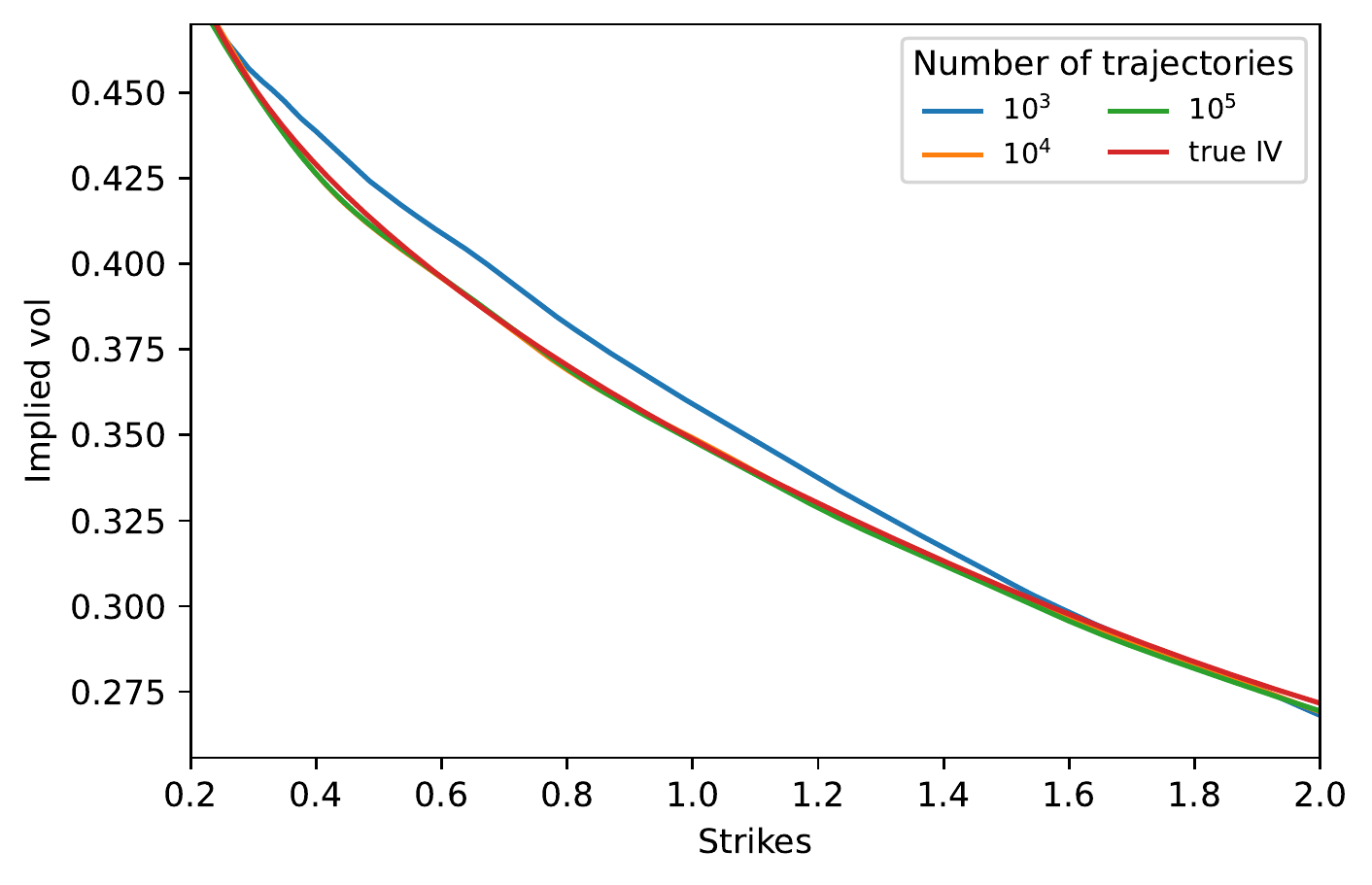}}
\hfill
\subfloat[]{\includegraphics[width=0.49\textwidth]{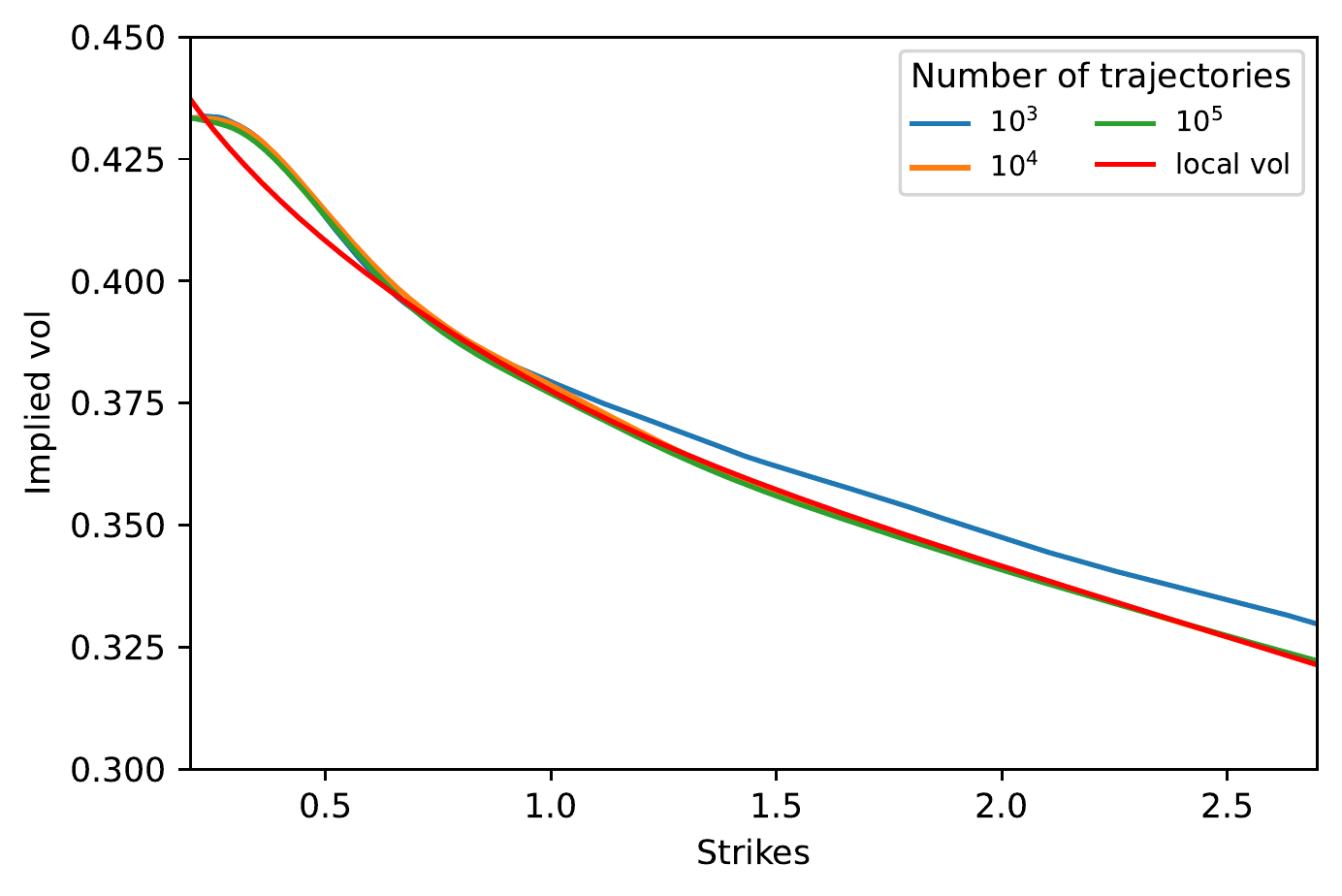}}
\caption{Fit of the smile for different number of particles.  (a): Black-Scholes setting, $T=1$ year. (b): Heston setting, $T=1$ year. (c): Heston setting, $T=4$ years. (d): Heston setting, $T=10$ years.
}\label{f:2}
\end{figure}

We plot the prices of the options on the logarithms of quadratic variation in \cref{f:QV}. It is immediate to see that in the Black-Scholes model (equation \eqref{processS} with $\sigma_{Dup}=\sigma$), we have $\langle \log S\rangle_T=\sigma^2 T$, and thus $\E[\langle \log S\rangle_T -K]_{+}=(\sigma^2 T-K)_{+}$. As shown in \cref{f:QV}(a), the prices of the options on the quadratic variation of $X$ are vastly different. This implies that despite the marginal distributions of $X$ and $S$ being identical, their dynamics are markedly dissimilar. We also see that these curves converges as the number of particles increases to infinity. This shows, that the dynamics of $(\wt X^n)$ is stable with respect to $n$. Options on the logarithms of quadratic variation for the Heston setting are presented in \cref{f:QV}(b). We see that, in this case, the dynamics of $X$ and $S$ are different, as expected and the dynamics of  $(\wt X^n)$ is also stable.

\begin{figure}
\centering
\subfloat[]{\includegraphics[width=0.49\textwidth]{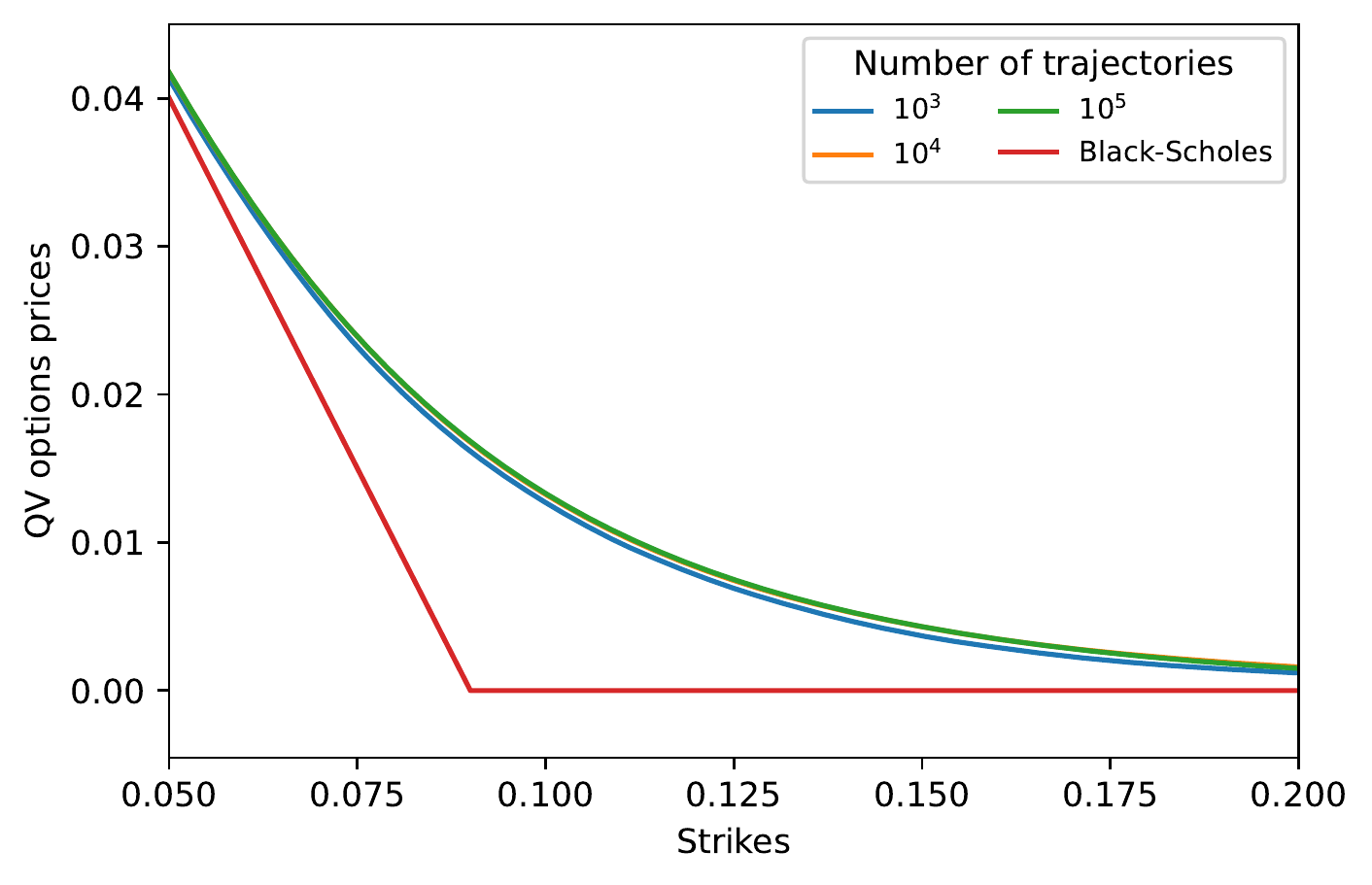}}
\hfill
\subfloat[]{\includegraphics[width=0.49\textwidth]{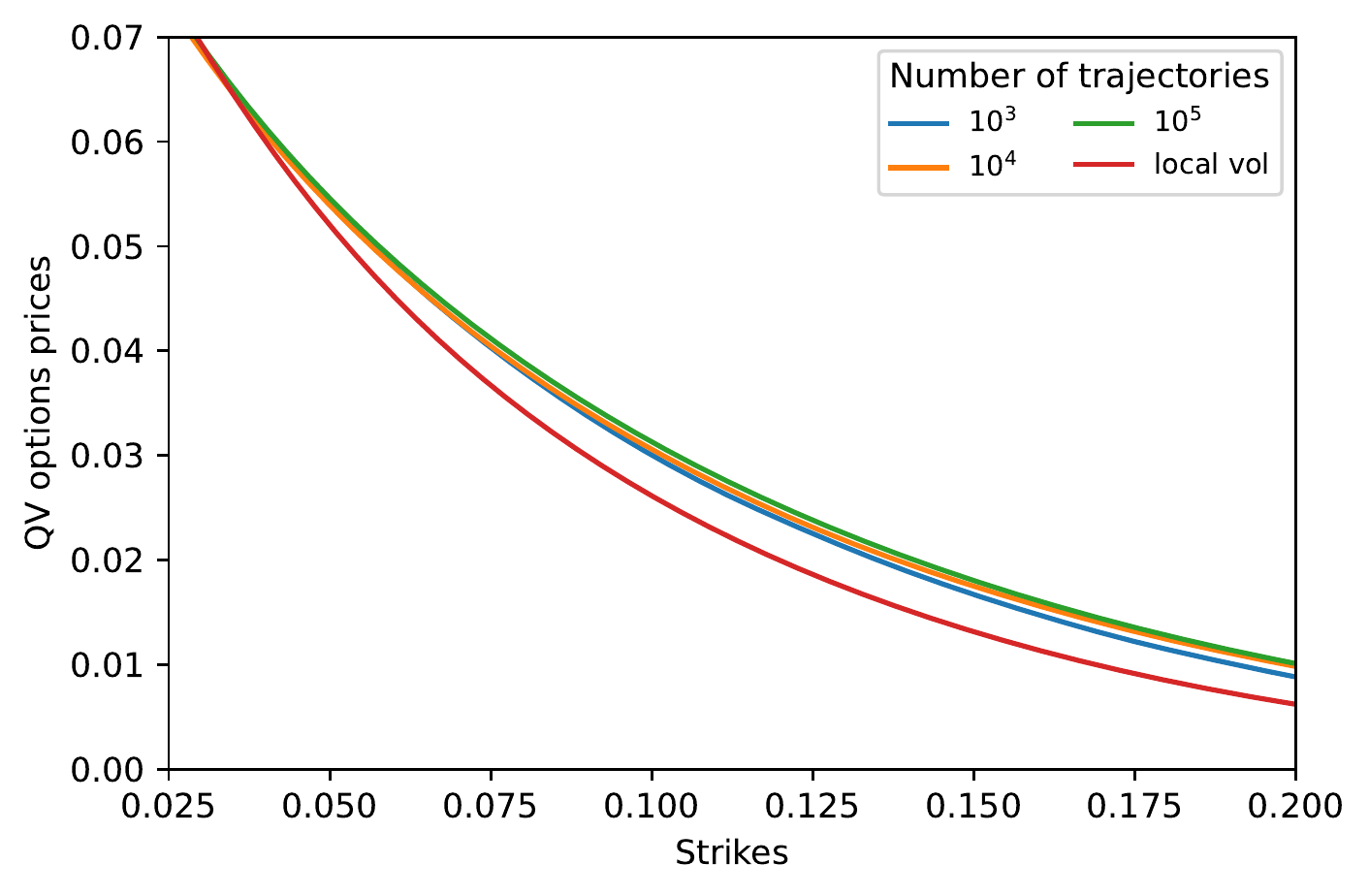}}
\caption{Prices of options on log of quadratic variation for different number of particles.  (a): Black-Scholes setting, $T=1$ year. (b): Heston setting, $T=1$ year.}\label{f:QV}
\end{figure}

It is interesting to compare our approach with the algorithm of \cite{G-HL} and \cite{guyon2011smile}. We consider a numerical setup similar to \cite[p.~10]{guyon2011smile}, taking $N=10^6$ particles to calculate implied volatilities. However, we calibrate our model and calculate the approximation of conditional expectation using only $N_1=1000$ of these particles. We compare our results in the Black-Scholes (a) and Heston (b) settings against implied volatilities calculated via the Euler method for the local volatility model $S$. \cref{f:new} shows great  agreement between the results of the two methods. 

\begin{remark}
\label{rem:runtime}
The computational time needed for running our algorithm is comparable with the algorithm of \cite{guyon2011smile}, but highly dependent on implementation details in both cases.
\end{remark}

\begin{figure}
\centering
\subfloat[]{\includegraphics[width=0.49\textwidth]{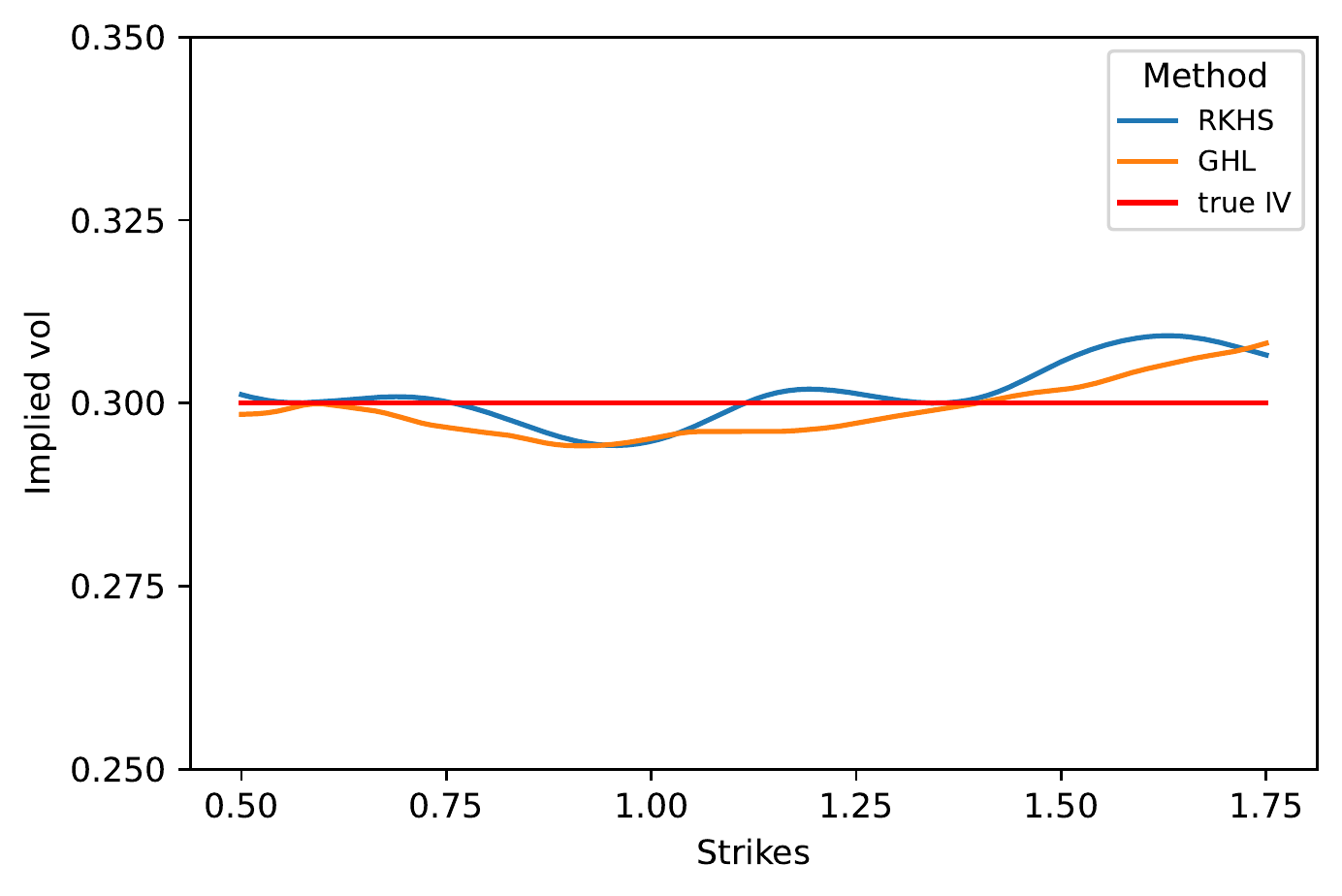}}
\hfill
\subfloat[]{\includegraphics[width=0.49\textwidth]{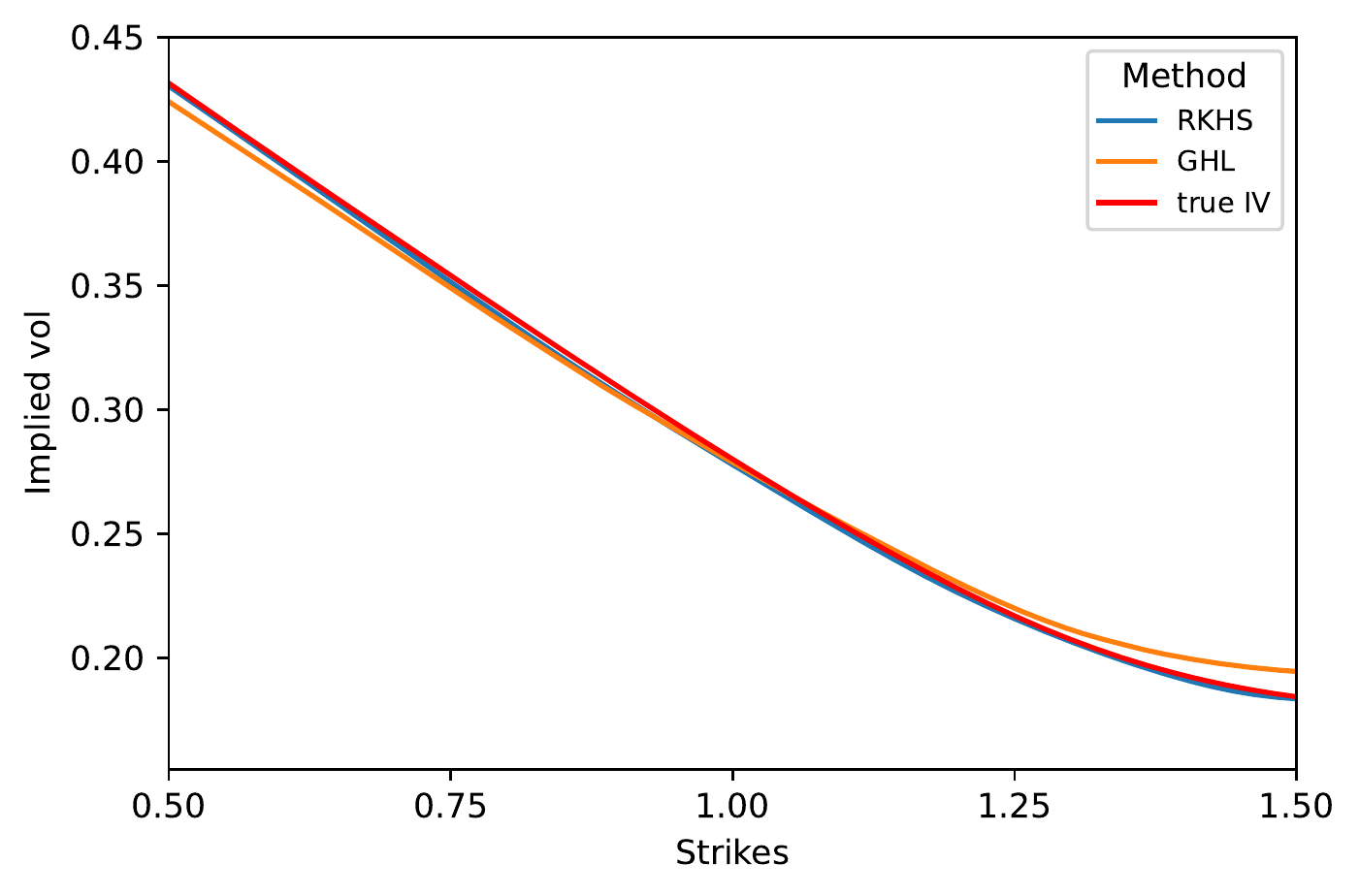}}
\caption{Comparison with \cite{G-HL}.  (a): Black-Scholes setting, $T=1$ year. (b): Heston setting, $T=1$ year.}\label{f:new}
\end{figure}

\cref{f:newd} shows that not only do the marginal distributions of $X$ calculated with our method and \cite{G-HL} agree with each other, but so do the cumulative distributions. We also observe that in both settings, the dynamics of $X$ are different from the dynamics of $S$.

\begin{figure}
\centering
\subfloat[]{\includegraphics[width=0.49\textwidth]{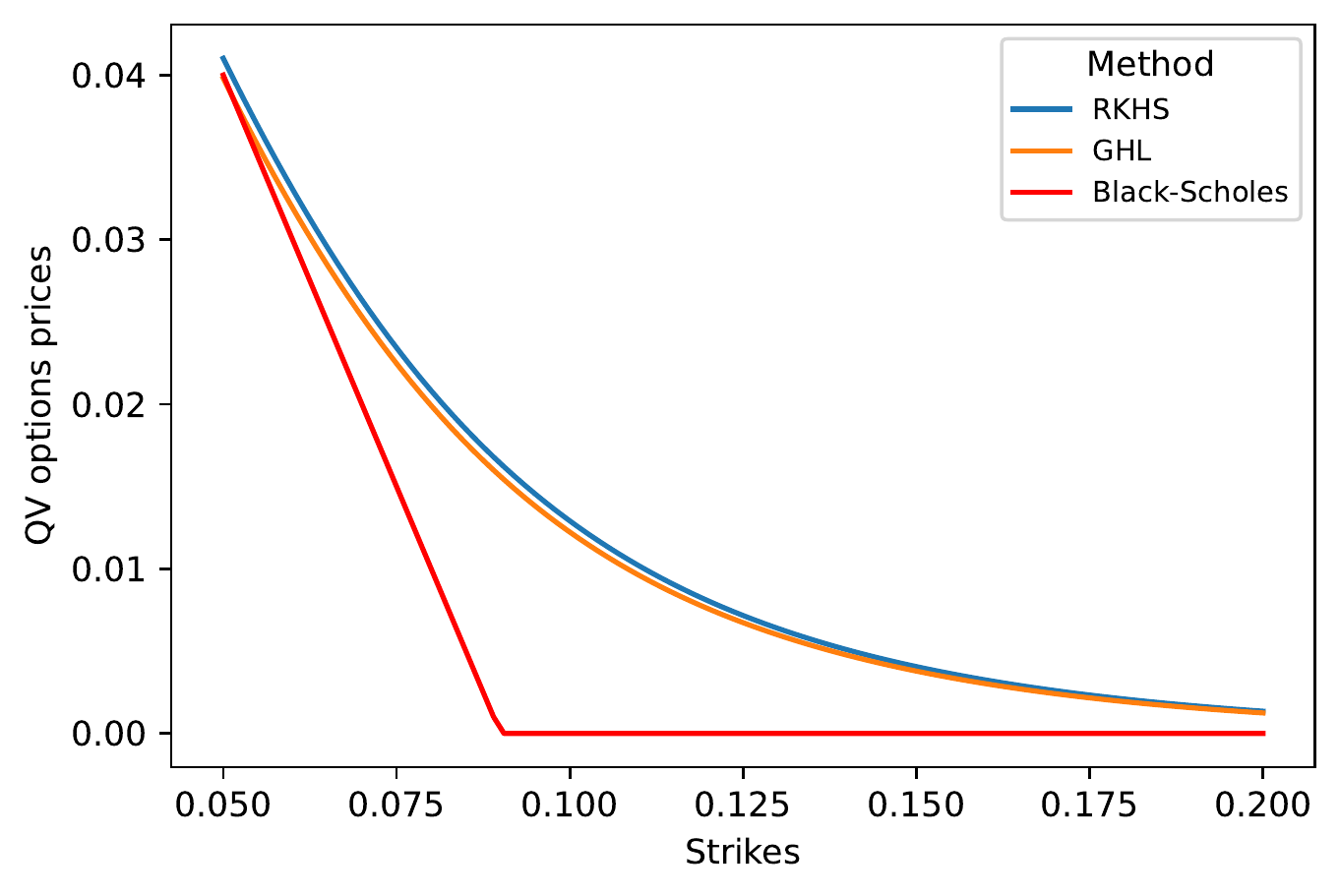}}
\hfill
\subfloat[]{\includegraphics[width=0.49\textwidth]{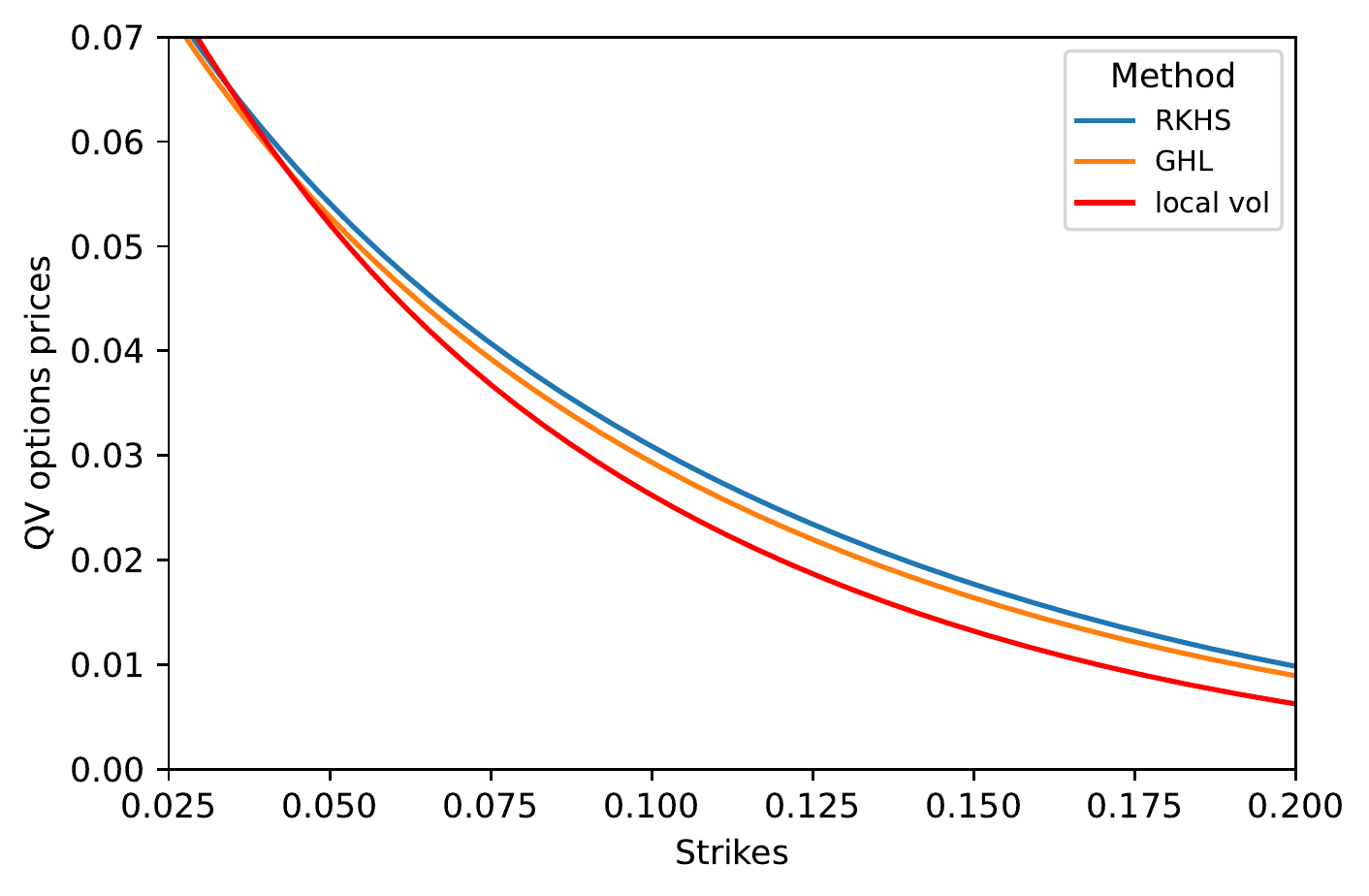}}
\caption{Comparison with \cite{G-HL}. Options on quadratic variation  (a): Black-Scholes setting, $T=1$ year. (b): Heston setting, $T=1$ year.}\label{f:newd}
\end{figure}

Now, let us discuss the stability of our model as the regularization parameter $\lambda\to0$. We studied the absolute error in the implied volatility of the 1-year ATM call option for various $\lambda\in[10^{-9},1]$ in the Black-Scholes and Heston settings described above. We used $N=10^6$ trajectories and $L=100$ $Z_j$s at each step according to \eqref{choice}, and performed $100$ repetitions at each considered value of $\lambda$. The results are presented in \Cref{f:0}.  The vertical lines in \cref{f:0}--\cref{f:3} denote the standard deviation in the absolute errors of the implied volatilities. We observe that in both settings, initially, the error drops as $\lambda$ decreases, then it stabilizes around $\lambda\approx 10^{-9}$. Therefore, for all of our calculations, we took $\lambda=10^{-9}$. It is evident that the error does not blow up as $\lambda$ becomes very small.

\begin{figure}
\centering
\subfloat[]{\includegraphics[width=0.49\textwidth]{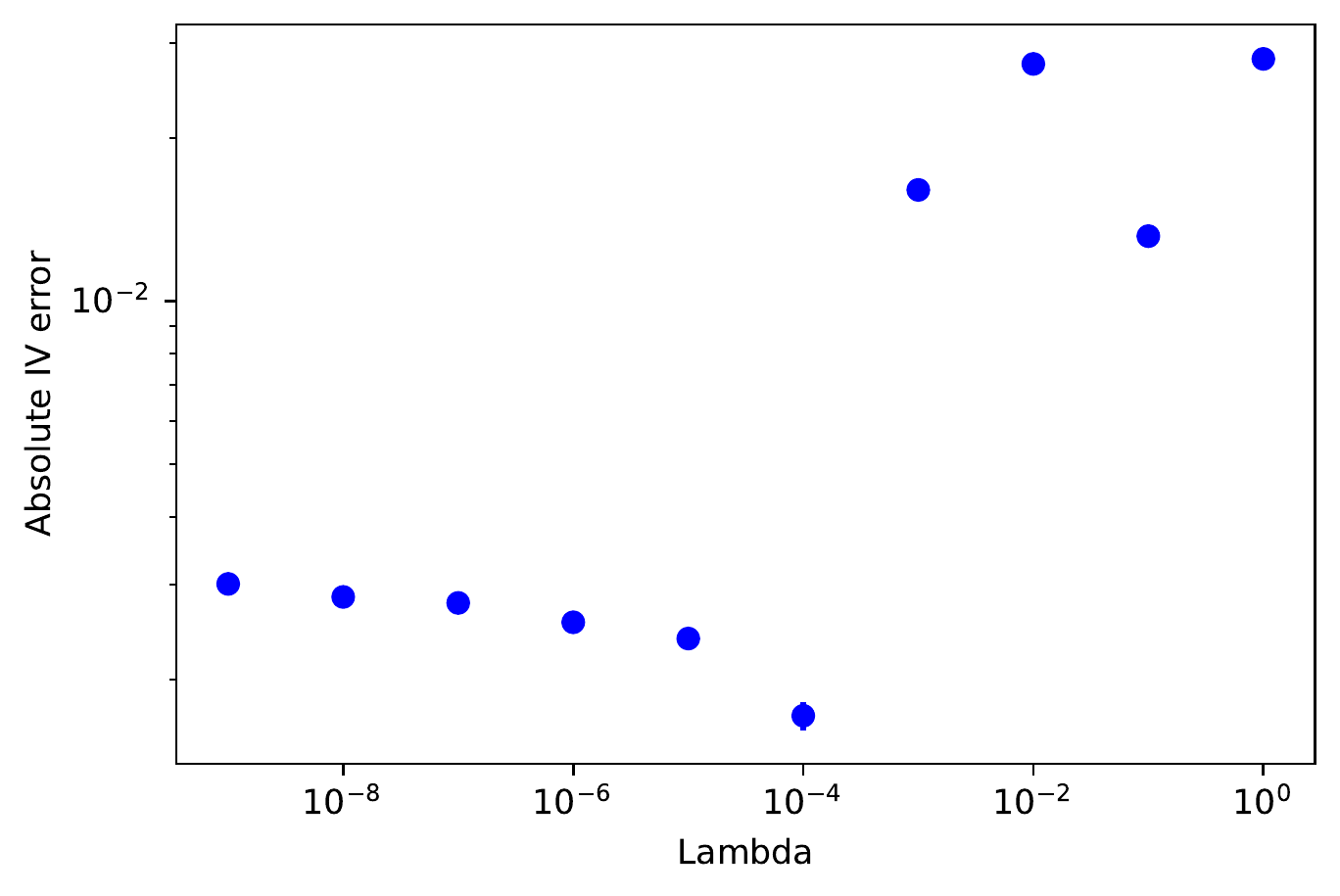}}
\hfill
\subfloat[]{\includegraphics[width=0.49\textwidth]{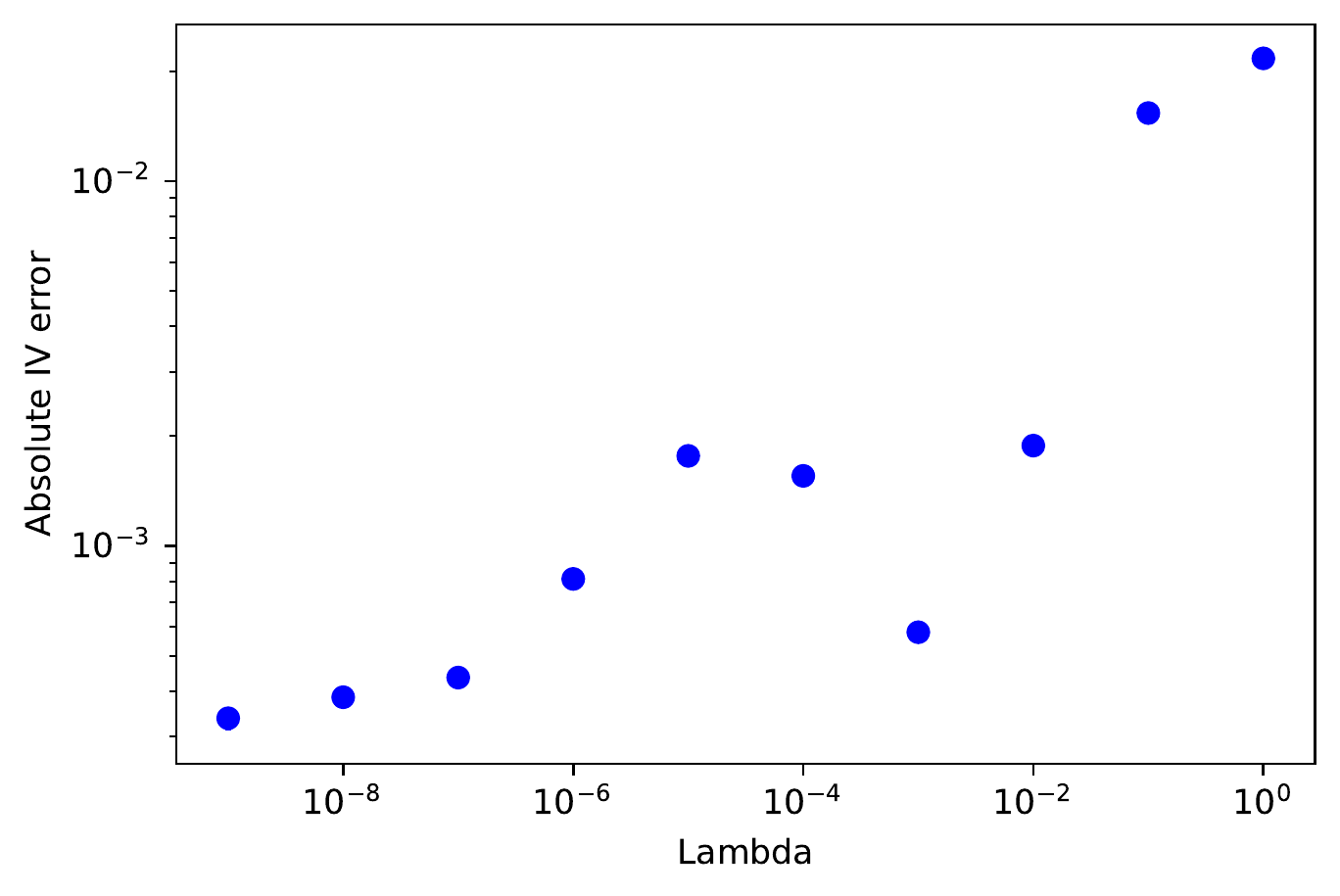}}
\caption{Mean absolute implied volatility error for different values of $\lambda$.  (a): Black-Scholes  setting. (b): Heston setting.
}\label{f:0}
\end{figure}

Let's examine how the error in call option prices in \eqref{priceidentity2} (and, therefore, the distance between the laws of the true and approximated solutions) depends on the number of trajectories $N$. Recall that it follows from \cref{T:MR2} that this error should decrease as $N^{-1/4}$ (note the square in the left-hand side of \eqref{eq:prop-chaos}). \Cref{f:1} shows how the absolute error in the implied volatility of a 1-year ATM call option decreases as the number of trajectories increases in (a) the Black-Scholes setting and (b) the Heston setting. We took $\lambda=10^{-9}$, $L=100$, $N\in[250, 2^{8}\cdot 250]$, and performed $100$ repetitions at each value of $N$. We see the error decreases as $O(N^{-1/2})$ in both settings, which is even better than predicted by theory.

\begin{figure}
\centering
\subfloat[]{\includegraphics[width=0.49\textwidth]{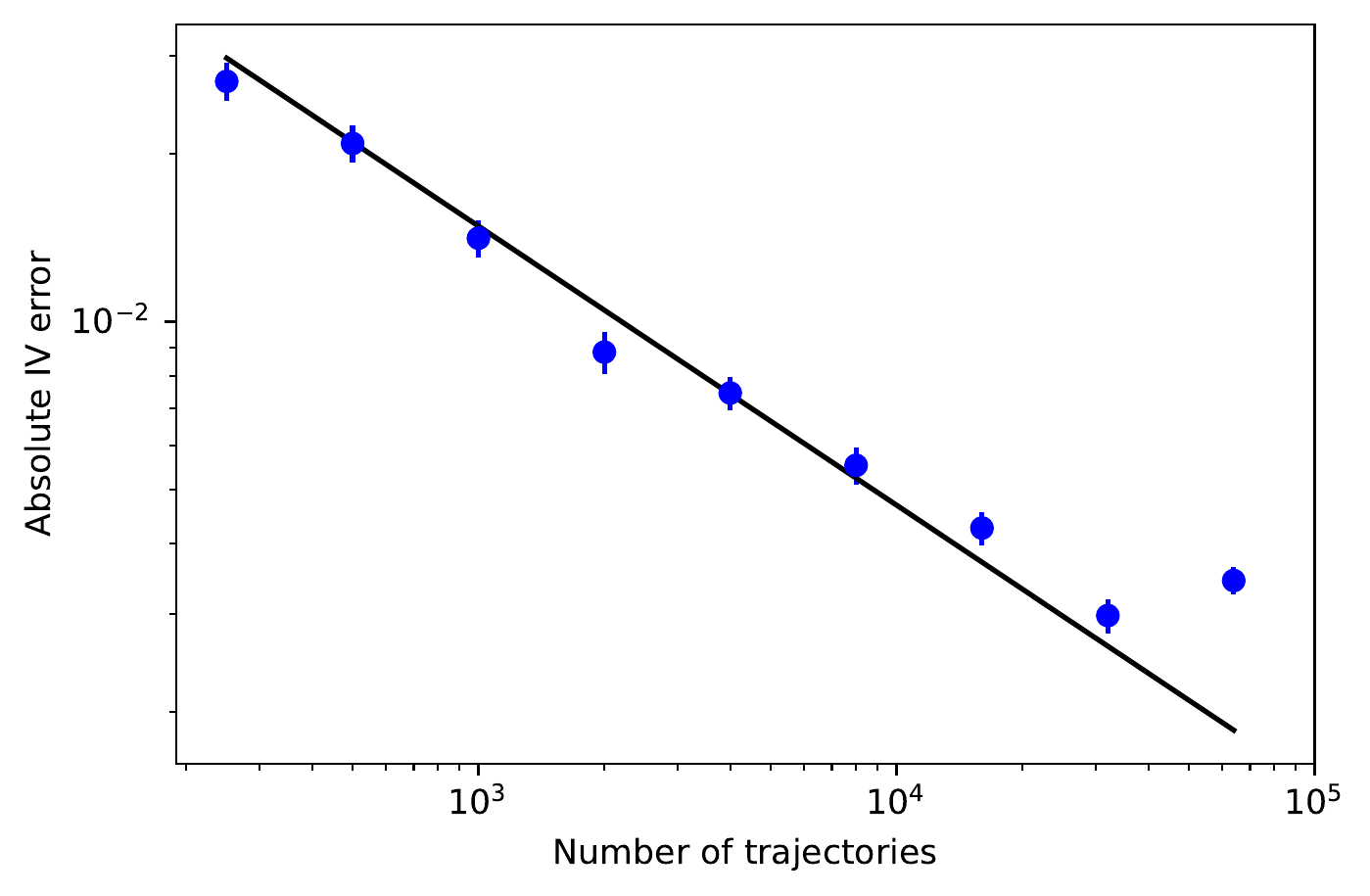}}
\hfill
\subfloat[]{\includegraphics[width=0.49\textwidth]{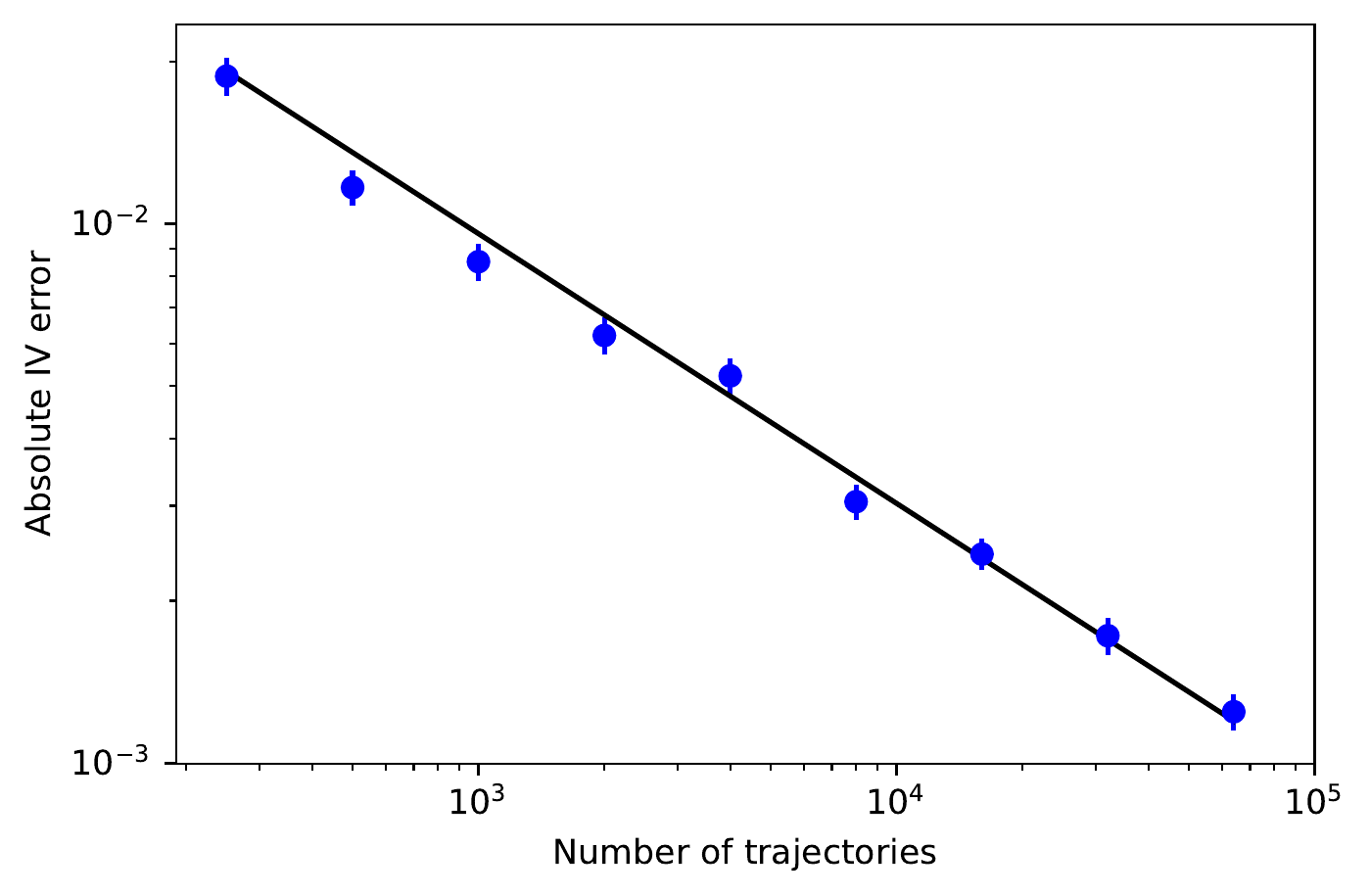}}
\caption{Mean absolute implied volatility error vs number of trajectories. The black line is the approximation: error$=C N^{-1/2}$  (a): Black-Scholes setting; $C=0.469$. (b): Heston setting; $C=0.303$.
}\label{f:1}
\end{figure}

We collect average errors in implied volatilities of 1 year European call options for different strikes in \cref{table:nonlin}. We considered the Heston setting and, as above, we used $\lambda=10^{-9}$, $L=100$, $N=10^5$.

\begin{table}[ht]
\centering
\begin{tabular}{|l| llllll |}
	\hline
	Strike $K$& 0.6& 0.8& 1& 1.2& 1.4& 1.6\\
	\hline
	$\P(S_T<K)$ & 0.0990& 0.2258& 0.4443& 0.7475&  0.9558& 0.9961\\
	True IV& 0.3999&0.3383&0.2795&0.2273&0.1927&0.1803\\
	IV error& 0.0011& 0.0018& 0.0006& 0.0032& 0.0043& 0.0011 \\
	\hline
\end{tabular}
\caption{Average error in implied volatility of 1 year options with given strike. Heston setting.}
\label{table:nonlin}
\end{table}

We also investigate the dependence of the error in the implied volatility on the number of basis functions $L$ in the representation \eqref{phidef}. Recall that since the number of operations depends on $L$ quadratically (it equals $O(MNL^2)$), it is extremely expensive to set $L$ to be large. In \Cref{f:3}, we plotted the dependence of the absolute error in the implied volatility of 1-year ATM call option on $L$. We used $N=10^6$ trajectories, $\lambda=10^{-9}$, $L\in[1\ldots100]$, and did $100$ repetitions at each value of the number of basis functions. We see that as the number of basis functions increases, the error first drops significantly but then stabilizes at $L\approx 80$.

\begin{figure}[H]
\centering
\subfloat[]{\includegraphics[width=0.49\textwidth]{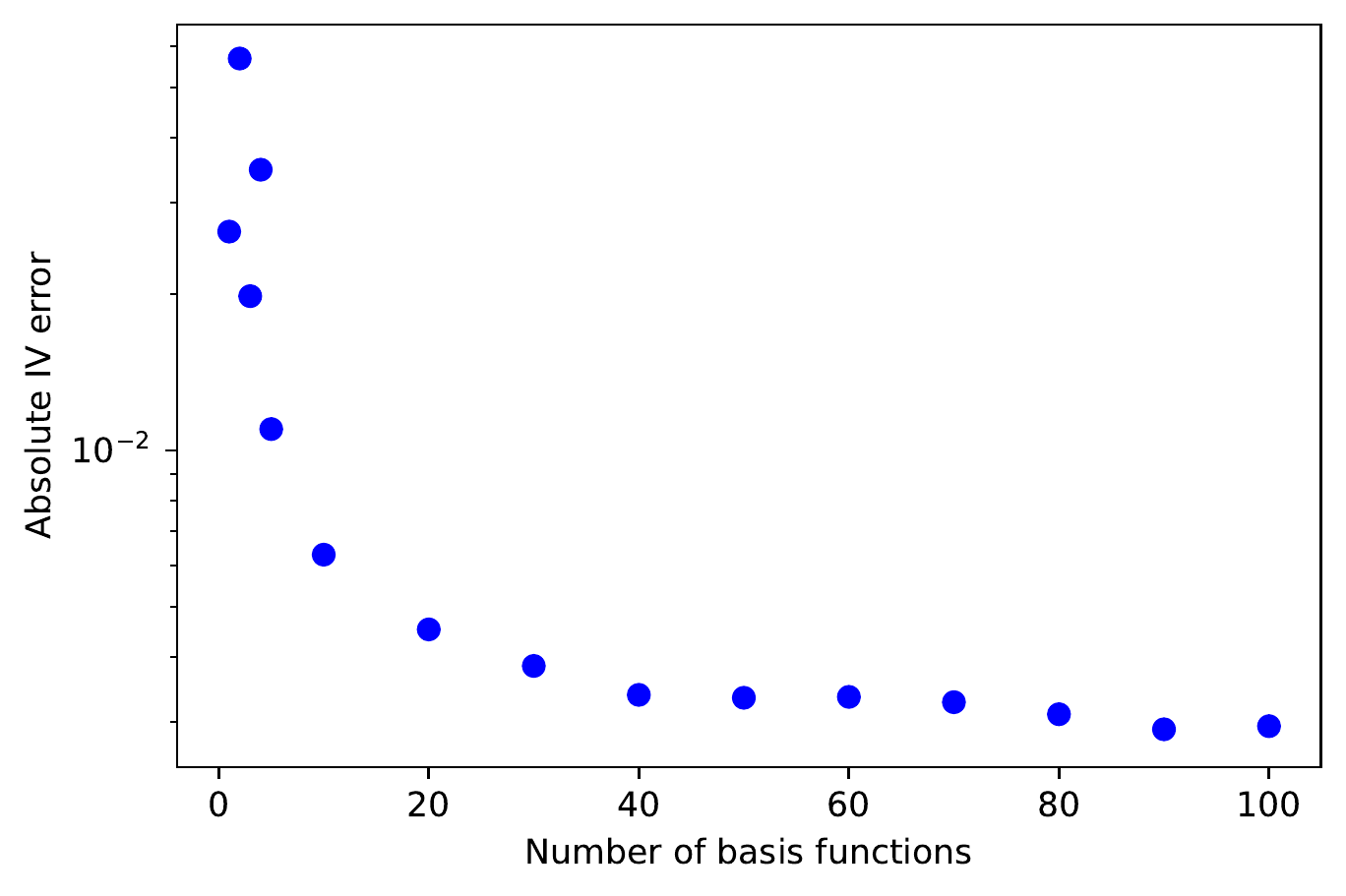}}
\hfill
\subfloat[]{\includegraphics[width=0.49\textwidth]{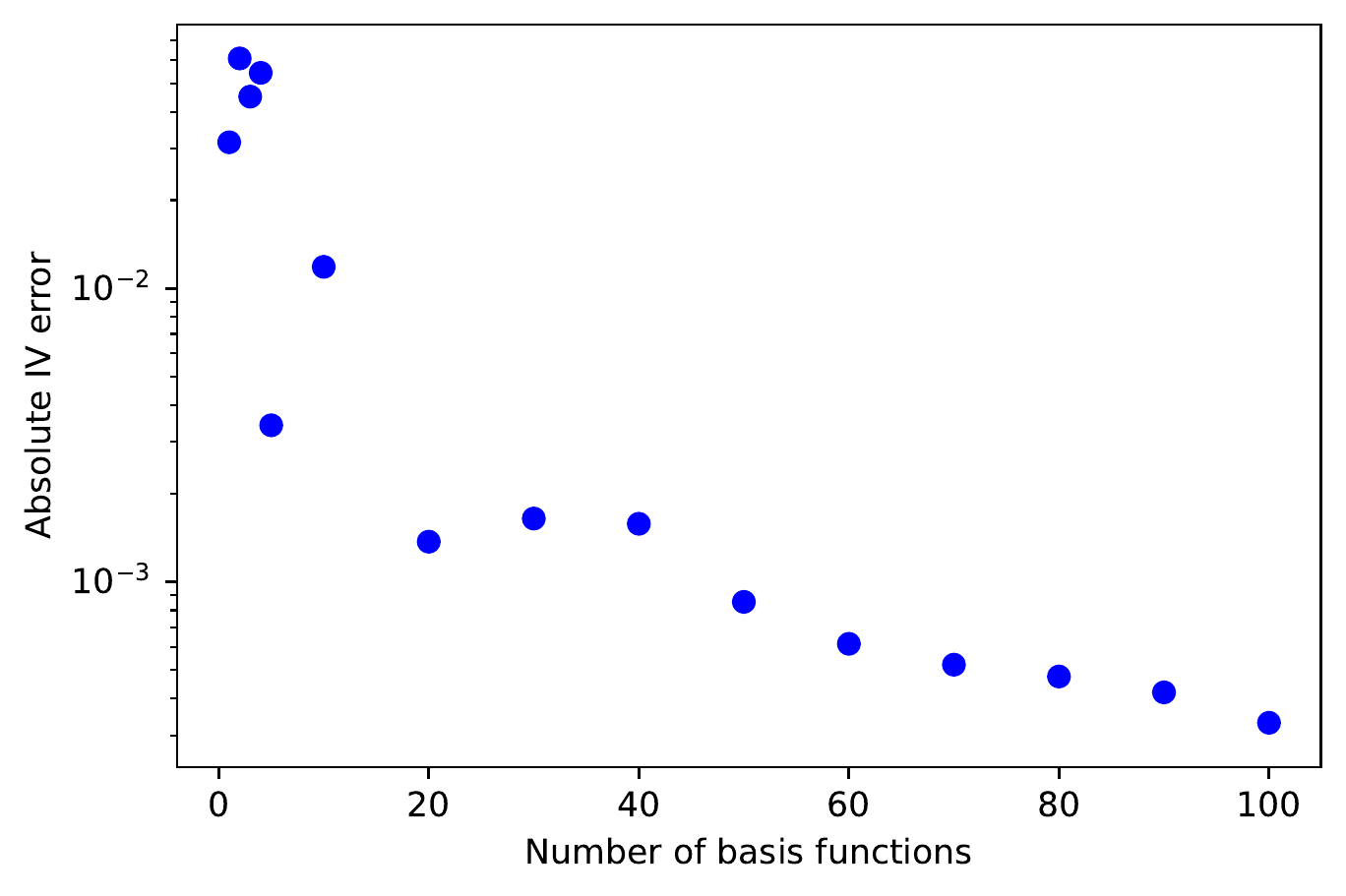}}
\caption{Mean absolute implied volatility error  vs number of basis functions.  (a): Black-Scholes  setting. (b): Heston setting.
}\label{f:3}
\end{figure}

\subsection{On the choice of $(\varepsilon,\varepsilon_{CIR})$}

We recall that there are two different truncations involved in the model. First, we cap the CIR process from below at the level of  $\eps_{CIR}=10^{-3}$. Second, in the Euler scheme \eqref{eq:Euler-mvsdea}--\eqref{eq:Euler-mvsdeb} we take as a diffusion
$$
F(t,x,y,z):=x \sigma_{\text{Dup}}(t,x)\frac{\sqrt y}{\sqrt {z\vee\eps} },
$$
with $\eps=10^{-3}$. We claim that both of this truncations are necessary.

\begin{figure}[h]
\centering
\subfloat[]{\includegraphics[width=0.49\textwidth]{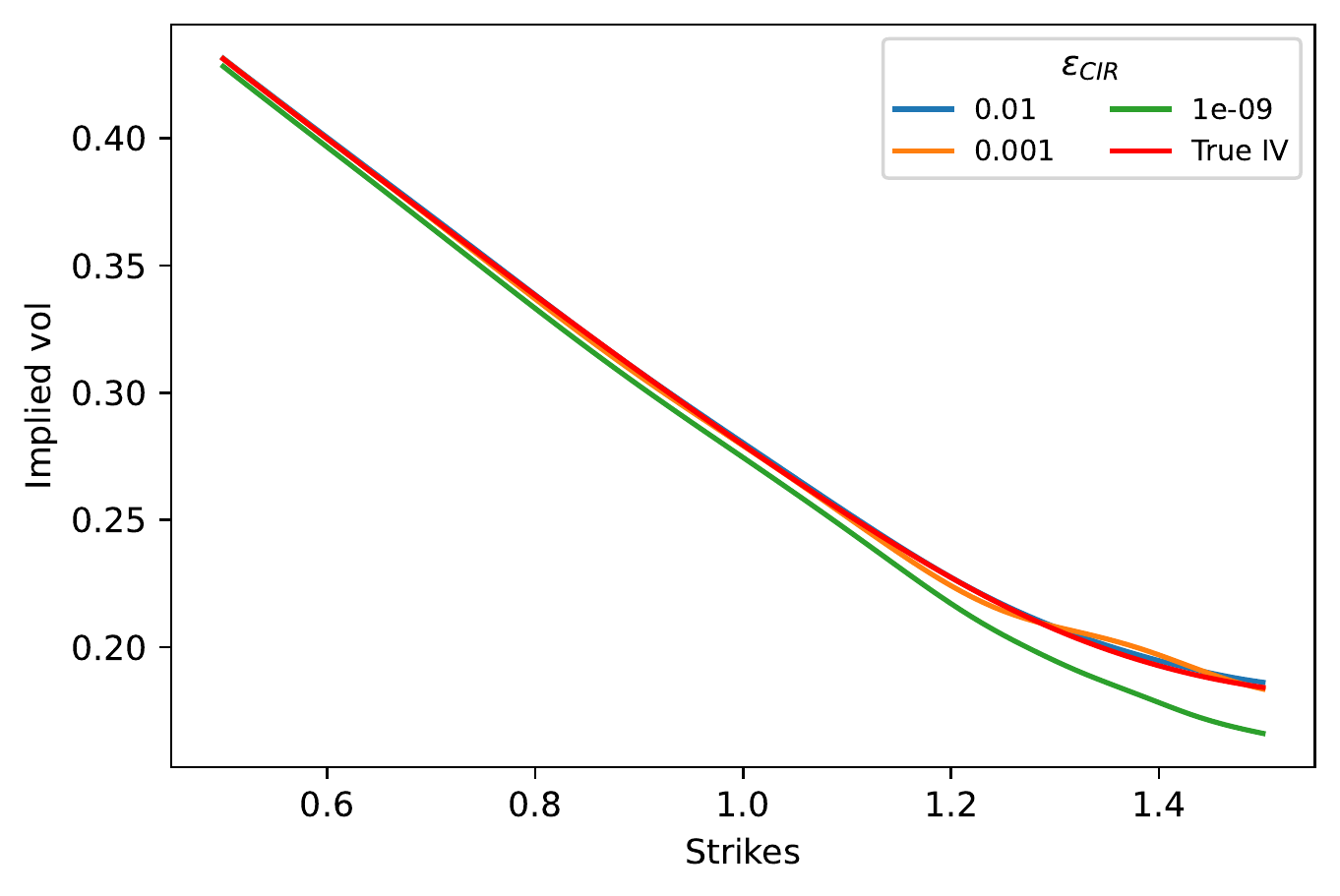}}
\hfill
\subfloat[]{\includegraphics[width=0.49\textwidth]{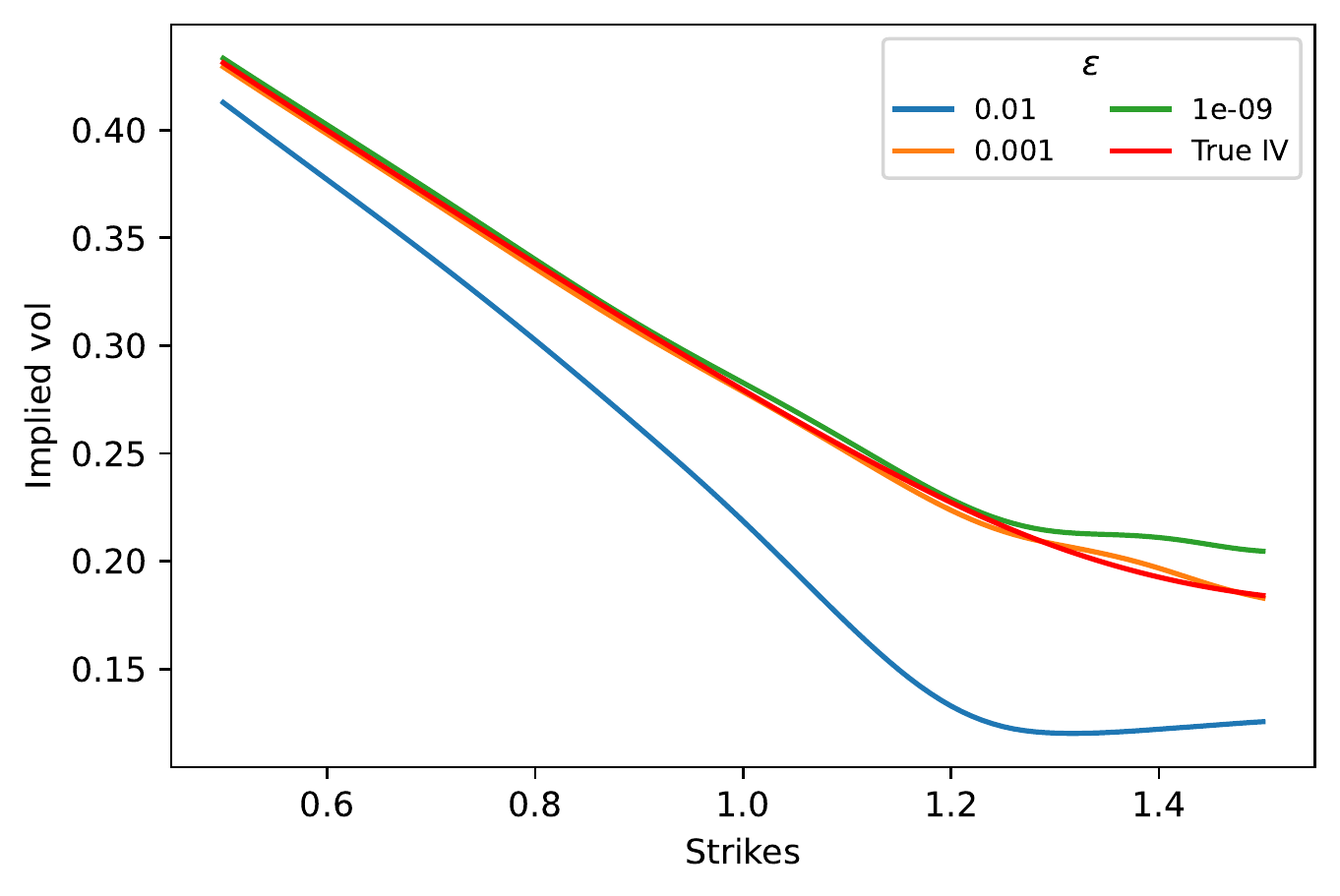}}
\hfill
\subfloat[]{\includegraphics[width=0.49\textwidth]{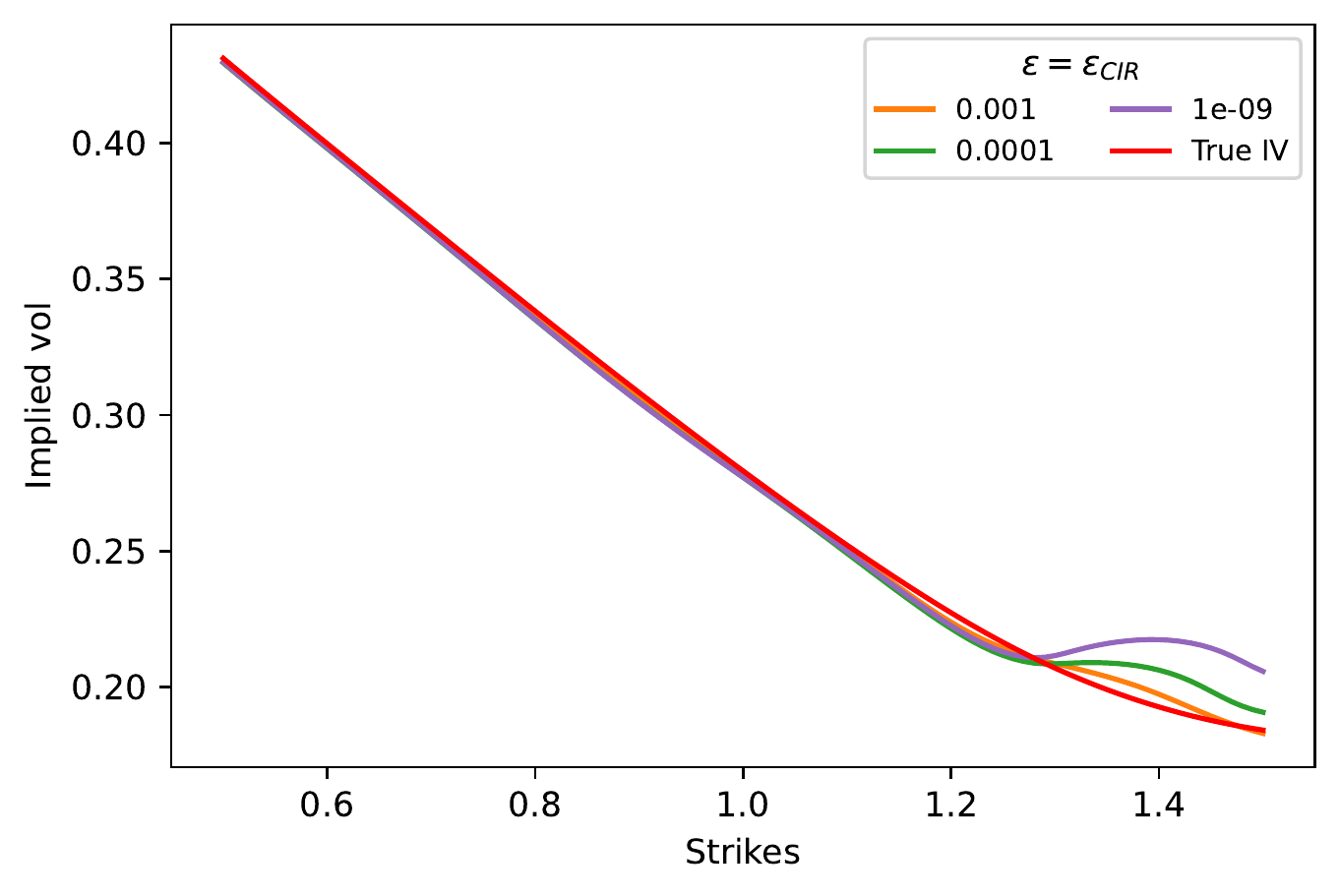}}		
\caption{Fit of the smile of 1-year call options for different truncation levels  (a): $\eps=10^{-3}$, $\eps_{CIR}$ varies; (b): $\eps_{CIR}=10^{-3}$, $\eps$ varies;
	(c): $\eps=\eps_{CIR}$ varies.
}\label{f:100}
\end{figure}
\cref{f:100} below shows fit of the smile  for 1-year European call options depending on  $\eps$ and $\eps_{CIR}$. We use the model of Section~5   with $M=500$ timesteps and $N=10^6$ trajectories. We see from these plots that if $\eps$ or $\eps_{CIR}$ are either too small or too large, the smile produced by the model may not closely match the true implied volatility curve. Therefore, a certain lower capping of the CIR process is indeed necessary.

\section{Conclusion and outlook}
\label{S:6}

In this paper, we study the problem of calibrating local stochastic volatility models via the particle approach pioneered in \cite{G-HL}. We suggest a novel RKHS based regularization method and prove that this regularization guarantees well-posedness of the underlying McKean-Vlasov  SDE and the propagation of chaos property. 
Our numerical results suggest that the proposed approach is rather efficient for the calibration of various local stochastic volatility models and can obtain similar efficiency as widely used local regression methods, see \cite{G-HL}.
There are still some questions left open here. First, it remains unclear whether the regularized McKean-Vlasov  SDE remains well-posed when the regularization parameter \(\lambda\) tends to zero. This limiting case needs a separate study. Another important issue is the choice of RKHS and the number of basis functions which ideally should be adapted to the problem at hand. This problem of adaptation is left for future research.

\section{Proofs}
\label{sec:proof}
In this section we present the proofs of the results from \cref{sec:mkv} and \cref{sec:rkhs}.
\begin{proof}[Proof of \cref{prop: apcon}]
Since  $\mathcal{H}$ is separable, let $I\subset\N$ and let
$e:=(e_{i})_{i\in I}$ be a total orthonormal system in $\mathcal{H}$ (note that $I$ is finite if $\mathcal{H}$ is finite dimensional). Define
the vector $\gamma^{\nu}\in\ell_{2}(I)$ by
\begin{align}
	\gamma^{\nu}_i:=\langle e_{i},c^{\nu}_A\rangle _{\H}  &  =\int_{\X\times\X}\big\langle
	e_{i},k(\cdot,x)\big\rangle _{\mathcal{H}}A(y)\nu(dx,dy)\notag \\
	&  =\int_{\X\times\X} e_{i}(x)A(y)\nu(dx,dy),\quad i\in I.\label{gdef}
\end{align}
Since the operator $\mathcal{C}^{\nu}$ is bounded it may
be described by the (possibly infinite) symmetric matrix%
\begin{equation}
	B^{\nu}:=\big(  \langle e_{i},\mathcal{C}^{\nu}e_{j}\rangle_{\H}
	\big)  _{(i,j)\in I\times I}=\Big(  \int_\X e_{i}(x)e_{j}(x)\,\nu
	(dx,\mathcal{X})\Big)  _{(i,j)\in I\times I}, \label{Bdef}%
\end{equation}
which acts as a bounded positive semi-definite operator on $\ell_{2}(I)$. Denote
\begin{equation}
	\beta^{\nu}=(B^{\nu}+\lambda I)^{-1}\gamma^{\nu}. \label{betanu}
\end{equation}
For $f\in \H$ write $f=\sum_{i\in I} \beta_i e_i$. Then, recalling \eqref{gdef} and \eqref{Bdef}, we derive
\begin{align*}
	&\argmin_{f\in\H}\bigl\{  \int_{\X\times\X}|A(y)-f(x)|^2\,\nu(dx,dy)+\lambda
	\| f\|_{\H}^2\bigr\}\\
	&\quad=\argmin_{\beta\in\ell_2(I)}\bigl\{  \int_{\X\times\X}|A(y)-\sum_{i\in I} \beta_i e_i|^2\,\nu(dx,dy)+\lambda
	\|\beta\|_{\ell_2(I)}^2\bigr\}\\
	&\quad=\argmin_{\beta\in\ell_2(I)}\bigl\{ -2 \langle \beta,\gamma^\nu\rangle_{\ell_2(I)}+
	\langle \beta,(B^{\nu}+\lambda I) \beta\rangle_{\ell_2(I)} \bigr\} \\
	&\quad=\argmin_{\beta\in\ell_2(I)}\bigl\{  \langle  \beta-\beta^\nu,(B^{\nu}+\lambda I)  (\beta-\beta^\nu)\rangle_{\ell_2(I)}  \bigr\}\\
	&\quad= \beta^\nu,
\end{align*}
where  we inserted definition \eqref{betanu} and used the fact that $B^{\nu}+\lambda I$ is strictly positive definite for $\lambda>0$.
To complete the proof it remains to note that
\begin{equation*}
	\sum_{i=1}^\infty \beta^\nu_i e_i=(\mathcal{C}^{\nu}+\lambda I_{\H})^{-1}c^\nu_A,
\end{equation*}
which shows \eqref{identm}.
\end{proof}

\begin{proof}[Proof of \cref{lip}]
Let us write
\begin{align}\label{step0}
	|m_A^{\lambda}(x;\mu)-m_A^{\lambda}(y;\nu)| &
	\le | m_A^{\lambda}(x;\mu)-m_A^{\lambda}(x;\nu)|
	+| m_A^{\lambda}(x;\nu)-m_A^{\lambda}(y;\nu)|\nn
	\\
	&  =I_{1}+I_{2}.
\end{align}
Working with respect to the orthonormal basis introduced in the proof of \cref{prop: apcon}, see
\eqref{betanu}, we derive for the first term in \eqref{step0}
\begin{align}\label{Ione}
	I_{1} &  =|\langle k(x,\cdot),m^{\lambda}_A(\cdot;\mu)-m^{\lambda}_A(\cdot;\nu)\rangle_{\H}|\nn\\
	&\le \|k(x,\cdot)\|_{\H}\|m^{\lambda}_A(\cdot;\mu)-m^{\lambda}_A(\cdot;\nu)\|_{\H}\nn\\
	&  \leq\sqrt{k(x,x)}\| \beta^{\mu}-\beta^{\nu}\|_{\ell_{2}(I)}\nn\\
	&  \le D_k \| \beta^{\mu}-\beta^{\nu}\|_{\ell_{2}(I)}
\end{align}
where we used \eqref{knorm} and Assumption~\ref{AKnew}.

Denote $Q^\nu:=B^{\nu}+\lambda I$ and $Q^{\mu}:= B^{\mu}+\lambda I$. Recalling that they  are bounded $\ell_{2}(I)\to\ell_{2}(I)$ operators with bounded
inverses, it easy to see that
\begin{equation*}
	\| (Q^{\mu})^{-1}-(Q^{\nu})^{-1}\|
	_{\ell_{2}(I)}\leq\| (Q^{\mu})^{-1}\|_{\ell_{2}(I)}\| (Q^{\nu})^{-1}\|_{\ell_{2}(I)}\|Q^\mu-Q^{\nu}\|_{\ell_{2}(I)}.
\end{equation*}
Therefore
\begin{align}
	\| \beta^{\mu}-\beta^{\nu}\|_{\ell_{2}(I)}  &
	=\|(Q^{\mu})^{-1}\gamma^{\mu}-(Q^{\nu})^{-1}\gamma^{\nu}\|_{\ell_{2}(I)}\nonumber\\
	&  \le\big\|\big(  (Q^{\mu})^{-1}-(Q^{\nu})^{-1}\big)  \gamma^{\mu}\big\|
	_{\ell_{2}(I)}+\big\| (Q^{\nu})^{-1}(  \gamma^{\mu}-\gamma^{\nu})  \big\|_{\ell_{2}(I)}\nonumber\\
	&  \leq\big\| (Q^{\mu})^{-1}\big\| _{\ell_{2}(I)}\big\| (Q^{\nu})^{-1}\big\|_{\ell_{2}(I)}\big\| Q^\mu-Q^{\nu}\big\| _{\ell_{2}(I)}\big\Vert \gamma^{\mu}\big\| _{\ell_{2}(I)}\nn\\
	&\quad+\big\| (Q^{\nu})^{-1}\big\Vert _{\ell_{2}(I)}\big\Vert \gamma^{\mu}-\gamma^{\nu}\big\Vert _{\ell_{2}(I)}\nonumber\\
	&  \leq\frac{1}{\lambda^{2}}\big\| B^{\mu}-B^{\nu}\big\|_{\ell_{2}(I)}\big\Vert \gamma^{\mu}\big\Vert _{\ell_{2}(I)}+\frac{1}{\lambda
	}\big\| \gamma^{\mu}-\gamma^{\nu}\big\|_{\ell_{2}(I)}.
	\label{no}%
\end{align}
Now observe that for any $i,j\in I$
\begin{align*}
	(  B_{ij}^{\mu}-B_{ij}^{\nu})  ^{2}  &  =\Big(  \int_{\X}
	e_{i}(x)e_{j}(x)\big(  \mu(dx,\mathcal{X})-\nu(dx,\mathcal{X}%
	)\big)  \Big)  ^{2}\\
	&  =\int_{\X}\int_{\X} e_{i}(x)e_{j}(x)e_{i}(y)e_{j}(y)\\
	&  \quad\times\big(  \mu(dx,\mathcal{X})-\nu(dx,\mathcal{X})\big)
	\big(  \mu(dy,\mathcal{X})-\nu(dy,\mathcal{X})\big)  .
\end{align*}
Hence, by using the identity
\begin{equation}\label{sumbasis}
	\sum_{i\in I}e_{i}(x)e_{i}(y)=\sum_{i\in I}\big\langle k(x,\cdot),e_{i}
	\big\rangle_{\H} \big\langle k(y,\cdot),e_{i}\big\rangle_{\H} =\big\langle  k(x,\cdot)
	, k(y,\cdot)\big\rangle _{\mathcal{H}}=k(x,y),
\end{equation}
we get
\begin{align}\label{Bdiff}
	\big\|B^{\mu}-B^{\nu}\big\|_{\ell_{2}(I)}^{2}&\le\big\| B^{\mu}-B^{\nu}\big\|_{HS}^{2}\nn\\
	&=\int_{\X}\big(
	\mu(dx,\mathcal{X})-\nu(dx,\mathcal{X})\big)  \int_{\X} k^{2}
	(x,y)\big(  \mu(dy,\mathcal{X})-\nu(dy,\mathcal{X})\big).
\end{align}
By the  Kantorovich-Rubinstein duality formula (\cite{villani2021topics}),  for every  $h:\mathcal{X}\rightarrow$ $\mathbb{R}$ with \(h\in C^1(\mathcal{X})\)  one has
\begin{align*}
	\Big| \int_{\X} h(x)\big(  \mu(dx,\mathcal{X})-\nu (dx,\mathcal{X}%
	)\big)  \Big|  &  =\Big| \int_{\X\times\X} h(x)\big(  \mu(dx,dy)-\nu
	(dx,dy)\big)  \Big| \\
	&  \leq\sup_{x\in\mathcal{X}}\big\vert \partial_x h(x)\big\vert \W_{1}(\mu
	,\nu),
\end{align*}
where $\partial_x$ denotes gradient with respect to $x.$ So we continue \eqref{Bdiff} in the following way:
\begin{equation}\label{Bdiff2}
	\|B^{\mu}-B^{\nu}\|_{\ell_2(I)}^{2}\leq \W_{1}(\mu
	,\nu)\sup_{x\in\mathcal{X}}\Big| \int_{\X}\partial_{x}k^{2}%
	(x,y)\big(  \mu(dy,\mathcal{X})-\nu(dy,\mathcal{X})\big)
	\Big|,
\end{equation}
and for each particular $x\in\mathcal{X}$ we have similarly
\begin{align*}
	&\Bigl|\int_{\X}\partial_{x}k^{2}(x,y)\big(\mu(dy,\mathcal{X})-\nu
	(dy,\mathcal{X})\big)\Bigr|\\
	&\qquad \leq\sum_{i=1}^{d}\Bigl|\int_{\X}\partial_{x_{i}}k^{2}(x,y)\big(\mu
	(dy,\mathcal{X})-\nu(dy,\mathcal{X})\big)\Bigr|\\
	&\qquad \leq\sum_{i=1}^{d}\sup_{y\in\mathcal{X}}|\partial_{y}\partial
	_{x_{i}}k^{2}(x,y)|\W_{1}(\mu,\nu)\\
	&\qquad\leq d^{2}D_{k}^{2}\W_{1}(\mu,\nu),
\end{align*}
where the last inequality follows from by Assumption~\ref{AKnew}. Combining this with \eqref{Bdiff2}, we deduce
\begin{equation}
	\|B^{\mu}-B^{\nu}\|_{\ell_2(I)} \leq  D_{k}\W_{1}(\mu,\nu)d. \label{nb}%
\end{equation}
By a similar argument, using \eqref{sumbasis}, we derive
\begin{align}
	&\big\| \gamma^{\mu}-\gamma^{\nu}\big\|_{\ell_{2}(I)}^{2}\nn\\
	&\quad\le \sum_{i\in I} \int_{\X\times \X}\int_{\X\times \X} e_i(x)e_i(x')A(y)A(y')(\mu-\nu)(dx,dy)(\mu-\nu)(dx',dy')\nn\\
	&\quad\le  \int_{\X\times \X}\int_{\X\times \X} k(x,x')A(y)A(y')(\mu-\nu)(dx,dy)(\mu-\nu)(dx',dy')\nn\\
	&\quad
	\leq d^2\W_{1}^{2}(\mu, \nu)\| A\|_{\C^1}^{2}D_{k}^{2} \label{ng},
\end{align}
where again Assumption~\ref{AKnew} was used. Next note that
\begin{align}
	\| \gamma^{\mu}\|_{\ell_{2}(I)}^{2}   &=\int_{\X\times \X}\int_{\X\times \X} k(x,x')A(y)A(y')\mu(dx,dy)\mu(dx',dy')\nonumber\\
	&  \leq\int_{\X\times \X}\int_{\X\times \X}\big| A(y)\big| \sqrt{k(x,x)}\big|
	A(y')\big| \sqrt{k(x',x')}\mu
	(dx,dy)\mu(dx',dy')\nonumber\\
	&  =\Bigl(  \int_{\X\times \X}\big| A(y)\big| \sqrt{k(x,x)}\mu(dx,dy)\Bigr)^{2}\nonumber\\
	&  \leq\int_{\X\times \X}\big| A(y)\big| ^{2}\mu(dx,dy)\int_{\X\times \X} k(x,x)\mu(dx,dy)\nn\\
	&\leq
	D_{k}^{2}\| A\|_{\C^1}^{2} \label{bga}%
\end{align}
due to Assumption~\ref{AKnew}. Substituting now \eqref{nb}, \eqref{ng}, and \eqref{bga} into \eqref{no} and then into \eqref{Ione}, we finally get
\begin{equation}\label{I1finbound}
	I_{1}   \le  (\lambda^{-1}D_k+1)\lambda^{-1}D_k^2\W_{1}(\mu,\nu)d\|A\|_{\C^1}
\end{equation}

Now let us bound $I_2$ in \eqref{step0}. We clearly have
\begin{equation}\label{Itwo}
	I_{2}    =| \langle k(x,\cdot)-k(y,\cdot),m_A^\lambda(\cdot;\nu)\rangle|
	\le \| k(x,\cdot)-k(y,\cdot)\|_{\H}\|m_A^\lambda(\cdot;\nu)\|_{\H}
\end{equation}
Note that
\begin{align*}
	&\| k(x,\cdot)-k(y,\cdot)\|_{\H}^2\\
	&\quad =\langle k(x,\cdot)-k(y,\cdot),k(x,\cdot)-k(y,\cdot)\rangle_{\H}\\
	&\quad=k(x,x)-k(x,y)-(k(y,x)-k(y,y))\\
	&\quad =\Bigl( \int_0^1 \d_2 k(x,x+\xi(y-x))\,d\xi\Bigr)^\top(x-y)-\Bigl( \int_0^1 \d_2 k(y,x+\xi(y-x))\,d\xi\Bigr)^\top(x-y)\\
	&\quad =(x-y)^{\top}\Bigl( \int_0^1\int_0^1 \d_1\d_2 k(x+\eta(y-x),x+\xi(y-x))\,d\xi d\eta\Bigr)^\top(x-y),
\end{align*}
with $\partial_{1},\partial_{2}$ denoting the vector of derivatives of $k$ with respect
to the first and second argument, respectively. Recalling Assumption~\ref{AKnew}, we derive
\begin{equation}
	\| k(x,\cdot)-k(y,\cdot)\|_{\H}^2  \leq dD_{k}^{2}\left\vert x-y\right\vert ^{2}. \label{Eknew}
\end{equation}
Further, using \eqref{bga}, we see that
\begin{align*}
	\|m_A^\lambda(\cdot;\nu)\|_{\H}&=\|\beta^\nu\|_{\ell_2(I)}\le \|(B^\nu+\lambda I)^{-1}\|_{\ell_2(I)}\|\gamma^\nu\|_{\ell_2(I)}\le \lambda^{-1}D_k\|A\|_{\C^1}.
\end{align*}
Combining this with \eqref{Eknew} and substituting into \eqref{Itwo}, we get
\begin{equation*}
	I_{2}  \le \sqrt d \lambda^{-1}D_k^2 \|A\|_{\C^1}|x-y|.
\end{equation*}
This, together with \eqref{I1finbound} and \eqref{step0}, finally yields
\begin{equation*}
	|m_A^{\lambda}(x;\mu)-m_A^{\lambda}(y;\nu)| \le C_1 \W_{1}(\mu,\nu) +C_2 |x-y|,
\end{equation*}
where $C_1=(\lambda^{-1}D_k+1)\lambda^{-1}D_k^2d\|A\|_{\C^1}$ and $C_2=\sqrt d \lambda^{-1}D_k^2 \|A\|_{\C^1}$. This completes the proof of the theorem.
\end{proof}

Now we are ready to prove the main results of \cref{sec:mkv}. They would follow from \cref{lip} obtained above.

\begin{proof}[Proof of \cref{prop:exist-reg}]
It follows from \cref{lip},  and the assumptions of the theorem, and the fact that $\W_1$-metric can be bounded from above by the $\W_2$-metric, that the drift and diffusion of \eqref{newsyst1}--\eqref{newsyst3} are Lipschitz and satisfy the conditions of \cite[Theorem~4.21]{CaDe1}. Hence it has a unique strong solution.
\end{proof}

\begin{proof}[Proof of \cref{T:MR2}]
We see that \cref{lip} and the conditions of the theorem implies that all the assumptions of  \cite[Theorem~2.12]{CaDe2} hold
(note that the total state dimension is $2d$ in our case). This implies \eqref{eq:prop-chaos}.
\end{proof}

\begin{proof}[Proof of \cref{L2c}]
Consider the operator $\C^\nu$ in  the
orthonormal basis $(\widetilde{a}_{n})_{n\in J}$ of $\mathcal{H}$. Put
\begin{align*}
	D^\nu:=(  \langle \wt a_{i},\mathcal{C}^{\nu}\wt a_{j}\rangle_{\H}
	\big) _{(i,j)\in J\times J}=(  \langle \wt a_{i},T^{\nu}\wt a_{j}\rangle_{\H}
	\big) _{(i,j)\in J\times J}=(\sigma_j \delta_{ij})_{(i,j)\in J\times J},
\end{align*}
since $\wt a_j$ is an eigenvector of $T^\nu$ with eigenvalue $\sigma_j$. Since $\C^\nu$  is diagonal in this basis, we see  that for $\lambda>0$ one has for $i \in J$
\begin{equation}\label{inverse}
	(\mathcal{C}^{\nu}+\lambda I_{\mathcal{H}})^{-1} \wt a_i=(\sigma_i+\lambda)^{-1}\wt a_i.
\end{equation}

Consider also the function $c^\nu_A$ in this basis. We write for $i\in J$ similar to \eqref{gdef}
\begin{equation*}
	\eta^{\nu}_i:=\langle c^\nu_A,\wt a_i\rangle_\H=\int_{\X\times\X} \wt a_i(x)A(y)\nu(dx,dy),\quad i\in I
\end{equation*}
and we clearly have  $c^{\nu}_A=\sum_{i\in J}\eta_i^\nu \widetilde{a}_{i}$.
Then, using \cref{prop: apcon} and \eqref{inverse} we derive for $\lambda>0$
\begin{align}\label{mlambdanewbas}
	m^{\lambda}_A(\cdot;\nu)&=(\mathcal{C}^{\nu}+\lambda I_{\mathcal{H}})^{-1}c^{\nu}_A
	=\sum_{i\in J}\eta_i^\nu (\mathcal{C}^{\nu}+\lambda I_{\mathcal{H}})^{-1}\widetilde{a}_{i}\nn\\
	&=\sum_{i\in J}\eta_i^\nu (\sigma_i+\lambda)^{-1}\wt a_i.
\end{align}
Next, since $m_A\in\L_2^\nu$, we have
\begin{equation}\label{div}
	P_{\overline{\H}} m_A    =\sum_{i\in J}\left\langle \E_{(X,Y)\sim\nu}\left[
	A(Y)|X=\cdot\right]  ,a_i\right\rangle_{\L_2^\nu}a_i.
\end{equation}
Further, for $i\in J$ we deduce
\begin{align*}
	\bigl\langle \E_{(X,Y)\sim\nu}[A(Y)|X=\cdot] ,a_i\bigr\rangle_{\L_2^\nu}&=
	\int_\X \E_{(X,Y)\sim\nu}\left[
	A(Y)|X=x\right]  a_{i}(x)\nu(dx,\X)\\
	&=\E_{(X,Y)\sim\nu} (a_i(X)\E [A(Y)|X])\\
	&=\E_{(X,Y)\sim\nu} a_i(X) A(Y)\\
	&=\sigma_i^{-1/2}\eta_i^\nu,
\end{align*}
where we used that $\wt a_n=\sqrt{\sigma_n}a_n$. Substituting this into \eqref{div} and combining with \eqref{mlambdanewbas}, we  get
\begin{equation*}
	P_{\overline{\H}} m_A -m_{A}^{\lambda}=\sum_{i\in J} (\eta_i^\nu\sigma_i^{-1}- \eta_i^\nu(\sigma_i+\lambda)^{-1})\wt a_i=
	\sum_{i\in J} \eta_i^\nu\frac{\lambda}{\sigma_i(\sigma_{i}+\lambda)}\wt a_i.
\end{equation*}
Thus
\begin{equation*}
	\bigl\|P_{\overline{\H}} m_A -m_{A}^{\lambda}\bigl\|_{\L_2^\nu}^2=
	\sum_{i\in J} (\eta_i^\nu)^2\frac{\lambda^2}{\sigma_i(\sigma_{i}+\lambda)^2}=
	\sum_{i\in J} \langle m_A ,a_i\rangle_{\L_2^\nu}^2\frac{\lambda^2}{(\sigma_{i}+\lambda)^2},
\end{equation*}
which is \eqref{conver}.
Similarly, recalling \eqref{scpr}, we get
\begin{equation*}
	\bigl\|P_{\overline{\H}} m_A -m_{A}^{\lambda}\bigl\|_{\H}^2=
	\sum_{i\in J} (\eta_i^\nu)^2\frac{\lambda^2}{\sigma_i^2(\sigma_{i}+\lambda)^2}=
	\sum_{i\in J} \langle m_A ,a_i\rangle_{\L_2^\nu}^2\frac{\lambda^2}{\sigma_i(\sigma_{i}+\lambda)^2},
\end{equation*}
which is finite whenever $P_{\overline{\H}} m_A \in \H$, that is, $\sum_{i\in J} \langle m_A ,a_i\rangle_{\L_2^\nu}^2\sigma_i^{-1}<\infty$. This shows \eqref{conver1}. It is easily seen by dominated convergence that the l.h.s. of \eqref{conver} goes to zero, and, in the case $P_{\overline{\H}} m_A \in \H$  the l.h.s. of \eqref{conver1} goes to zero as well.
\end{proof}


\bibliographystyle{alpha}      
\bibliography{GHL}   

\end{document}
